\documentclass[11pt]{article}

\usepackage[a4paper]{geometry}
\usepackage{amsmath,amssymb,amsthm}
\usepackage{mathtools}
\usepackage{bm,bbm}
\usepackage{graphicx}
\usepackage{hyperref}
\usepackage[authoryear,round]{natbib}
\usepackage{xcolor}
\usepackage{marginnote}
\usepackage{enumitem}
\usepackage{etoolbox}

\hypersetup{
  colorlinks=true,
  linkcolor=blue!50!black,
  citecolor=blue!50!black,
  urlcolor=blue!50!black
}

\theoremstyle{plain}
\theoremstyle{definition}

  \newtheorem{claimx}{Claim}

  \newtheorem{definition}{Definition}

  \newtheorem{lemma}{Lemma}
  \newtheorem{observation}{Observation}
  \newtheorem{proposition}{Proposition}
  \newtheorem{remark}{Remark}
  \theoremstyle{remark}

\newcommand{\R}{\mathbb{R}}
\newcommand{\X}{X}
\newcommand{\thH}{\theta_H}
\newcommand{\thL}{\theta_L}
\newcommand{\CFBH}{C^{FB}_H}
\newcommand{\CFBL}{C^{FB}_L}
\newcommand{\CaL}{C^{\alpha}_L}
\newcommand{\CbL}{C^{\beta}_L}
\newcommand{\muA}{\mu_{\alpha}}
\newcommand{\muB}{\mu_{\beta}}
\newcommand{\dirac}{\delta} 
\DeclareMathOperator*{\argmax}{arg\,max}

\title{Contracting for Information:\\ Heterogeneous Costs and Investment Opportunities}
\author{Han Wang\thanks{\href{mailto:han.wang1@ucr.edu}{han.wang1@ucr.edu}; Department of Economics, University of California, Riverside\\I am grateful to Yaron Azrieli, Paul J. Healy, James Peck, Ellen Muir, Teck Yong Tan, Mark Whitmeyer, Jo\~{a}o Thereze, Xiaoyu Cheng, Siyang Xiong, Kun Zhang and participants at the OSU Theory/Experimental Reading Group, the 2025 UNC Theory Workshop, the 2026 Southwest Economic Conference, the Spring 2026 Midwest International Trade \& Theory Conference for their helpful comments and suggestions.}}

\date{\today}

\begin{document}
\maketitle

\begin{abstract}
A principal faces a decision problem under uncertainty and can contract with a researcher to provide relevant information. The cost of acquiring information is only known to the researcher, and, moreover, by privately making an investment the researcher can reduce their expected cost. We characterize optimal contracts by reducing the problem to information design with cost constraints. The principal induces investment below a cutoff in its sunk cost. With two cost types, investment may be underprovided but is never excessive relative to the first best. With more types, the cutoff property survives but overinvestment can occur. Investment incentives also reshape information acquisition: optimal experiments may need to generate more distinct beliefs than there are possible states, and their design can be sensitive to the principal’s initial beliefs.
\end{abstract}

\newpage
\section{Introduction}\label{sec:intro}

Governments, firms, and other organizations rely on researchers to generate information for decisions under uncertainty. The cost of acquiring this information depends both on the researcher's private circumstances and on investments made before those circumstances are known. For example, researchers may develop methods, acquire equipment, or build data infrastructure before learning how costly a particular study will be. Such investments can lower subsequent research costs, but may be difficult for the principal to observe. This creates an interaction between moral hazard and adverse selection: the researcher privately chooses an investment that shapes future costs and subsequently learns the realized cost.

This paper studies contracting for information when hidden investment shapes the researcher's cost distribution. A principal faces a decision problem under uncertainty and hires a researcher to acquire relevant information. The researcher can design experiments that determine both how much to learn and what to learn about. The principal first commits to a \emph{methods-based contract}: a menu of experiments and associated payments. Payments can depend on the experiment but not on investment. The researcher then privately decides whether to invest before learning his cost type. Investment increases his probability of being efficient. The principal designs the menu to screen the researcher's cost type and provide investment incentives. Her objective is to maximize her expected decision payoff net of payments.

We begin with two cost types and characterize the optimal contract by comparing the principal's payoffs from inducing and deterring investment. Each program reduces to an information-design problem, possibly subject to a cost constraint, allowing us to draw on tools from that literature. The principal induces investment when doing so yields the higher payoff.

The optimal contract retains familiar screening features: the efficient type receives first-best information, while the inefficient type acquires no more than the first-best amount (Proposition~\ref{prop:dist}). Investment incentives can, however, eliminate the inefficient type's distortion, so both types receive first-best information for some parameter values. Investment is induced when its sunk cost is at or below a cutoff (Proposition~\ref{prop:cutoff}). This cutoff never exceeds the first-best threshold and may be strictly lower (Proposition~\ref{prop:under}). Underinvestment arises because the principal does not capture the full surplus gain from investment, even under full commitment.

Investment incentives also affect the structure of information acquisition. First, they can make more posteriors than states necessary in every optimal low-type experiment (Section~\ref{sec:info}). Second, they can alter how optimal support responds to the prior. At a fixed effective multiplier, an unconstrained optimum can retain its support as the prior varies within an affine segment of the concave envelope. Under a fixed positive cost cap, however, no support pair can remain optimal throughout an open interval of priors on which the cap binds and the optimum remains two-point (Section~\ref{sec:compstat}).

With finitely many cost types (Section~\ref{sec:ntype}), the principal's screening value remains non-decreasing under FOSD shifts toward more efficient types, and investment retains a cutoff structure. Overinvestment, however, becomes possible. In a three-type example, the principal strictly prefers to induce investment even when its cost exceeds the first-best gain in expected surplus (Proposition~\ref{prop:overinvestment}). The two-type assumption therefore has a substantive economic implication: it rules out excessive investment.

\paragraph{Related literature.} The paper is closest to the literature on contracting for information acquisition: \citet{osband1989optimal}, \citet{zermeno2011principal}, \citet{rappoport2017incentivizing}, \citet{carroll2019robust}, \citet{azrieli2021monitoring}, \citet{clark2021contracts}, \citet{azrieli2022delegated}, \citet{hafner2022young}, \citet{whitmeyer2022buying}, \citet{yoder2022designing}, \citet{li2023incentivizing}, \citet{min2023screening}, \citet{sharma2025procuring}, and \citet{wittbrodt2025delegating}. We assume throughout that the experiment itself is contractible, with payments conditioned on the experiment the researcher runs. Because this is favorable to the principal, our characterization can be read as an upper bound on what she can achieve under weaker contracting technologies: payments contingent on the realized state \citep{whitmeyer2022buying,li2023incentivizing,sharma2025procuring}, on the experiment's outcome or realized posterior \citep{rappoport2017incentivizing,yoder2022designing,wang2025contracting}, or on the cross-reports of several researchers through peer monitoring \citep{azrieli2021monitoring,azrieli2022delegated}. A recurring concern there is that the researcher may overstate his cost to earn higher pay. \citet{yoder2022designing} and \citet{wang2025contracting} study the screening problem when this cost is private. We extend this line by \emph{endogenizing} the cost type: the researcher makes a hidden, costly, upfront investment that shifts his type toward the efficient one.

This situates the paper within the literature on screening with endogenous types. \citet{gonzalez2004investment} studies unobserved cost-reducing investment undertaken before contracting. \citet{liu2022sequential} study sequential-screening procurement: an agent with initial private information invests under a committed contract, improving the distribution of a subsequently realized production cost. \citet{gershkov2021theory} likewise place commitment before investment, with privately informed bidders choosing actions that affect their valuations. In our model, the researcher has no private cost information before investment: the principal commits to a contract, investment shifts the distribution of the researcher's cost type, and that type is then realized. Our focus is on how these investment incentives interact with contracting over experiments. The resulting restrictions on information acquisition connect the screening problem to constrained information design \citep{letreust2019persuasion,azrieli2021constrained,doval2024constrained}.

Because the principal commits to the contract before investment is sunk, the underinvestment we find reflects a rent-sharing wedge rather than the ex-post hold-up familiar from the incomplete-contracts literature \citep{grossman1986costs,tirole1986procurement,hart1990property,segal2016property}. More recently, \citet{nguyen2019information} study \emph{information control} in the hold-up problem, where the strategic lever is the design of what a party learns. In our setting, information is instead the object the principal procures, and the resulting wedge persists under full commitment. \citet{dworczak2024mechanism} provide a mechanism-design treatment of property rights, where the allocation of control determines who is exposed to hold-up. Our model holds the allocation of control fixed and studies how screening rents and the allocation of experiments jointly shape investment incentives under commitment.

Our comparative statics with respect to the type distribution connect to \citet{maskin1984monopoly} and also relate to work on how the principal's information about the agent shapes screening. \citet{asseyer2025information} ranks signals that garble a fixed prior into posteriors, using the Blackwell order and, under a regularity condition, a hazard-rate-spread order, and traces their often non-monotone welfare effects. In contrast, we study a \emph{directed} shift from one prior to another and introduce an endogenous investment margin. Finally, our prior-sensitivity result connects to the rational-inattention notion of locally invariant posteriors \citep{caplin2022rationally}.

Section~\ref{sec:model} presents the model, and Section~\ref{sec:analysis} characterizes the optimal contract and its properties. Section~\ref{sec:example} uses a worked example to examine lumpy information acquisition, multi-posterior experiments, and prior sensitivity. Section~\ref{sec:ntype} extends the analysis to finitely many cost types, and Section~\ref{sec:conclude} concludes. The appendix contains the remaining proofs.

\section{Model Setting}\label{sec:model}

\paragraph{Overview.} We consider a model with two players: a principal (she) and a researcher (he). The principal is faced with a Bayesian decision problem and can hire the researcher to learn about an unknown state of the world. Suppose that the state of the world can take values in a finite set $\Omega$, with generic element $\omega$. The decision problem is a triple $(A,u,p_0)$, including a finite set of actions $A$, a utility function $u:A\times\Omega\to\R$, and a prior belief over the state $p_0\in\Delta(\Omega)$. We let $v(p):=\max_{a\in A}\sum_{\omega\in\Omega}p(\omega)\,u(a,\omega)$ be the highest expected utility attainable at belief $p\in\Delta(\Omega)$. As a maximum of finitely many affine functions, $v$ is convex and piecewise-linear.

The researcher shares the prior belief $p_0$ and can choose any experiment, represented by the distribution of posterior beliefs it induces. The feasible set is
\[
  \X:=\left\{
    \tau\in\Delta(\Delta(\Omega)):
    \int_{\Delta(\Omega)}p\,d\tau(p)=p_0
  \right\},
\]
where the mean restriction is Bayes plausibility.

The researcher's cost depends on a private type $\theta\in\Theta=\{\thH,\thL\}$ with $\thL>\thH>0$ ($\thH$ is the \emph{more efficient} type). For a type-$\theta$ researcher choosing $\tau$, the cost is $\theta\,C(\tau)$, where
\[
  C(\tau)=\mathbb{E}_{p\sim\tau}[c(p)]
\]
for a continuous and strictly convex $c:\Delta(\Omega)\to\R$ with $c(p_0)=0$.\footnote{It is immediate that $C(\tau)\ge 0$ for all $\tau\in\X$, with equality if and only if $\tau$ is the fully uninformative experiment (mass $1$ on $p_0$): by Jensen and $\mathbb{E}_{p\sim\tau}[p]=p_0$, $C(\tau)=\mathbb{E}[c(p)]\ge c(p_0)=0$, strict unless $\tau=\dirac_{p_0}$.} We write $\bar C:=\max_{\tau\in X}C(\tau)<\infty$. The type distribution is influenced by the researcher's investment.

\paragraph{Timeline.} At $t=0$ the principal commits to a methods-based contract. At $t=1$ the researcher decides whether to make a private, costly investment, which determines his type distribution. At $t=2$ the type realizes, the researcher chooses an experiment and is paid per the $t=0$ terms.

The principal has commitment power and posts the contract before the researcher prepares. The researcher then makes a project-specific, cost-reducing investment, such as tools, training, or a research pipeline, in anticipation of its terms. Two features of this timing are important. First, because investment precedes the realization of the type, it shifts the \emph{distribution} of the researcher's cost rather than responding to an already realized type. Second, because the principal contracts before the researcher invests, she shapes his investment incentives through the contract itself, as captured by the investment constraint \eqref{eq:ICI} below.

\paragraph{Methods-based contract.} At $t=0$ the principal chooses $M=(\chi,T)$, where $\chi:\Theta\to\X$ is the experiment choice function and $T:\Theta\to\R_+$ the payment function. A type-$\theta$ researcher who chooses $\tau$ for payment $t$ gets utility $t-\theta C(\tau)$. Incentive compatibility and individual rationality require, for all $\theta,\theta'\in\Theta$,
\begin{align}
  T(\theta)-\theta C(\chi(\theta)) &\ge T(\theta')-\theta C(\chi(\theta')) \tag{IC}\label{eq:IC}\\
  T(\theta)-\theta C(\chi(\theta)) &\ge 0. \tag{IR}\label{eq:IR}
\end{align}

Write $U_\theta:=T(\theta)-\theta C(\chi(\theta))$ for the researcher's interim rent, with $U_H:=U_{\thH}$ and $U_L:=U_{\thL}$.

\paragraph{Investment.} At $t=1$ the researcher makes a binary investment decision at sunk cost $\phi>0$. Without investment, the probability of being the high type $\thH$ is $\alpha$; with investment, it is $\beta$, where $\beta>\alpha$. Given $M$ satisfying \eqref{eq:IC}--\eqref{eq:IR}, the researcher is willing to invest if and only if
\begin{equation}
  (\beta-\alpha)\Bigl[\bigl(T(\thH)-\thH C(\chi(\thH))\bigr)-\bigl(T(\thL)-\thL C(\chi(\thL))\bigr)\Bigr]\ge\phi.
  \tag{IC-I}\label{eq:ICI}
\end{equation}

If the reverse (weak) inequality holds we call it (IC-NI).

\paragraph{Principal's problem.} The principal's value from an experiment $\tau$ is $V(\tau)=\mathbb{E}_{p\sim\tau}[v(p)]$, the expectation of $v$ over the induced posteriors. Define
\[
  \check V(\chi,T)=
  \begin{cases}
    \alpha[V(\chi(\thH))-T(\thH)]+(1-\alpha)[V(\chi(\thL))-T(\thL)] & \text{if }(\chi,T)\text{ satisfies (IC-NI)},\\[2pt]
    \beta[V(\chi(\thH))-T(\thH)]+(1-\beta)[V(\chi(\thL))-T(\thL)]   & \text{if }(\chi,T)\text{ satisfies \eqref{eq:ICI}}.
  \end{cases}
\]

At a contract where \eqref{eq:ICI} holds with equality, both branches apply and the researcher is indifferent. We then take $\check V$ to be the larger of the two, breaking the tie in the principal's favor. The principal solves $\max_{(\chi,T)}\check V(\chi,T)$ subject to \eqref{eq:IC}--\eqref{eq:IR}.

\paragraph{Primitives.} Fixing the state space $\Omega$ and type space $\Theta=\{\thH,\thL\}$, a contracting environment is a tuple $(A,u,p_0,c,\alpha,\beta,\phi)$: the decision problem $(A,u,p_0)$, the belief-cost $c$, the pre- and post-investment type distributions $\alpha,\beta$, and the sunk investment cost $\phi$. Throughout, the prior $p_0\in\Delta(\Omega)$ and the type distributions have full support: $p_0(\omega)>0$ for every $\omega\in\Omega$, and $\alpha,\beta\in(0,1)$ (both types occur before and after investment). Section~\ref{sec:ntype} takes the finite-$\Theta$ analog.

\section{Analysis}\label{sec:analysis}

\subsection{Two programs}
We split the problem into
\begin{align}
  V_{NI}=\max_{(\chi,T)}\ &\alpha[V(\chi(\thH))-T(\thH)]+(1-\alpha)[V(\chi(\thL))-T(\thL)]\notag\\
   &\text{s.t. \eqref{eq:IC}, \eqref{eq:IR}, (IC-NI)} \tag{P-NI}\label{eq:PNI}\\[4pt]
  V_{I}=\max_{(\chi,T)}\ &\beta[V(\chi(\thH))-T(\thH)]+(1-\beta)[V(\chi(\thL))-T(\thL)]\notag\\
   &\text{s.t. \eqref{eq:IC}, \eqref{eq:IR}, \eqref{eq:ICI}} \tag{P-I}\label{eq:PI}
\end{align}
and take the better of the two. When their values coincide, we select an investment-inducing contract. Note that Program \eqref{eq:PI} can be infeasible if the cost of investment becomes too large. In that case, we set $V_{I}=-\infty$. 

For what follows, we introduce two orders on experiments. For $\tau_1,\tau_2\in\X$, write $\tau_2\succeq_{c}\tau_1$ (the \emph{cost order}) if $C(\tau_2)\ge C(\tau_1)$, that is $\mathbb{E}_{p\sim\tau_2}[c(p)]\ge\mathbb{E}_{p\sim\tau_1}[c(p)]$; write $\tau_2\succeq_{B}\tau_1$ (the \emph{Blackwell order}) if $\tau_2$ dominates $\tau_1$ in the Blackwell order \citep{blackwell1951comparison,blackwell1953equivalent}. Since $c$ is convex, a Blackwell-more-informative experiment is more costly, so $\succeq_{B}$ refines $\succeq_{c}$.

\begin{lemma}[Cost monotonicity]\label{lem:mono}
Let $Y\subseteq\X$ be any feasible set. For $\theta_1>\theta_2$, let
$\tau_i\in\argmax_{\tau\in Y}\{V(\tau)-\theta_i C(\tau)\}$, $i=1,2$. Then
$C(\tau_1)\le C(\tau_2)$.
\end{lemma}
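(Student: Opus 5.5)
The plan is the standard revealed-preference (monotone comparative statics) argument. It uses only the optimality of each $\tau_i$ against the other experiment, both of which lie in the common feasible set $Y$. No structure of $Y$, of $V$, or of $c$ beyond $C$ being real-valued is needed. Existence of the maximizers is assumed in the statement, so I will not address it.

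First I would write the two optimality inequalities. Since $\tau_2\in Y$ and $\tau_1$ maximizes $V(\tau)-\theta_1 C(\tau)$ over $Y$,
\[
V(\tau_1)-\theta_1 C(\tau_1)\ \ge\ V(\tau_2)-\theta_1 C(\tau_2).
\]
Symmetrically, since $\tau_1\in Y$ and $\tau_2$ is optimal for $\theta_2$,
\[
V(\tau_2)-\theta_2 C(\tau_2)\ \ge\ V(\tau_1)-\theta_2 C(\tau_1).
\]
All quantities are finite because $V$ is bounded (since $v$ is continuous on the compact simplex) and $0\le C\le\bar C$. Adding the two inequalities, the $V$ terms cancel, which leaves
\[
(\theta_1-\theta_2)\bigl[C(\tau_2)-C(\tau_1)\bigr]\ \ge\ 0.
\]
Dividing by $\theta_1-\theta_2>0$ gives $C(\tau_1)\le C(\tau_2)$, which is the claim.

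There is no real obstacle. The only point I would flag is that the lemma compares arbitrary selections from the two argmax sets: the inequality holds for every pair of maximizers, not just for some pair. I would also remark that this weak conclusion is the most one can get; a strict inequality cannot be obtained in general, since when $Y$ has a single element or $V$ has kinks, both types may share the same optimum.
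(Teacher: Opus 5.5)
Your proposal is correct and is the paper's own argument: add the two revealed-preference optimality inequalities, cancel $V(\tau_1)+V(\tau_2)$, and divide by $\theta_1-\theta_2>0$. Your side remarks, that the bound holds for every pair of selections and needs no structure on $Y$, $V$, or $c$, agree with the paper's comment after the lemma.
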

\begin{proof}
Optimality of $\tau_1$ at $\theta_1$: $V(\tau_1)-\theta_1 C(\tau_1)\ge V(\tau_2)-\theta_1 C(\tau_2)$.
Optimality of $\tau_2$ at $\theta_2$: $V(\tau_2)-\theta_2 C(\tau_2)\ge V(\tau_1)-\theta_2 C(\tau_1)$.
Adding and canceling $V(\tau_1)+V(\tau_2)$ gives
$(\theta_1-\theta_2)C(\tau_2)\ge(\theta_1-\theta_2)C(\tau_1)$, hence $C(\tau_1)\le C(\tau_2)$.
\end{proof}

Lemma~\ref{lem:mono} orders the optimal experiments by cost: it gives $\tau_2\succeq_{c}\tau_1$ for the lower-multiplier optimum $\tau_2$. Its proof is a ``revealed-preference'' argument that adds two optimality inequalities, so it does not rely on the posterior-separable cost structure and holds for any finite $\Omega$. The following strengthens this to the Blackwell order $\succeq_{B}$: a more efficient (lower-cost) type is assigned a Blackwell-more-informative experiment. In a binary-state model, \citet{yoder2022designing} establishes Blackwell monotonicity.  For general finite state spaces, \citet{wang2025contracting} shows that Blackwell monotonicity need not hold.

\begin{lemma}[Blackwell monotonicity; two states]
\label{lem:blackwell}
Let $|\Omega|=2$ and $\theta_1>\theta_2>0$. For any $\tau_i\in\argmax_{\tau\in\X}\{V(\tau)-\theta_i C(\tau)\}$, $i=1,2$, we have $\tau_2\succeq_B\tau_1$, with strict dominance whenever
$C(\tau_2)>C(\tau_1)$.
\end{lemma}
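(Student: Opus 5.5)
With two states, identify $\Delta(\Omega)$ with $[0,1]$ via $p=p(\omega_1)$, so $p_0\in(0,1)$. The problem $\max_{\tau\in\X}\{V(\tau)-\theta C(\tau)\}=\max_{\tau}\mathbb{E}_\tau[v(p)-\theta c(p)]$ is a concavification problem for $f_\theta:=v-\theta c$. By Carathéodory on the line, any optimum can be analysed through its support. The plan has three steps: first characterize optimal supports, then show they widen as $\theta$ falls, and finally translate that widening into a Blackwell comparison.

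\textbf{Support structure.} Let $F_\theta$ be the concave envelope of $f_\theta$. An optimal $\tau$ is supported on points where $f_\theta=F_\theta$. Moreover, all support points lie in a single maximal face of the envelope over $p_0$: a closed interval $[a_\theta,b_\theta]\ni p_0$ on which $F_\theta$ is affine, with $f_\theta=F_\theta$ at the endpoints. If $p_0$ lies in the interior of no such segment, then $F_\theta(p_0)=f_\theta(p_0)$ and strict convexity of $c$ forces $\tau=\delta_{p_0}$.

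\textbf{Support widening.} The key claim is that $\theta_2<\theta_1$ implies $[a_{\theta_1},b_{\theta_1}]\subseteq[a_{\theta_2},b_{\theta_2}]$, in the sense that every optimal $\tau_1$ has support inside the convex hull of the support of every optimal $\tau_2$. Write $f_{\theta_2}=f_{\theta_1}+(\theta_1-\theta_2)c$, which adds a strictly convex term. Suppose, for contradiction, that $\tau_1$ puts mass at some $q$ outside $[a,b]:=\mathrm{co}(\mathrm{supp}\,\tau_2)$, say $q<a\le p_0$. The argument is a two-point revealed-preference exchange in the spirit of Lemma~\ref{lem:mono}. Let $\ell_2$ be the supporting line of $f_{\theta_2}$ on $[a,b]$, and let $\ell_1$ be the affine support of $F_{\theta_1}$ at $p_0$. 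Then $\ell_2\ge f_{\theta_2}$ everywhere, with equality at $a,b$, and $\ell_1\ge f_{\theta_1}$, with equality at $q$ and at the other support points of $\tau_1$. Consider $g=\ell_2-\ell_1$, which is affine. Since $\ell_2\ge f_{\theta_2}=f_{\theta_1}+(\theta_1-\theta_2)c$ and $\ell_1\ge f_{\theta_1}$ with equality on $\mathrm{supp}\,\tau_1$, we get $g\ge(\theta_1-\theta_2)c+(f_{\theta_1}-\ell_1)$. This gives $g\ge(\theta_1-\theta_2)c$ on $\mathrm{supp}\,\tau_1$ and $g\le(\theta_1-\theta_2)c$ at $a,b$. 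An affine function lying weakly below the strictly convex $(\theta_1-\theta_2)c$ at $a$ and $b$ lies strictly above it strictly inside $(a,b)$ and weakly below it outside $[a,b]$. Now use that $\tau_1$ has mean $p_0\in[a,b]$ and must have support on both sides of $p_0$ or equal $\delta_{p_0}$. Combining the inequalities at $q<a$ and at a support point of $\tau_1$ to the right of $p_0$ should then contradict the strict convexity gap, using $\ell_1(p_0)=F_{\theta_1}(p_0)$ and $\ell_2(p_0)=F_{\theta_2}(p_0)$. This step is the main obstacle. I would first verify it in the clean case where $\mathrm{supp}\,\tau_1$ consists of two points $\{q,r\}$. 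The general case then follows because the support of $\tau_1$ lies in one face, so only its extreme points $q$ and $r$ matter.

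\textbf{Blackwell conclusion.} With two states, if $\mathrm{supp}\,\tau_1\subseteq[a,b]$, then $\tau_1$ is a mean-preserving contraction of the two-point distribution on $\{a,b\}$ with mean $p_0$. So it suffices to show that $\tau_2$ dominates that two-point distribution, or more directly that $\tau_1$ is a garbling of $\tau_2$. For this I would apply the convex-order characterization: $\tau_2\succeq_B\tau_1$ iff $\mathbb{E}_{\tau_2}[h]\ge\mathbb{E}_{\tau_1}[h]$ for all convex $h$, equivalently iff the integrated CDFs are ordered, $\int_0^x\tau_2([0,s])\,ds\ge\int_0^x\tau_1([0,s])\,ds$. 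Since all optimal experiments for a given $\theta$ share the same face, I may replace $\tau_2$ by an optimum supported on the endpoints $a_{\theta_2},b_{\theta_2}$. That optimum is Blackwell-maximal among distributions with mean $p_0$ supported in $[a_{\theta_2},b_{\theta_2}]$. The remaining gap, that a general optimal $\tau_2$ with interior mass still dominates $\tau_1$, should follow from the same exchange argument applied face by face: the support points of $\tau_1$ lie in the face of $F_{\theta_1}$, which sits inside a face region where $f_{\theta_2}<F_{\theta_2}$ except at the points of $\mathrm{supp}\,\tau_2$.

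\textbf{Strictness.} Finally, if $C(\tau_2)>C(\tau_1)$, then $\tau_1\ne\tau_2$. Weak dominance together with a strict cost gap (as $c$ is strictly convex) then rules out Blackwell equivalence, which gives strict dominance.
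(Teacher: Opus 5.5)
Your proposal builds the right object but uses it to prove the wrong intermediate claim, and the decisive step is left open.

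\textbf{The containment claim is too weak.} Your function $g-(\theta_1-\theta_2)c$, with $g=\ell_2-\ell_1$, is exactly the paper's $Q$. It is strictly concave, and your two inequalities hold on all of $\operatorname{supp}\tau_1$ and all of $\operatorname{supp}\tau_2$, respectively. You use it, however, to prove $\operatorname{supp}\tau_1\subseteq\operatorname{co}(\operatorname{supp}\tau_2)$, and this does not imply $\tau_2\succeq_B\tau_1$ once $\tau_2$ has interior mass. For example, take prior $\tfrac12$, $\tau_2=\varepsilon\dirac_0+(1-2\varepsilon)\dirac_{1/2}+\varepsilon\dirac_1$ and $\tau_1=\tfrac12\dirac_{1/4}+\tfrac12\dirac_{3/4}$. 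The containment holds, yet $|x-\tfrac12|$ has expectation $\varepsilon<\tfrac14$ under $\tau_2$ and $\tfrac14$ under $\tau_1$.

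\textbf{The replacement step does not cover every optimizer.} Your sufficient condition is that $\tau_2$ dominate the two-point law on the endpoints of its own hull. That holds only when $\tau_2$ is that law, so you replace $\tau_2$ by the endpoint optimizer. This proves dominance over that one optimizer only, whereas the lemma quantifies over every $\tau_2$ in the argmax. At a kink of $K$, optimizers at the same multiplier differ in cost, so the replacement is lossy. In the paper's Example at $\theta_2=\theta^{*}$, for instance, both $\dirac_{1/2}$ and the three-point mixtures are optimal. The ``face by face'' remark does not supply the missing argument.

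\textbf{The convexity fact in the ``main obstacle'' step is false as stated.} An affine function weakly below a strictly convex function at $a$ and $b$ need not exceed it anywhere (take it very negative). Nor need it stay below outside $[a,b]$.

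\textbf{The fix.} Apply strict concavity of $Q$ at the extreme points of $\operatorname{supp}\tau_1$, not of $\operatorname{supp}\tau_2$, and conclude a \emph{separation} rather than a containment. Let $q_1:=\min\operatorname{supp}\tau_1$ and $r_1:=\max\operatorname{supp}\tau_1$. If $q_1=r_1$, then $\tau_1=\dirac_{p_0}$ and there is nothing to prove. Otherwise, $Q\ge0$ at $q_1$ and $r_1$, and strict concavity give $Q>0$ on $(q_1,r_1)$. Since $Q\le0$ on $\operatorname{supp}\tau_2$, this yields $\operatorname{supp}\tau_2\subseteq[0,q_1]\cup[r_1,1]$. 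For any convex $\psi$, let $\ell$ be its chord over $[q_1,r_1]$. Then $\psi\le\ell$ on $\operatorname{supp}\tau_1$ and $\psi\ge\ell$ on $\operatorname{supp}\tau_2$, so $\mathbb{E}_{\tau_1}[\psi]\le\ell(p_0)\le\mathbb{E}_{\tau_2}[\psi]$.

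This is the paper's argument; it uses the superlevel set $\{Q\ge0\}$ in place of $[q_1,r_1]$. It handles every pair of optimizers directly, with no contradiction argument, face decomposition or replacement. Your strictness step is fine as written.
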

\begin{proof}
See Appendix~\ref{app:blackwell}.
\end{proof}

We first define the scaled investment cost and the reference information costs used below. Let
\[
  \eta:=\frac{\phi}{(\beta-\alpha)(\thL-\thH)}.
\]

We also include $\phi=\eta=0$ as a boundary case. The scaled cost $\eta$ is the information-cost threshold in the reduced programs below. We state investment comparisons in $\phi$ and use $\eta$ to describe experiment choices. When writing $V_I(\eta)$ or $V_{NI}(\eta)$, we evaluate the corresponding program at $\phi=(\beta-\alpha)(\thL-\thH)\eta$.

Define the low type's \emph{effective (rent-inflated)} cost multipliers
\[
  \mu_\alpha:=\thL+\frac{\alpha}{1-\alpha}(\thL-\thH),
  \qquad
  \mu_\beta:=\thL+\frac{\beta}{1-\beta}(\thL-\thH),
\]
and let $K(\lambda):=\max_{\tau\in\X}\{V(\tau)-\lambda C(\tau)\}$, also written as $K_\lambda$.

For the two-type analysis, we assume that $K$ is differentiable at $\thH,\thL,\muA$, and $\muB$. Every optimizer at one of these reference multipliers $\lambda$ therefore has the same information cost and value, given by $-K'(\lambda)$ and $K(\lambda)-\lambda K'(\lambda)$, respectively.

Define the corresponding reference costs by
\[
  \CFBH:=-K'(\thH),\quad
  \CFBL:=-K'(\thL),\quad
  \CaL:=-K'(\muA),\quad
  \CbL:=-K'(\muB).
\]

For general finite state spaces, common cost and value need not identify a unique experiment. Replacing an assigned reference optimizer by another at the same multiplier nevertheless preserves all contract constraints and the principal's payoff, with transfers unchanged. A fixed reference experiment can therefore be used whenever the assigned experiment is optimal at that multiplier. For binary states, Lemma~\ref{lem:binary-unique} in Appendix~\ref{app:capped} establishes uniqueness at the reference multipliers.

Since $\thH<\thL<\muA<\muB$, Lemma~\ref{lem:mono} gives $\CFBH\ge\CFBL\ge\CaL\ge\CbL\ge0$. We assume $\CaL>0$: the screening benchmark under the pre-investment distribution assigns positive information cost to the low type. We impose no further strict inequalities among the reference costs.

\begin{lemma}[Simplifying \eqref{eq:PNI}]\label{lem:PNI}
If $(\chi^*,T^*)$ solves \eqref{eq:PNI}, then
\begin{enumerate}[label=\arabic*.]
  \item $\chi^*$ satisfies
    \begin{enumerate}[label=(\alph*)]
      \item $\chi^*(\thH)\in\argmax_\tau\{V(\tau)-\thH C(\tau)\}$,
      \item $\chi^*(\thL)\in\argmax_\tau\{V(\tau)-\muA C(\tau)\}$ subject to $C(\tau)\le\eta$.
    \end{enumerate}
  \item $T^*(\thH)=\thH C(\chi^*(\thH))+(\thL-\thH)C(\chi^*(\thL))$ and $T^*(\thL)=\thL C(\chi^*(\thL))$.
\end{enumerate}
\end{lemma}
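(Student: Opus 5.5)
The approach is the standard two-type screening reduction, with (IC-NI) rewritten as a cost cap on the low type's experiment. Write $U_H,U_L$ for the interim rents. The principal's objective equals $\alpha[V(\chi(\thH))-\thH C(\chi(\thH))-U_H]+(1-\alpha)[V(\chi(\thL))-\thL C(\chi(\thL))-U_L]$. The constraints in rent form are as follows. The low-type IR is $U_L\ge 0$. The high-type IC is $U_H\ge U_L+(\thL-\thH)C(\chi(\thL))$. The low-type IC is $U_L\ge U_H-(\thL-\thH)C(\chi(\thH))$. Condition (IC-NI) reads $(\beta-\alpha)(U_H-U_L)\le\phi$. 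The high-type IR is implied by the high-type IC together with $U_L\ge0$.

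First I will work with the relaxed program that keeps only the low-type IR, the high-type IC and (IC-NI). Rents enter the objective with negative sign. For a fixed $\chi$, lowering $U_L$ to $0$ is therefore profitable and keeps every constraint satisfied, because lowering $U_L$ with $U_H$ fixed relaxes the high-type IC. Lowering $U_H$ to $(\thL-\thH)C(\chi(\thL))$ is likewise profitable and only relaxes (IC-NI). Hence at any optimum both the high-type IC and the low-type IR bind. Strictly, each rent must be at its lower bound, since otherwise a strict improvement exists. This yields the transfers in part 2. Substituting, (IC-NI) becomes $(\beta-\alpha)(\thL-\thH)C(\chi(\thL))\le\phi$, that is, $C(\chi(\thL))\le\eta$.

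The objective then separates. It equals $\alpha[V(\chi(\thH))-\thH C(\chi(\thH))]+(1-\alpha)[V(\chi(\thL))-\thL C(\chi(\thL))]-\alpha(\thL-\thH)C(\chi(\thL))$. The low-type terms are $(1-\alpha)[V(\chi(\thL))-\muA C(\chi(\thL))]$, which gives 1(a) with no constraint and 1(b) under the cap $C\le\eta$.

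Next I will verify that the omitted low-type IC holds at the relaxed solution. It requires $(\thL-\thH)C(\chi(\thL))\le(\thL-\thH)C(\chi(\thH))$, that is, $C(\chi(\thL))\le C(\chi(\thH))$. This follows from Lemma~\ref{lem:mono}: the unconstrained optimum at $\muA$ has cost $\CaL\le\CFBH$. The capped optimum has cost at most $\min(\eta,\CaL)$. To see this, apply Lemma~\ref{lem:mono} on the feasible set $Y=\{C\le\eta\}$: an unconstrained optimizer at $\thH$ with cost above the cap is not in $Y$. The cleaner argument is that if the capped optimum had cost above $\CaL$, then interpolating toward the unconstrained optimizer would improve the objective by concavity. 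Alternatively, I would simply invoke $C(\chi(\thL))\le\eta$ together with the monotonicity comparison against $\CFBH$.

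Since the relaxed optimum is feasible for \eqref{eq:PNI}, it solves \eqref{eq:PNI}. Every solution of \eqref{eq:PNI} is also optimal for the relaxed problem, so it inherits the properties derived above. The main obstacle is this last step: showing that the capped low-type optimum never exceeds the high type's cost in every case.
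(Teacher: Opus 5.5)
Your argument follows the paper's proof. Minimal rents $U_L=0$ and $U_H=(\thL-\thH)C(\chi(\thL))$ turn (IC-NI) into the cap $C(\chi(\thL))\le\eta$. The objective then splits into an unconstrained $\thH$-problem and a capped $\muA$-problem. The monotonicity condition $C(\chi(\thL))\le C(\chi(\thH))$ is checked at the end; the paper keeps it as a constraint and relaxes it, while you obtain it by dropping the low type's IC. Two repairs are needed, and neither changes the conclusion.

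First, lowering $U_L$ to $0$ with $U_H$ fixed does not keep every constraint satisfied. It enlarges $U_H-U_L$ and can violate (IC-NI). Move both rents at once to $(U_L,U_H)=(0,(\thL-\thH)C(\chi(\thL)))$. The original high-type IC gives $U_H-U_L\ge(\thL-\thH)C(\chi(\thL))$, so the new rent gap is weakly smaller and (IC-NI) still holds. Since both types have positive probability, the move is strictly profitable unless the rents were already minimal. With this fix, your conclusion that every optimum has minimal rents goes through.

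Second, the last step is not really an obstacle, and your interpolation argument settles it, exactly as in the paper. If $\eta\ge\CaL$, an unconstrained $\muA$-optimizer is feasible. Every capped optimizer is then unconstrained-optimal and, by differentiability of $K$ at $\muA$, has cost $\CaL$. If $\eta<\CaL$, the cap itself gives cost at most $\eta<\CaL$. Either way, Lemma~\ref{lem:mono} yields $C(\chi(\thL))\le\CaL\le\CFBH=C(\chi(\thH))$.

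Your other two alternatives are weaker. The bound $C(\chi(\thL))\le\eta$ alone fails when $\eta>\CFBH$. Applying Lemma~\ref{lem:mono} on $Y=\{C\le\eta\}$ works only after splitting on whether $\eta\ge\CFBH$, since only then does the $\thH$-optimizer lie in $Y$. Drop both and keep the interpolation argument.
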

\begin{proof}
See Appendix~\ref{app:PNI}.
\end{proof}

\begin{lemma}[Simplifying \eqref{eq:PI}]\label{lem:PI}
Suppose $(\chi^*,T^*)$ solves \eqref{eq:PI}.

\smallskip\noindent\emph{If $\eta\ge\CbL$ (``Case~1''):}
\begin{enumerate}[label=(1\alph*),leftmargin=\dimexpr\leftmargini+\leftmarginii\relax]
  \item $\chi^*(\thH)\in\argmax_\tau\{V(\tau)-\thH C(\tau)\}$ subject to $C(\tau)\ge\eta$,
  \item $\chi^*(\thL)\in\argmax_\tau\{V(\tau)-\thL C(\tau)\}$ subject to $C(\tau)\le\eta$,
\end{enumerate}
with $T^*(\thH)=\thH C(\chi^*(\thH))+(\thL-\thH)\eta$ and $T^*(\thL)=\thL C(\chi^*(\thL))$.

\smallskip\noindent\emph{If $\eta<\CbL$ (``Case~2''):}
\begin{enumerate}[label=(2\alph*),leftmargin=\dimexpr\leftmargini+\leftmarginii\relax]
  \item $\chi^*(\thH)\in\argmax_\tau\{V(\tau)-\thH C(\tau)\}$ subject to $C(\tau)\ge\eta$,
  \item $\chi^*(\thL)\in\argmax_\tau\{V(\tau)-\muB C(\tau)\}$ subject to $C(\tau)\ge\eta$,
\end{enumerate}
with $T^*(\thH)=\thH C(\chi^*(\thH))+(\thL-\thH)C(\chi^*(\thL))$ and $T^*(\thL)=\thL C(\chi^*(\thL))$.
\end{lemma}
\begin{proof}
See Appendix~\ref{app:PI}.
\end{proof}

\begin{proposition}\label{prop:reduce}
In either feasible program \eqref{eq:PNI} or \eqref{eq:PI}, each type's assigned experiment solves an information-design problem, possibly subject to an information-cost constraint.
\end{proposition}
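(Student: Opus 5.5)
This is essentially a corollary of Lemmas~\ref{lem:PNI} and~\ref{lem:PI}, so the plan is to read off each type's assignment from those lemmas. The remaining work is to show that every objective appearing there is an information-design problem.

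For a multiplier $\lambda>0$, the problem $\max_{\tau\in\X}\{V(\tau)-\lambda C(\tau)\}$ equals
\[
\max_{\tau\in\X}\mathbb{E}_{p\sim\tau}\bigl[v(p)-\lambda c(p)\bigr].
\]
This holds because both $V$ and $C$ are expectations of functions of the posterior. It is therefore a Bayesian-persuasion problem over Bayes-plausible distributions of posteriors, with indirect value $w_\lambda:=v-\lambda c$, which is continuous on $\Delta(\Omega)$. Its value is the concave envelope of $w_\lambda$ evaluated at $p_0$.

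An added constraint $C(\tau)\le\eta$ or $C(\tau)\ge\eta$ is a single linear moment restriction $\mathbb{E}_\tau[c(p)]\lessgtr\eta$ on $\tau$. This places the problem in the constrained information design framework of \citet{letreust2019persuasion} and \citet{doval2024constrained}. Existence of optimizers follows because $\X$ is weak-$*$ compact, the objective is weak-$*$ continuous (since $v$ and $c$ are continuous on the compact simplex), and each constraint set is weak-$*$ closed.

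The steps, in order, are as follows.

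First, for \eqref{eq:PNI}, invoke Lemma~\ref{lem:PNI}. The high type solves the unconstrained problem with $\lambda=\thH$. The low type solves the problem with $\lambda=\muA$ under the cap $C(\tau)\le\eta$.

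Second, for \eqref{eq:PI}, invoke Lemma~\ref{lem:PI} and split into its two cases. In Case~1 the high type has $\lambda=\thH$ with the floor $C\ge\eta$, and the low type has $\lambda=\thL$ with the cap $C\le\eta$. In Case~2 the multipliers are $\thH$ and $\muB$, each with the floor $C\ge\eta$.

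Third, in every instance rewrite the objective as $\mathbb{E}_\tau[w_\lambda]$ with the relevant multiplier, and observe that the only additional restriction is the cost constraint.

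Fourth, note that transfers are pinned down by the experiments (parts 2 of the lemmas), so no further choice variable remains.

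The only step needing care is feasibility of the constrained subproblems. The cap constraint is always feasible because $\dirac_{p_0}$ has $C=0\le\eta$. The floor constraint requires $\eta\le\bar C$, and I expect this to be exactly the condition under which \eqref{eq:PI} is feasible, which the statement's phrase ``feasible program'' presupposes. I would check that if $\eta>\bar C$ then no contract satisfies \eqref{eq:ICI}. Under \eqref{eq:IC}, $U_H-U_L\le(\thL-\thH)C(\chi(\thL))\le(\thL-\thH)\bar C$, so \eqref{eq:ICI} forces $\eta\le\bar C$. This establishes the claimed reduction in every feasible case.
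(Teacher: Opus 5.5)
Your proposal is correct and follows the paper's proof: read off each type's program from Lemmas~\ref{lem:PNI} and~\ref{lem:PI}, then rewrite $V(\tau)-\lambda C(\tau)$ as $\mathbb{E}_\tau[v(p)-\lambda c(p)]$ over $\X$, subject to at most one linear cost cap or floor. One small slip in your optional feasibility check: reporting IC bounds $U_H-U_L$ above by $(\thL-\thH)C(\chi(\thH))$, whereas $(\thL-\thH)C(\chi(\thL))$ is the lower bound, but since $C(\chi(\thH))\le\bar C$ your conclusion $\eta\le\bar C$ is unaffected.
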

\begin{proof}
By Lemmas~\ref{lem:PNI}--\ref{lem:PI}, each assigned experiment maximizes $V(\tau)-\lambda C(\tau)=\mathbb E_\tau[v(p)-\lambda c(p)]$ over the Bayes-plausible set $\X$, possibly subject to $C(\tau)\le\eta$ or $C(\tau)\ge\eta$. This is an information-design problem with an adjusted value function and, where applicable, a linear information-cost constraint.
\end{proof}

This reduction connects the contracting problem to constrained information design \citep{letreust2019persuasion,azrieli2021constrained,doval2024constrained}.

\subsection{Properties of the optimal contract}
Let $\chi^{FB}$ denote the principal's first-best experiment choice function (Appendix~\ref{app:FB}). The next proposition helps clarify how the optimal contract in our setting compares with the first-best benchmark in terms of information acquisition.

\begin{proposition}[Screening distortions]
\label{prop:dist}
Let $(\chi^*,T^*)$ solve \eqref{eq:PNI} or a feasible \eqref{eq:PI}. In either program, $\chi^*(\thL)\preceq_c\chi^{FB}(\thL)$. For binary states, $\chi^*(\thL)\preceq_B\chi^{FB}(\thL)$. Moreover:
\begin{enumerate}[label=\arabic*.]
  \item \emph{Without investment.} Under \eqref{eq:PNI}, the high type's experiment is first-best optimal.

  \item \emph{With investment.} Under \eqref{eq:PI}, the high type's experiment is first-best optimal when $\eta\le\CFBH$; otherwise, $C(\chi^*(\thH))=\eta>\CFBH$. The low type's experiment is first-best optimal when $\eta\ge\CFBL$.
\end{enumerate}
\end{proposition}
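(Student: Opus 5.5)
The plan is to read everything off the reduced programs in Lemmas~\ref{lem:PNI} and \ref{lem:PI}, together with the first-best characterization. I take the first best to be $\chi^{FB}(\theta)\in\argmax_{\tau\in\X}\{V(\tau)-\theta C(\tau)\}$, so that $C(\chi^{FB}(\thH))=\CFBH$ and $C(\chi^{FB}(\thL))=\CFBL$ by differentiability of $K$. Each assigned low-type experiment is an unconstrained optimum at some multiplier $\lambda\ge\thL$, or a constrained optimum. The cost comparison $\chi^*(\thL)\preceq_c\chi^{FB}(\thL)$ then reduces to showing $C(\chi^*(\thL))\le\CFBL$ in each case.

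\textbf{Low-type cost bound, case by case.}
\begin{itemize}
\item Under \eqref{eq:PNI}, $\chi^*(\thL)$ maximizes $V-\muA C$ subject to $C\le\eta$. If the cap is slack, then $\chi^*(\thL)$ is a local, hence (by concavity of $\tau\mapsto V(\tau)-\muA C(\tau)$ on the convex set $\X$) global, unconstrained optimum. It therefore has cost $\CaL\le\CFBL$ by Lemma~\ref{lem:mono}. If the cap binds, its cost is $\eta$. I then argue $\eta\le\CaL$: if $\eta>\CaL$, an unconstrained $\muA$-optimizer would be feasible and do at least as well. So the cost is at most $\CaL\le\CFBL$.
\item Under \eqref{eq:PI}, Case~1 ($\eta\ge\CbL$): the low type maximizes $V-\thL C$ subject to $C\le\eta$. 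If $\eta\ge\CFBL$, the first-best optimizer is feasible, so $\chi^*(\thL)$ is first-best optimal with cost $\CFBL$. This also proves the last claim of part~2, since if $\eta\ge\CFBL$ we are in Case~1 because $\CFBL\ge\CbL$. If $\eta<\CFBL$, the cost is at most $\eta<\CFBL$.
\item Under \eqref{eq:PI}, Case~2 ($\eta<\CbL$): the constraint $C\ge\eta$ is slack at any unconstrained $\muB$-optimizer, whose cost is $\CbL$. The constrained optimum therefore equals the unconstrained one and has cost $\CbL\le\CFBL$.
\end{itemize}
For binary states, I upgrade cost comparisons to Blackwell comparisons using Lemma~\ref{lem:blackwell}. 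In the unconstrained subcases this is immediate, comparing multipliers $\muA$ or $\muB$ with $\thL$. The capped subcases need a separate argument.

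\textbf{Main obstacle: the Blackwell claim for capped optima.} Here I would use a Lagrangian step. In the binary-state model, a capped optimum of $V-\lambda C$ with binding cap $C=\eta$ is an unconstrained optimum at some larger multiplier $\lambda'\ge\lambda$. This follows from concave duality: the objective is linear in $\tau$ over a convex compact set and the constraint is linear, so Slater holds via $\dirac_{p_0}$ when $\eta>0$. Since $\lambda'\ge\lambda\ge\thL$, Lemma~\ref{lem:blackwell} gives $\chi^*(\thL)\preceq_B\chi^{FB}(\thL)$. The delicate point is that the optimizer at $\lambda'$ need not be unique where $K$ is not differentiable. Lemma~\ref{lem:blackwell}, however, is stated for any optimizers, so it still applies.

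\textbf{High type.} Under \eqref{eq:PNI}, Lemma~\ref{lem:PNI}(1a) directly gives first-best optimality. Under \eqref{eq:PI}, the high type maximizes $V-\thH C$ subject to $C\ge\eta$. If $\eta\le\CFBH$, the first-best optimizer is feasible and hence optimal. If $\eta>\CFBH$, the constraint must bind. Otherwise a constrained optimum with $C>\eta$ would be a local, hence global, unconstrained optimum with cost $\CFBH<\eta$, a contradiction. The binding-constraint argument runs by convex combination: mixing $\chi^*(\thH)$ slightly toward a first-best optimizer keeps $C\ge\eta$ and, by concavity, weakly improves the objective, strictly if $\chi^*(\thH)$ is not unconstrained-optimal. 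Hence $C(\chi^*(\thH))=\eta$.
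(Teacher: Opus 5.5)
Your route is the paper's. You read the reduced problems off Lemmas~\ref{lem:PNI}--\ref{lem:PI}, use mixing toward an unconstrained optimizer to decide when a constraint binds, and upgrade cost comparisons to Blackwell comparisons via a Lagrange multiplier and Lemma~\ref{lem:blackwell}; the paper packages that last step as Lemma~\ref{lem:capped}. Your cost bounds and high-type claims are fine as written. The Blackwell part, however, has a gap at equal multipliers.

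Lemma~\ref{lem:blackwell} requires strictly ordered multipliers, $\theta_1>\theta_2$. You write $\lambda'\ge\lambda\ge\thL$ and apply it anyway, and your unconstrained subcases compare only $\muA$ or $\muB$ with $\thL$. This misses Case~1 of Lemma~\ref{lem:PI} with $\eta\ge\CFBL$. There $\chi^*(\thL)$ is itself a $\thL$-optimizer, so the relevant multiplier is $\thL$ itself, and you must compare two optimizers at the same multiplier.

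Such optimizers need not be Blackwell comparable. With contact points $a<b<p_0<c<d$, the mean-$p_0$ laws on $\{a,c\}$ and on $\{b,d\}$ are unordered. The missing ingredient is uniqueness of the $\thL$-optimizer. It holds because $K$ is differentiable at $\thL$ and the state is binary (Lemma~\ref{lem:binary-unique}), and it gives $\chi^*(\thL)=\chi^{FB}(\thL)$.

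Your caveat about non-uniqueness at $\lambda'$ targets the wrong multiplier. For $\lambda'>\thL$, Lemma~\ref{lem:blackwell} already covers arbitrary optimizers; uniqueness is needed only at $\thL$.

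Two smaller points should also be recorded. For $0<\eta<\CFBL$ in Case~1, say why $\lambda'>\thL$ strictly: a zero multiplier would make the capped optimizer $\thL$-optimal, hence of cost $\CFBL>\eta$. Also, Slater requires $\eta>0$. When $\eta=0$ and a cap is present, the cap forces $\dirac_{p_0}$, which every experiment Blackwell-dominates.
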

\begin{proof}
See Appendix~\ref{app:props23}.
\end{proof}

\begin{proposition}[Investment cutoff]\label{prop:cutoff}
There exists a cutoff $\phi_I$ such that the principal prefers investment if and only if $\phi\le\phi_I$.
\end{proposition}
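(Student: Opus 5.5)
We will show that $\phi\mapsto V_I-V_{NI}$ is non-increasing and that the region $\{\phi: V_I\ge V_{NI}\}$ is closed on the left and contains $[0,\phi_I]$. We then set $\phi_I:=\sup\{\phi\ge0:V_I(\phi)\ge V_{NI}(\phi)\}$, with the convention $\phi_I=0$ (or the boundary case) if the set is empty. Since the principal prefers investment iff $V_I\ge V_{NI}$, with ties broken toward investment, the cutoff property follows once we have (i) $V_I$ non-increasing in $\phi$, (ii) $V_{NI}$ non-decreasing in $\phi$, and (iii) upper semicontinuity of $V_I-V_{NI}$ so that the supremum is attained.

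For (i), note that in \eqref{eq:PI} the only constraint involving $\phi$ is \eqref{eq:ICI}, i.e. $(\beta-\alpha)(U_H-U_L)\ge\phi$. Raising $\phi$ shrinks the feasible set, so $V_I$ is non-increasing, and equals $-\infty$ once the program is infeasible. The feasible region is an interval of $\phi$, since $U_H-U_L$ is bounded above on IR-binding optimal contracts (by $(\thL-\thH)\bar C$ after normalization). For (ii), (IC-NI) reads $(\beta-\alpha)(U_H-U_L)\le\phi$, and raising $\phi$ relaxes it, so $V_{NI}$ is non-decreasing. Equivalently, via Lemma~\ref{lem:PNI}, the low type's constraint set $\{C(\tau)\le\eta\}$ expands with $\eta$.

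The difference $D(\phi):=V_I(\phi)-V_{NI}(\phi)$ is therefore non-increasing. Hence $\{\phi:D(\phi)\ge0\}$ is a down-set, of the form $[0,\phi_I]$ or $[0,\phi_I)$.

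The main obstacle is closedness at $\phi_I$, i.e. ruling out $[0,\phi_I)$. We use the reduced programs of Lemmas~\ref{lem:PNI}--\ref{lem:PI}. For $V_I$: the feasible set $\{\phi: \text{\eqref{eq:PI} feasible}\}$ is closed, because $X$ is weak-* compact and $C$ is continuous. We then argue that $V_I$ is upper semicontinuous in $\eta$. The constraints $C(\tau)\ge\eta$ define closed sets that are decreasing in $\eta$, so taking limits of optimizers along $\eta_n\uparrow\eta$ yields a feasible limit at $\eta$. Continuity of $V$ and $C$ on compact $X$ then gives $\limsup V_I(\eta_n)\le V_I(\eta)$. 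Combined with monotonicity, this makes $V_I$ left-continuous.

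For $V_{NI}$ we need $\lim_{\eta_n\uparrow\eta}V_{NI}(\eta_n)\ge V_{NI}(\eta)$, i.e. lower semicontinuity from the left. Since $V_{NI}$ is monotone, this amounts to $V_{NI}(\eta)\le\lim V_{NI}(\eta_n)$. We obtain it by approximating the optimizer $\tau^*$ with $C(\tau^*)\le\eta$ by mixtures $\tau_\epsilon=(1-\epsilon)\tau^*+\epsilon\delta_{p_0}$. These satisfy $C(\tau_\epsilon)<\eta$ when $\eta>0$, and their objective converges to that of $\tau^*$ by linearity in $\tau$. The case $\eta=0$ is trivial.

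Thus $D$ is left upper semicontinuous and non-increasing, so $D(\phi_I)\ge\limsup_{\phi\uparrow\phi_I}D(\phi)\ge0$. This shows the principal prefers investment iff $\phi\le\phi_I$, with $\phi_I$ finite because $V_I=-\infty$ for large $\phi$.
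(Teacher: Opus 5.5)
Your proof is correct and follows essentially the paper's route. Constraint nesting makes $V_I$ non-increasing and $V_{NI}$ non-decreasing, compactness gives upper semicontinuity of $V_I$, and mixing with the uninformative experiment (the paper's concavity-by-mixing argument) gives the continuity of $V_{NI}$ that you need. The investment region is therefore a closed down-set, and you only need the left-sided versions of these continuity properties, which is a slight economy.

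The one piece you omit is nonemptiness. The paper gets it from $D(0)\ge V_\beta-V_\alpha\ge0$, using $V_I(0)=V_\beta$, $V_{NI}(0)\le V_\alpha$, and Lemma~\ref{lem:Vf}. You should add this, because your convention $\phi_I=0$ for an empty region would misclassify the boundary case $\phi=0$. The semicontinuity step is also cleanest on the original normalized program: the reduced Case~1 program has a cap $C(\chi(\thL))\le\eta$ and an explicit $\eta$-term, and it switches form at $\eta=\CbL$.
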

\begin{proof}
See Appendix~\ref{app:cutoff}.
\end{proof}

The principal's net gain from inducing investment, $D(\phi)=V_I(\phi)-V_{NI}(\phi)$, is non-increasing in $\phi$. A higher sunk cost increases the rent wedge required to induce investment, tightening (IC-I), while making investment easier to deter, slackening (IC-NI). Therefore, once inducing investment ceases to be worthwhile, it never becomes worthwhile again. The principal thus induces investment precisely when $\phi\le\phi_I$.

The principal never induces inefficient investment, but may forgo efficient investment: her cutoff $\phi_I$ can fall strictly below the first-best threshold. To state the result, first define efficient investment. Investment is \emph{efficient} if and only if $\phi\le\phi^{FB}$, where $\phi^{FB}:=(\beta-\alpha)\Bigl[\bigl(V(\chi^{FB}(\thH))-\thH C(\chi^{FB}(\thH))\bigr)-\bigl(V(\chi^{FB}(\thL))-\thL C(\chi^{FB}(\thL))\bigr)\Bigr]$. Recall that $K_\theta$ is the first-best surplus for type $\theta$. Because the low type acquires information at the first best ($\CFBL\ge\CaL>0$) and $\chi^{FB}(\thL)$ is feasible in the $\thH$-problem, $K_{\theta_H}\ge K_{\theta_L}+(\thL-\thH)\CFBL>K_{\theta_L}$. Therefore, the efficient investment cutoff is $\phi^{FB}=(\beta-\alpha)(K_{\theta_H}-K_{\theta_L})>0$.

\begin{proposition}[Underinvestment]\label{prop:under}
The principal never over-invests: $\phi_I\le\phi^{FB}$. Hence investment, when induced, is efficient; and on the interval $(\phi_I,\phi^{FB})$, nonempty in the Example (Section~\ref{sec:example}), investment is efficient yet not induced.
\end{proposition}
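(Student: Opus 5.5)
The plan is to show directly that for every $\phi>\phi^{FB}$ the principal strictly prefers deterrence, $V_I(\phi)<V_{NI}(\phi)$. By Proposition~\ref{prop:cutoff}, the principal prefers investment exactly on $\{\phi\le\phi_I\}$, so this gives $\phi_I\le\phi^{FB}$. The strict inequality matters because ties are broken toward investment. If \eqref{eq:PI} is infeasible, $V_I=-\infty$ and there is nothing to prove, so assume it is feasible. The argument pairs an upper bound on $V_I$ with a lower bound on $V_{NI}$, and both bounds meet at $K_{\theta_L}$.

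\emph{Upper bound on $V_I$.} For any contract feasible in \eqref{eq:PI}, write $S_\theta:=V(\chi(\theta))-\theta C(\chi(\theta))\le K_\theta$ and recall $U_\theta=T(\theta)-\theta C(\chi(\theta))$. The principal's payoff is then
\[
\beta S_H+(1-\beta)S_L-\bigl[\beta U_H+(1-\beta)U_L\bigr].
\]
Constraint \eqref{eq:ICI} gives $U_H\ge U_L+\phi/(\beta-\alpha)$, and \eqref{eq:IR} gives $U_L\ge0$. Hence $\beta U_H+(1-\beta)U_L\ge \beta\phi/(\beta-\alpha)$, and therefore
\[
V_I\le \beta K_{\theta_H}+(1-\beta)K_{\theta_L}-\frac{\beta\phi}{\beta-\alpha}.
\]
When $\phi>\phi^{FB}=(\beta-\alpha)(K_{\theta_H}-K_{\theta_L})$, the right-hand side is strictly below $\beta K_{\theta_H}+(1-\beta)K_{\theta_L}-\beta(K_{\theta_H}-K_{\theta_L})=K_{\theta_L}$.

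\emph{Lower bound on $V_{NI}$.} I would exhibit a contract feasible in \eqref{eq:PNI} that yields exactly $K_{\theta_L}$. Assign both types the low type's first-best experiment $\chi^{FB}(\thL)$ and pay both $\thL\CFBL$. Then $U_L=0$ and $U_H=(\thL-\thH)\CFBL$.
\begin{itemize}
\item \eqref{eq:IC} and \eqref{eq:IR} are immediate, since the experiment and payment are pooled.
\item (IC-NI) requires $(\beta-\alpha)(\thL-\thH)\CFBL\le\phi$. This follows from the inequality $K_{\theta_H}\ge K_{\theta_L}+(\thL-\thH)\CFBL$ noted before the proposition, which gives $(\beta-\alpha)(\thL-\thH)\CFBL\le\phi^{FB}<\phi$.
\end{itemize}
Each type contributes $V(\chi^{FB}(\thL))-\thL\CFBL=K_{\theta_L}$ to the principal, so $V_{NI}\ge K_{\theta_L}>V_I$.

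\emph{Remaining claims and main obstacle.} The consequence "investment, when induced, is efficient" is just $\phi\le\phi_I\le\phi^{FB}$. Nonemptiness of $(\phi_I,\phi^{FB})$ is deferred to the computation in Section~\ref{sec:example}, where one evaluates $V_I$ and $V_{NI}$ explicitly and finds $D(\phi)<0$ at some $\phi<\phi^{FB}$. The main step to get right is choosing the deterrence benchmark. The natural contract that gives the high type first best and excludes the low type is not enough; the pooled contract at the low type's first best is what makes the two bounds meet exactly at $K_{\theta_L}$. The strict inequality comes from $\beta>0$ multiplying the strict gap $\phi-\phi^{FB}$.
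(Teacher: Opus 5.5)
Your proposal is correct and follows the paper's proof essentially step for step. Both use the bound $V_I(\phi)\le \beta K_{\theta_H}+(1-\beta)K_{\theta_L}-\frac{\beta}{\beta-\alpha}\phi$ from \eqref{eq:ICI} and \eqref{eq:IR}, the pooling contract at $\chi^{FB}(\thL)$ giving $V_{NI}(\phi)\ge K_{\theta_L}$ for $\phi>\phi^{FB}$, the cutoff property to conclude $\phi_I\le\phi^{FB}$, and the Example for nonemptiness of $(\phi_I,\phi^{FB})$.
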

\begin{proof}
See Appendix~\ref{app:under}.
\end{proof}

Write $\eta_I:=\phi_I/[(\beta-\alpha)(\thL-\thH)]$ and $\eta_{FB}:=\phi^{FB}/[(\beta-\alpha)(\thL-\thH)]$ for the scaled investment cutoffs. Proposition~\ref{prop:under} gives $\eta_I\le\eta_{FB}$, while evaluating the high type's first-best experiment in the low type's objective gives $K_{\theta_L}\ge K_{\theta_H}-(\thL-\thH)\CFBH$, and hence $\eta_{FB}\le\CFBH$. Thus the high type receives first-best information whenever investment is induced: the upward distortion identified in Proposition~\ref{prop:dist} never occurs in an optimal contract.

The next lemma complements this upper bound with a lower bound, $\eta_I\ge\CaL$. It also shows that, at the cutoff, the principal's payoff from deterring investment equals $V_\alpha$, her optimal screening payoff under the pre-investment type distribution (Appendix~\ref{app:obs}). This equality will help characterize the underinvestment wedge.
\begin{lemma}[A lower bound on the investment cutoff]\label{lem:etaIlb} 
The cutoff satisfies $\phi_I\ge(\beta-\alpha)(\thL-\thH)\CaL$ (equivalently, in information units, $\eta_I\ge\CaL$). Hence $V_{NI}(\phi_I)=V_\alpha$.
\end{lemma}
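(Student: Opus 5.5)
The plan is to show that $D(\phi)=V_I(\phi)-V_{NI}(\phi)\ge 0$ at every $\phi$ with $\eta\le\CaL$. Proposition~\ref{prop:cutoff} then gives $\phi_I\ge(\beta-\alpha)(\thL-\thH)\CaL$. Ties go to investment, and $D$ is non-increasing, so the preferred-investment set is exactly $\{\phi\le\phi_I\}$. The argument converts an optimal \eqref{eq:PNI} contract into a feasible \eqref{eq:PI} contract that does at least as well for the principal.

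\emph{Step 1 (a \eqref{eq:PNI} optimum with a binding cap).} Fix $\eta\le\CaL$ and let $(\chi^*,T^*)$ solve \eqref{eq:PNI}, as described in Lemma~\ref{lem:PNI}. Write $\tau_L=\chi^*(\thL)$. Suppose $C(\tau_L)<\eta$. Let $\tau_\alpha$ be an unconstrained maximizer of $V-\muA C$, so $C(\tau_\alpha)=\CaL\ge\eta$. Three facts are used:
\begin{itemize}
  \item $\X$ is convex.
  \item $V$ and $C$ are linear in $\tau$.
  \item $V(\tau_\alpha)-\muA C(\tau_\alpha)\ge V(\tau_L)-\muA C(\tau_L)$.
\end{itemize}
Hence some mixture $\tau'=s\tau_\alpha+(1-s)\tau_L$ has $C(\tau')=\eta$ and a weakly higher objective. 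So $\tau'$ is also optimal in the capped problem. By Lemma~\ref{lem:PNI}(2) the transfers are pinned down by the experiments, so replacing $\tau_L$ with $\tau'$ gives another solution of \eqref{eq:PNI}. We may therefore assume $C(\tau_L)=\eta$.

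\emph{Step 2 (the same contract induces investment and does weakly better).} With the transfers of Lemma~\ref{lem:PNI}, the rents are $U_L=0$ and $U_H=(\thL-\thH)C(\tau_L)=(\thL-\thH)\eta$. Therefore $(\beta-\alpha)(U_H-U_L)=\phi$, so \eqref{eq:ICI} holds with equality. Since \eqref{eq:IC}--\eqref{eq:IR} are unchanged, the contract is feasible in \eqref{eq:PI}. Its \eqref{eq:PI} payoff minus its \eqref{eq:PNI} payoff equals
\[
(\beta-\alpha)\Bigl[\bigl(V(\chi^*(\thH))-T^*(\thH)\bigr)-\bigl(V(\tau_L)-T^*(\thL)\bigr)\Bigr].
\]
Substituting the transfers and $K_{\thH}=V(\chi^*(\thH))-\thH C(\chi^*(\thH))$ turns the bracket into
\[
K_{\thH}-(\thL-\thH)\eta-V(\tau_L)+\thL\eta \;=\; K_{\thH}-\bigl(V(\tau_L)-\thH C(\tau_L)\bigr)\ \ge 0 .
\]
The inequality holds because $\tau_L$ is feasible in the unconstrained $\thH$-problem. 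Hence $V_I(\eta)\ge V_{NI}(\eta)$, which establishes the cutoff bound.

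\emph{Step 3 ($V_{NI}(\phi_I)=V_\alpha$).} Since $\eta_I\ge\CaL$, the cap $C(\tau)\le\eta_I$ in Lemma~\ref{lem:PNI}(1b) does not exclude the unconstrained $\muA$-optimum, whose cost is $\CaL$. So the capped low-type problem attains $K(\muA)$. By Lemma~\ref{lem:PNI} the \eqref{eq:PNI} value is then
\[
\alpha K_{\thH}+(1-\alpha)K(\muA),
\]
which is the unconstrained screening value $V_\alpha$ under the pre-investment distribution (Appendix~\ref{app:obs}).

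The main obstacle is Step 1. Without a binding cap, the \eqref{eq:PNI} contract leaves \eqref{eq:ICI} slack in the wrong direction and cannot be carried over directly. The mixing argument handles this, relying on linearity of $V$ and $C$ in $\tau$ and on $\CaL\ge\eta$. The boundary case $\eta=\CaL$ is covered by taking $\tau_L=\tau_\alpha$ itself.
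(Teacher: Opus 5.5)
Your proof is correct and uses the same mechanism as the paper's proof. A contract whose low-type experiment costs exactly $\eta$ makes \eqref{eq:ICI} bind, so it is feasible in both programs; the net-gain inequality \eqref{eq:netgain} then shows that reweighting from $\alpha$ to $\beta$ cannot hurt the principal, and Proposition~\ref{prop:cutoff} delivers the bound. The only difference is that the paper checks this at the single point $\eta=\CaL$, using the unconstrained $\alpha$-screening contract directly; your Step~1 mixing argument (establishing $D\ge0$ for every $\eta\le\CaL$) is valid but unnecessary, because the monotonicity of $D$ you already invoke makes the endpoint check sufficient.
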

\begin{proof}
See Appendix~\ref{app:etaIlb}.
\end{proof}

The lemma places the cutoff in the region where the unconstrained screening benchmark under $\alpha$ remains feasible for the deterrence program. At $\eta_I=\CaL$, the deterrence constraint binds; at $\eta_I>\CaL$, it is slack. In either case, deterring investment entails no additional loss relative to that benchmark: $V_{NI}(\phi_I)=V_\alpha$.

Together, Proposition~\ref{prop:under} and Lemma~\ref{lem:etaIlb} bracket the cutoff between the low type's $\alpha$-cost and the efficient level:
\[
  (\beta-\alpha)(\thL-\thH)\CaL\ \le\ \phi_I\ \le\ \phi^{FB}.
\]

The next result quantifies underinvestment. When the low type's first-best information is feasible at the cutoff, the wedge $\phi^{FB}-\phi_I$ has an exact closed form; otherwise, that expression is a strict lower bound.

\begin{proposition}[Exact underinvestment wedge]\label{prop:wedge}
If the primitive inequality
\begin{equation}
  \beta K_{\theta_H}+(1-\beta)K_{\theta_L}-\beta(\thL-\thH)\CFBL\ \ge\ V_\alpha \tag{R}\label{eq:regimeR}
\end{equation}
holds (equivalently, $\eta_I\ge\CFBL$), then $\phi^{FB}-\phi_I=\frac{\beta-\alpha}{\beta}\,(V_\alpha-K_{\theta_L})$. If \eqref{eq:regimeR} fails (equivalently, $\CaL\le\eta_I<\CFBL$), then $\phi^{FB}-\phi_I>\frac{\beta-\alpha}{\beta}\,(V_\alpha-K_{\theta_L})$.
\end{proposition}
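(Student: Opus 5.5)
The plan is to compute $\phi_I$ from the indifference condition $V_I(\phi_I)=V_{NI}(\phi_I)$ and compare it with $\phi^{FB}=(\beta-\alpha)(K_{\theta_H}-K_{\theta_L})$. By Lemma~\ref{lem:etaIlb}, $V_{NI}(\phi_I)=V_\alpha$, so the cutoff is pinned down by $V_I(\eta_I)=V_\alpha$ in scaled units. This requires that the indifference actually holds at $\eta_I$ (see the last paragraph).

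\emph{Step 1: a linear upper envelope for $V_I$.} Define
\[
L(\eta):=\beta K_{\theta_H}+(1-\beta)K_{\theta_L}-\beta(\thL-\thH)\eta .
\]
I will show $V_I(\eta)\le L(\eta)$ for every $\eta\in[0,\CFBH]$, with equality whenever $\CFBL\le\eta\le\CFBH$.

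For the equality range, use Case~1 of Lemma~\ref{lem:PI}, which applies because $\eta\ge\CFBL\ge\CbL$. The constraint $C\le\eta$ is slack for the low type since $\eta\ge\CFBL$, and $C\ge\eta$ is slack for the high type since $\eta\le\CFBH$. Differentiability of $K$ gives optimal values $K_{\theta_H}$ and $K_{\theta_L}$. With the transfers in Lemma~\ref{lem:PI}, the payoff is
\[
\beta\bigl[K_{\theta_H}-(\thL-\thH)\eta\bigr]+(1-\beta)K_{\theta_L}=L(\eta).
\]

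For the inequality when $\eta<\CFBL$, treat the two cases of Lemma~\ref{lem:PI} separately.
\begin{itemize}
\item In Case~1, the low type's value is $\max\{V-\thL C: C\le\eta\}$, which is at most $K_{\theta_L}$. It is strictly smaller, because every unconstrained optimizer has cost $\CFBL>\eta$. The high type still contributes at most $\beta[K_{\theta_H}-(\thL-\thH)\eta]$, so $V_I(\eta)<L(\eta)$.
\item In Case~2, the principal's payoff equals $\beta[V(\chi(\thH))-\thH C(\chi(\thH))]+(1-\beta)[V(\chi(\thL))-\thL C(\chi(\thL))]-\beta(\thL-\thH)C(\chi(\thL))$. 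Each bracket is bounded by the corresponding $K$, and $C(\chi(\thL))\ge\eta$. Hence again $V_I(\eta)\le L(\eta)$, and the low-type bracket is strict for the same cost reason.
\end{itemize}
Since $\eta_I\le\eta_{FB}\le\CFBH$ (from Proposition~\ref{prop:under} and the remark after it), only this range matters.

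\emph{Step 2: the regime where \eqref{eq:regimeR} holds.} Condition \eqref{eq:regimeR} says exactly that $V_I(\CFBL)=L(\CFBL)\ge V_\alpha=V_{NI}$ at that point. By Proposition~\ref{prop:cutoff}, and since $D$ is non-increasing, this is equivalent to $\eta_I\ge\CFBL$. Then $\eta_I$ lies on the linear piece, so it solves $L(\eta_I)=V_\alpha$:
\[
\phi_I=(\beta-\alpha)(\thL-\thH)\eta_I=\frac{\beta-\alpha}{\beta}\bigl[\beta K_{\theta_H}+(1-\beta)K_{\theta_L}-V_\alpha\bigr].
\]
Subtracting this from $\phi^{FB}=(\beta-\alpha)(K_{\theta_H}-K_{\theta_L})$ gives
\[
\phi^{FB}-\phi_I=\frac{\beta-\alpha}{\beta}\bigl(V_\alpha-K_{\theta_L}\bigr).
\]

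\emph{Step 3: the regime where \eqref{eq:regimeR} fails.} Then $\CaL\le\eta_I<\CFBL$, using Lemma~\ref{lem:etaIlb} for the lower bound. Step~1 gives $V_\alpha=V_I(\eta_I)<L(\eta_I)$. Let $\eta^\dagger$ be the root of $L(\eta^\dagger)=V_\alpha$. Because $L$ is strictly decreasing, $\eta_I<\eta^\dagger$. The algebra of Step~2 applied to $\eta^\dagger$ shows that $(\beta-\alpha)(\thL-\thH)\eta^\dagger=\phi^{FB}-\frac{\beta-\alpha}{\beta}(V_\alpha-K_{\theta_L})$. Hence $\phi^{FB}-\phi_I>\frac{\beta-\alpha}{\beta}(V_\alpha-K_{\theta_L})$.

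\emph{Main obstacle.} The delicate point is justifying $V_I(\eta_I)=V_\alpha$, i.e.\ that the cutoff is attained with equality rather than being a point where $D$ merely jumps. On $[\CFBL,\CFBH]$ this follows from continuity of $L$. For $\eta<\CFBL$, I would first try to prove continuity of $V_I$ in $\eta$ using compactness of $\X$ and continuity of $V$ and $C$ (Berge's maximum theorem applied to the cost-constrained programs). Upper semicontinuity plus the tie-breaking rule at least gives $V_I(\eta_I)\ge V_\alpha$. Upper semicontinuity together with monotonicity gives the reverse inequality from the right. The strictness claim in Step~1 also needs every unconstrained optimizer at $\thL$ to have cost exactly $\CFBL$, which the differentiability assumption on $K$ supplies.
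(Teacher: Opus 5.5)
Your main line follows the paper's proof. You bound $V_I$ by the linear function $L$, show $V_I=L$ on $[\CFBL,\CFBH]$ via Case~1 of Lemma~\ref{lem:PI}, and show $V_I<L$ in Case~1 below $\CFBL$. You then use $V_{NI}=V_\alpha$ for $\eta\ge\CaL$ and compare $\eta_I$ with the root of $L=V_\alpha$; the algebra is right.

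The flawed piece is your ``main obstacle'' paragraph. The claim that upper semicontinuity plus monotonicity yields $V_I(\eta_I)\le V_\alpha$ from the right is false. For a non-increasing function, upper semicontinuity only forces left-continuity, so it cannot rule out a downward jump just to the right of $\eta_I$ (take the function equal to $1$ on $[0,1]$ and $0$ afterwards). Similarly, Berge's theorem with only compactness and continuity of $V$ and $C$ gives upper semicontinuity of the value. Full continuity needs lower hemicontinuity of the constraint correspondence. Here that comes from convexity (mixing a feasible experiment with $\dirac_{p_0}$), or equivalently from concavity of the value functions in $\phi$. The latter is how the paper obtains continuity of $D$ in the proof of Proposition~\ref{prop:cutoff}.

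Step~3 does not need the equality at all. Investment is weakly preferred at $\phi_I$ by Proposition~\ref{prop:cutoff}, and $V_{NI}(\phi_I)=V_\alpha$ by Lemma~\ref{lem:etaIlb}, so you already have $V_I(\eta_I)\ge V_\alpha$. Since $\eta_I\in[\CaL,\CFBL)\subseteq[\CbL,\CFBL)$ puts you in Case~1, the strict bound gives $L(\eta_I)>V_I(\eta_I)\ge V_\alpha=L(\eta^\dagger)$. Hence $\eta_I<\eta^\dagger$ with no continuity argument. This is slightly leaner than the paper, which first establishes $D(\phi_I)=0$.

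Equality is genuinely needed only in Step~2. Your continuity-of-$L$ argument works whenever $\eta_I<\CFBH$: then $L=V_I<V_{NI}=V_\alpha$ on $(\eta_I,\CFBH]$, so continuity of $L$ gives $L(\eta_I)\le V_\alpha$. In the knife-edge $\eta_I=\CFBH$, which forces $\eta_I=\eta_{FB}$, there is no right neighborhood on the linear piece. You then need $L(\eta_{FB})=K_{\theta_L}\le V_\alpha$, where the inequality comes from pooling both types at $\chi^{FB}(\thL)$ with payment $\thL\CFBL$.

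The paper avoids this case split. It places the root $\phi^*$ of $L-V_\alpha$ directly in $[(\beta-\alpha)(\thL-\thH)\CFBL,\phi^{FB}]$, using \eqref{eq:regimeR} for the left end and $V_\alpha\ge K_{\theta_L}$ for the right end. On that interval $D=L-V_\alpha$, which pins down $\phi_I=\phi^*$ together with $\phi_I\le\phi^{FB}$.
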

\begin{proof}
See Appendix~\ref{app:wedge}.
\end{proof}

Under~\eqref{eq:regimeR}, the low type receives first-best information at the cutoff, and the underinvestment wedge equals $(\beta-\alpha)/\beta$ times the principal's payoff advantage from optimal screening over pooling, $V_\alpha-K_{\theta_L}$. This payoff difference is not itself an information rent: both contracts generally leave rent to the high type. The information cap binds without distortion at $\eta_I=\CFBL$ and is slack for $\eta_I>\CFBL$. When~\eqref{eq:regimeR} fails, the cap instead forces information cost below the first-best level, $C(\chi^*(\thL))=\eta_I<\CFBL$, making the underinvestment wedge strictly larger than the closed-form benchmark.

\begin{remark}[The payoff gain from screening]
\label{rem:wedge}
Let $W_L^\alpha:=V(\chi^\alpha(\thL))-\thL\CaL$ be the total surplus generated by the low type's assigned experiment. Then
\[
\begin{aligned}
  V_\alpha-K_{\theta_L}
  ={}&\alpha(K_{\theta_H}-K_{\theta_L})
  -(1-\alpha)(K_{\theta_L}-W_L^\alpha)
      -\alpha(\thL-\thH)\CaL.
\end{aligned}
\]

The first term reflects the difference between the two first-best surplus levels. The remaining terms subtract the low type's allocation loss and the expected screening rent. Together they give the principal's payoff advantage over pooling. The investment cutoff equals the efficient threshold precisely when this advantage vanishes, $V_\alpha=K_{\theta_L}$.
\end{remark}

The investment cutoff also responds monotonically to the effectiveness of investment and to the researcher's initial type distribution, as the next proposition shows.

\begin{proposition}[Comparative statics of the investment cutoff]\label{prop:cutoffcs}
The investment cutoff $\phi_I$ is \emph{strictly} increasing in the post-investment high-type probability $\beta$ and \emph{strictly} decreasing in the pre-investment prior $\alpha$.
\end{proposition}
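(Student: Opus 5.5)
The plan is to work in information units. By Lemma~\ref{lem:etaIlb}, the cutoff is the largest sunk cost at which the investment program still reaches the deterrence benchmark $V_\alpha$, and $\phi_I=(\beta-\alpha)(\thL-\thH)\eta_I$. Strictness of both comparative statics should then come from the factor $(\beta-\alpha)$, because $\eta_I\ge\CaL>0$. It will be enough to show that $\eta_I$ is weakly increasing in $\beta$ and weakly decreasing in $\alpha$. Throughout, write $\Delta:=\thL-\thH$.

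\emph{Step 1: reduced form of $V_I$ near the cutoff.} By Lemma~\ref{lem:etaIlb} and $\CaL\ge\CbL$, every $\eta\ge\CaL$ falls in Case~1 of Lemma~\ref{lem:PI}. The argument after Proposition~\ref{prop:under} gives $\eta_I\le\eta_{FB}\le\CFBH$. On $\eta\le\CFBH$ the lower bound $C(\tau)\ge\eta$ does not bind for the high type, so his value is $K_{\thH}$. Hence, for $\eta\in[\CaL,\eta_{FB}]$,
\[
V_I=G(\eta,\beta):=\beta\bigl[K_{\thH}-\Delta\eta\bigr]+(1-\beta)B(\eta),\qquad B(\eta):=\max_{\tau\in X}\{V(\tau)-\thL C(\tau):C(\tau)\le\eta\}.
\]
Here $B$ depends on neither $\alpha$ nor $\beta$. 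The investment decision is then the comparison $G(\eta,\beta)\ge V_\alpha$. By Proposition~\ref{prop:cutoff}, the $\eta$'s satisfying it form a lower interval, and its upper end is $\eta_I$. I must check that translating the cutoff from $\phi$ to $\eta$ is legitimate at fixed $(\alpha,\beta)$; it is, since the map is linear.

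\emph{Step 2: $\beta$.} Compute $\partial G/\partial\beta=K_{\thH}-\Delta\eta-B(\eta)\ge K_{\thH}-K_{\thL}-\Delta\eta$, using $B\le K_{\thL}$. This is nonnegative exactly when $\eta\le\eta_{FB}=(K_{\thH}-K_{\thL})/\Delta$. So on the relevant range $G$ is weakly increasing in $\beta$, while $V_\alpha$ does not depend on $\beta$. Fix $\beta'>\beta$. Then $G(\eta_I(\beta),\beta')\ge G(\eta_I(\beta),\beta)\ge V_\alpha$, which gives $\eta_I(\beta')\ge\eta_I(\beta)$. Combined with $\beta'-\alpha>\beta-\alpha$ and $\eta_I(\beta)>0$, this yields $\phi_I(\beta')>\phi_I(\beta)$.

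\emph{Step 3: $\alpha$.} Here $G$ is independent of $\alpha$, so everything runs through $V_\alpha$. Because $V_\alpha$ is a maximum of functions affine in $\alpha$, envelope and convexity reasoning gives a slope equal to the principal's high-minus-low net payoff at the optimal screening contract. That slope is $(K_{\thH}-\Delta\CaL)-W_L^\alpha$. Using $K_{\thH}\ge K_{\thL}+\Delta\CFBL$, $\CFBL\ge\CaL$ and $W_L^\alpha\le K_{\thL}$, the slope is nonnegative. So $V_\alpha$ is nondecreasing in $\alpha$, the super-level set $\{\eta:G(\eta,\beta)\ge V_\alpha\}$ shrinks, and $\eta_I$ is weakly decreasing in $\alpha$. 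Since $\beta-\alpha$ strictly decreases and $\eta_I>0$, $\phi_I$ strictly decreases.

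The main obstacle is Step 1. I need to confirm that the cutoff is exactly the crossing $G(\eta,\beta)=V_\alpha$, so that ties and the case $\eta_I=\CaL$, where (IC-NI) binds, cause no gap. I also need to confirm that the $\beta'$ and $\alpha'$ comparisons stay inside $[\CaL,\eta_{FB}]$, where both Case~1 and the unconstrained high type apply. For the $\alpha$ direction this region moves with $\alpha$, since $\CaL$ changes. I would handle it by showing directly that $\eta_I(\alpha')\ge\CaL(\alpha')$ via Lemma~\ref{lem:etaIlb} applied at $\alpha'$. If that ever fails, I would fall back on comparing values through the nonincreasing $D(\phi)$ instead.
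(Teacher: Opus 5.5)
Your proposal is correct and follows the paper's proof essentially step for step. The paper also characterizes the cutoff as the upper end of $\{\eta: V_I(\eta)\ge V_\alpha\}=[0,\eta_I]$. The step you flag follows from monotonicity of $V_I$, $V_{NI}(\eta)=V_\alpha$ for $\eta\ge\CaL$, and Lemma~\ref{lem:etaIlb} applied at each parameter value. The paper then uses the Case-1 expression, affine in $\beta$ with nonnegative slope for $\eta\le\eta_{FB}$, and monotonicity of $V_\alpha$ for the $\alpha$ direction, with strictness coming from the factor $\beta-\alpha$ and $\eta_I\ge\CaL>0$. The only difference is that you re-derive that monotonicity of $V_\alpha$ by an envelope/subgradient argument, whereas the paper cites Lemma~\ref{lem:Vf}.
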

\begin{proof}
See Appendix~\ref{app:cutoffcs}.
\end{proof}
The comparative statics follow from $\phi_I=(\beta-\alpha)(\thL-\thH)\eta_I$. Raising $\beta$ weakly increases $\eta_I$, while raising $\alpha$ weakly decreases it. Since $\eta_I\ge\CaL>0$, the corresponding strict change in $\beta-\alpha$ makes the comparative statics of $\phi_I$ strict.

Under~\eqref{eq:regimeR}, both types receive first-best information at the cutoff, and $V_I(\eta_I)=V_\alpha$ gives
\[
  \eta_I=
  \frac{\beta K_{\theta_H}+(1-\beta)K_{\theta_L}-V_\alpha}
       {\beta(\thL-\thH)}.
\]

Outside this regime, the proof establishes the same comparative statics by comparing the non-increasing function $V_I(\eta)$ with the screening value $V_\alpha$, without requiring a closed-form expression for $\eta_I$.

\section{An Example: Lumpy Information and Prior Sensitivity}\label{sec:example}

We now apply our general results to a specific contracting environment and fully characterize the optimal contract. Beyond illustrating how to solve for the optimum, the example will also reveal several distinctive features that are difficult to see from the general analysis alone.

Throughout this section the state is binary, $\Omega=\{\omega_0,\omega_1\}$. We identify a belief $p$ with the scalar $p(\omega_1)\in[0,1]$, so that $p_0\in[0,1]$ and $v,c$ are functions on $[0,1]$. Fix the action space $A=\{l,m,r\}$ and the type space $\Theta=\{\thL,\thH\}$, and let $\phi\ge0$ be free. The parameters are $p_0=\tfrac12$; $\thH=\tfrac12$, $\thL=1$; $\alpha=\tfrac14$, $\beta=\tfrac34$ (so $\muA=\tfrac76$, $\muB=\tfrac52$); payoffs $u(l,\cdot)=(1,-2)$, $u(m,\cdot)=(0,0)$, $u(r,\cdot)=(-2,1)$ (here and throughout, payoff vectors are ordered $(\omega_0,\omega_1)$); and mutual-information cost $c(p)=H(p_0)-H(p)$ with $H(p)=-\sum_{\omega}p(\omega)\log p(\omega)$. The decision problem and information cost are those of the rational-inattention example in \citet{azrieli2021constrained}, whose characterization of the underlying information-acquisition problem serves as a building block for our contracting analysis.

For interpretation, suppose the principal must choose among three policy alternatives --- two opposing policies ($l$ and $r$) and a safe option ($m$) --- whose relative merits depend on the efficacy of a new vaccine. She hires a researcher who can invest before learning his cost of studying the vaccine. We assume that the study protocol and its implementation are verifiable, so the experiment is contractible, whereas investment remains private.

The computed reference costs are $\CbL=0$, $\CaL\approx0.437$, $\CFBL\approx0.502$, $\CFBH\approx0.676$. The knife-edge multiplier $\theta^{*}$ is the cost multiplier at which the optimal experiment switches between the uninformative experiment and the extreme (widest) two-posterior spread. Formally, $\theta^{*}=\sup\Lambda$ as defined in Appendix~\ref{app:mes3}. The resulting constants are
\[
\begin{aligned}
  \theta^{*}&=\tfrac{1}{\ln((1+\sqrt5)/2)}, & p^{*}&=[1+\exp(3/\theta^{*})]^{-1}\approx0.191,\\
  \eta_C&=H(\tfrac12)-H(p^{*})\approx0.206, & \eta_I&=\tfrac{\phi_I}{(\beta-\alpha)(\thL-\thH)}\approx0.555,
\end{aligned}
\]
with $\eta_C\in(\CbL,\CaL)$ and $\eta_I\in(\CFBL,\CFBH)$.

\begin{claimx}[Optimal experiments]\label{claim:example}
For $\thH$: $\tau(l)=\tau(r)=\tfrac12$ and $p_l=1-p_r=[1+\exp(3/\thH)]^{-1}$. For $\thL$, depending on $\eta$:
\begin{itemize}
  \item $\eta\in[0,\eta_C)$: $\tau(l)=\tau(r)=\frac{\eta}{2[H(1/2)-H(p^{*})]}$, $\tau(m)=1-\frac{\eta}{H(1/2)-H(p^{*})}$, $p_l=1-p_r=p^{*}$, $p_m=\tfrac12$;
  \item $\eta\in[\eta_C,\CFBL)$: $\tau(l)=\tau(r)=\tfrac12$, posteriors $p_l=1-p_r$ set by $H(\tfrac12)-H(p_l)=\eta$;
  \item $\eta\in[\CFBL,\eta_I]$: $\tau(l)=\tau(r)=\tfrac12$, $p_l=1-p_r=[1+\exp(3/\thL)]^{-1}$;
  \item $\eta>\eta_I$: $\tau(l)=\tau(r)=\tfrac12$, $p_l=1-p_r=[1+\exp(3/\muA)]^{-1}$.
\end{itemize}
\end{claimx}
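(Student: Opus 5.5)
Since $\eta_I\in(\CFBL,\CFBH)$, the principal induces investment exactly when $\eta\le\eta_I$. In that region the experiments come from Lemma~\ref{lem:PI}; for $\eta>\eta_I$ they come from Lemma~\ref{lem:PNI}. The plan is to reduce everything to the unconstrained or cost-constrained maximization of $\mathbb E_\tau[v(p)-\lambda c(p)]$ and then solve these binary-state problems explicitly, relying on the concavification characterization from \citet{azrieli2021constrained}.

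\textbf{Step 1 (unconstrained problems).} At $p_0=\tfrac12$, by symmetry and the concavification of $v(p)+\lambda H(p)$, the optimum for $\lambda<\theta^*$ is a symmetric two-point split. Its posteriors $p_l=1-p_r$ satisfy the first-order (tangency) condition for the affine pieces $u(l,\cdot)$ and $u(r,\cdot)$. Because the payoff difference across the two states is $3$, this condition gives $\lambda\ln\frac{1-p_l}{p_l}=3$, so $p_l=[1+\exp(3/\lambda)]^{-1}$. Action $m$ is dropped because the chord through the $l$- and $r$-tangency points lies above $v+\lambda H$ at $\tfrac12$ whenever $\lambda<\theta^*$, which is how $\theta^*$ is defined in Appendix~\ref{app:mes3}.

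\textbf{Step 2 (the high type).} Since $\thH,\thL,\muA<\theta^*\approx2.08$, Step~1 yields the $\thH$ experiment directly for the no-investment program (Lemma~\ref{lem:PNI}(1a)). Under investment, Proposition~\ref{prop:dist} together with $\eta\le\eta_I<\CFBH$ shows the constraint $C\ge\eta$ is slack, so the same experiment applies. For $\eta>\eta_I$, Lemma~\ref{lem:PNI}(1b) asks for the $\muA$-optimum subject to $C\le\eta$. Because $\eta>\eta_I>\CaL$, the cap is slack and Step~1 with $\lambda=\muA$ gives the last bullet.

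\textbf{Step 3 (the low type under investment).} The case $\eta\ge\CbL=0$ always holds, so Case~1 of Lemma~\ref{lem:PI} applies: maximize $V-\thL C$ subject to $C\le\eta$. If $\eta\ge\CFBL$, the cap is slack and Step~1 with $\lambda=\thL$ gives the third bullet. If $\eta<\CFBL$, a strict concavity/Lagrangian argument shows the cap binds. The problem then becomes maximizing $V(\tau)$ subject to $C(\tau)=\eta$, that is, maximizing $V-\mu C$ with a multiplier $\mu\ge\thL$ chosen so that the cost equals $\eta$.

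Two subcases follow. For $\eta\in[\eta_C,\CFBL)$, the symmetric split with $H(\tfrac12)-H(p_l)=\eta$ is optimal. To verify this, check that the map $\mu\mapsto p_l(\mu)=[1+\exp(3/\mu)]^{-1}$ is continuous and reaches $p^*$ at $\mu=\theta^*$. Then $\eta$ ranges over $[\eta_C,\CFBL)$ as $\mu$ ranges over $(\thL,\theta^*]$.

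For $\eta<\eta_C$, the multiplier sticks at $\theta^*$. There the concave envelope of $v-\theta^*c$ is flat across $\{p^*,\tfrac12,1-p^*\}$, so any mixture of the extreme spread and the point mass at $\tfrac12$ is optimal. The cost constraint then pins down the weight $\tau(l)+\tau(r)=\eta/\eta_C$, which gives the first bullet.

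\textbf{Main obstacle.} The hardest part is justifying this subcase: showing that no other experiment with cost $\eta$ does better. I would handle it with a Lagrangian/duality argument. Weak duality gives $V(\tau)\le K(\theta^*)+\theta^*\eta$ for every feasible $\tau$, and the proposed mixture attains this bound because it lies in the flat region of the envelope. The same duality argument also delivers optimality in the middle subcase.
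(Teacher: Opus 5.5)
Your proposal is correct and follows essentially the same route as the paper. It reduces the problem via the cutoff and Lemmas~\ref{lem:PNI}--\ref{lem:PI}: $\CbL=0$ places the low type in Case~1, and $\eta_I<\CFBH$ keeps the high type at his first best. It then solves the resulting unconstrained and capped problems by concavification. The paper cites Lemma~\ref{lem:azrieli} for the capped regimes and uses the same weak-duality bound $V(\tau)\le K_{\theta^{*}}+\theta^{*}\eta=\theta^{*}\eta$ that you propose.

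The paper additionally records that the three-posterior solution is unique, and your bound delivers this too. Any maximizer must be $\theta^{*}$-optimal, so it is supported on $\{p^{*},\tfrac12,1-p^{*}\}$, and Bayes plausibility together with $C(\tau)=\eta$ then pins down the masses.
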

\begin{proof}
See Appendix~\ref{app:example}.
\end{proof}

In words, the high type always runs his first-best experiment, while the low type's experiment passes through four regimes as the budget $\eta$ grows: a three-posterior straddle $\{p^{*},\tfrac12,1-p^{*}\}$ for $0<\eta<\eta_C$; a symmetric two-posterior spread pinned by the budget, $H(\tfrac12)-H(p_l)=\eta$, for $\eta\in[\eta_C,\CFBL)$; the low type's own first-best two-posterior experiment for $\eta\in[\CFBL,\eta_I]$; and the rent-inflated $\muA$-experiment for $\eta>\eta_I$.

\begin{observation}[Information acquisition]\label{obs:info}
(1) For $0<\eta<\eta_C$, the optimal contract assigns $\thL$ an experiment inducing \emph{three} posteriors, and no optimal experiment uses fewer than three. By contrast, unconstrained information design admits an optimal experiment with at most as many posteriors as states. The additional posterior is consistent with the support bound for constrained information design: the number of states plus the number of constraints \citep{doval2024constrained}. (2) For $\eta\in[\CFBL,\eta_I]$, the optimal contract achieves first-best information for \emph{both} types. In Case~1 of Lemma~\ref{lem:PI}, inducing investment fixes the high type's rent at $(\thL-\thH)\eta$. Varying the low type's experiment within the cap $C(\chi(\thL))\le\eta$ therefore does not change this rent. The principal consequently chooses the low type's experiment using the physical cost parameter $\thL$, rather than the rent-inflated multiplier $\muA$, subject to the cap. The cap binds without distortion at $\eta=\CFBL$ and is slack for $\eta>\CFBL$. A larger sunk cost loosens this cap, progressively reducing the low type's distortion until investment ceases beyond $\eta_I$.
\end{observation}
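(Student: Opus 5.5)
Part (2) is essentially Case~1 of Lemma~\ref{lem:PI}, so the plan is to extract it from that lemma and Claim~\ref{claim:example}. Part (1) requires a separate argument about the cost-capped problem for $\thL$.

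\textbf{Part (2).} For $\eta\in[\CFBL,\eta_I]$ we have $\eta\ge\CFBL\ge\CbL=0$, so Case~1 of Lemma~\ref{lem:PI} applies. Since $\eta\le\eta_I$, investment is induced (Proposition~\ref{prop:cutoff}), and the optimal contract is the one solving \eqref{eq:PI}.

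\emph{High type.} Because $\eta_I<\CFBH$, the unconstrained $\thH$-optimum, whose cost is $\CFBH$, satisfies $C\ge\eta$. Hence it solves (1a), consistent with Proposition~\ref{prop:dist}(2).

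\emph{Low type.} In (1b) the objective uses the multiplier $\thL$, and the transfer to $\thH$ equals $\thH C(\chi(\thH))+(\thL-\thH)\eta$, which does not depend on $\chi(\thL)$. The unconstrained $\thL$-optimum has cost $\CFBL\le\eta$, so the cap $C(\tau)\le\eta$ is feasible for it and it solves (1b). The cap binds exactly at $\eta=\CFBL$ and is slack above. This also explains the monotone reduction in distortion, matching the third bullet of Claim~\ref{claim:example}.

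\textbf{Part (1).} For $0<\eta<\eta_C$ we have $\eta<\eta_I$, so investment is induced. Since $\eta<\CFBL$, the low type solves the $\thL$-problem capped at $C\le\eta$, with the cap binding. Claim~\ref{claim:example} exhibits an optimum with support $\{p^*,\tfrac12,1-p^*\}$. It remains to show that no optimum has support of size at most two.

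\emph{Lagrangian reduction.} By concavity of the capped problem in $\tau$, there is a multiplier $\lambda\ge\thL$ such that every optimum maximizes $\mathbb E_\tau[v-\lambda c]$ and has cost exactly $\eta$. The value $\eta<\eta_C$ lies strictly between the costs of the two optima at the knife-edge $\theta^*$, namely $0$ and $\eta_C$. So the effective multiplier must be $\lambda=\theta^*$: for $\lambda<\theta^*$ every optimum has cost at least $\eta_C$, and for $\lambda>\theta^*$ the unique optimum is uninformative. This is the step I expect to be the main obstacle. It requires $\lambda\mapsto$ cost to be monotone (Lemma~\ref{lem:mono}) with a jump exactly at $\theta^*$, which I would take from the definition of $\Lambda$ in Appendix~\ref{app:mes3}.

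\emph{Support restriction.} At $\lambda=\theta^*$, the concave envelope of $v-\theta^*c$ touches the function only at $p^*$, $\tfrac12$, and $1-p^*$. This follows from the strict convexity of $c$ combined with the piecewise-linear $v$, where $v=0$ on the middle region and the tangency holds by construction. Hence any optimum is supported on this set.

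\emph{Ruling out two points.} Consider the two-point Bayes-plausible candidates with mean $\tfrac12$. The pair $\{p^*,1-p^*\}$ has cost $\eta_C\ne\eta$. Any pair containing $\tfrac12$ together with another support point cannot average to $\tfrac12$ unless it is degenerate, i.e.\ the point mass at $\tfrac12$, which has cost $0\ne\eta$. So no optimum uses fewer than three posteriors. The comparison with unconstrained design follows from Carathéodory, and the count $|\Omega|+1=3$ matches the bound of \citet{doval2024constrained}.
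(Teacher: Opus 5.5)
Your proposal is correct and follows the paper's argument. For part (2), you use Case~1 of Lemma~\ref{lem:PI} (since $\CbL=0$) with first-best experiments feasible on $[\CFBL,\eta_I]$; for part (1), you use a binding cap, reduce to $\theta^{*}$-optimality, and apply Bayes plausibility on the contact set $\{p^{*},\tfrac12,1-p^{*}\}$, which rules out every support of size at most two because those have cost $0$ or $\eta_C$.

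The only difference is how you identify $\theta^{*}$ as the effective multiplier. You use strong duality plus the cost jump; that jump follows from Lemma~\ref{lem:mono} together with the $\theta^{*}$-optimality of both $\delta_{1/2}$ and the $\{p^{*},1-p^{*}\}$ spread, not from the definition of $\Lambda$ alone. The paper instead notes that $v-\theta^{*}c\le0$ gives $V(\tau)\le\theta^{*}C(\tau)\le\theta^{*}\eta$ for every feasible $\tau$, and the three-point experiment attains this bound, so every optimizer lies on the contact set without any duality argument.
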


\begin{observation}[Investment]\label{obs:invest}
(1) The optimal contract induces investment if and only if $\eta\le\eta_I$. (2) $\eta_I<\eta_{FB}$, where $\eta_{FB}=\frac{\phi^{FB}}{(\beta-\alpha)(\thL-\thH)}\approx0.598$. Hence there is a nonempty interval $(\eta_I,\eta_{FB})$ on which efficient investment is \emph{not} induced, so the inequality $\phi_I\le\phi^{FB}$ of Proposition~\ref{prop:under} is strict here.
\end{observation}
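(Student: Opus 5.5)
The plan is to verify Observation~\ref{obs:invest} directly in the example, combining the cutoff structure with the numbers already computed.

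\emph{Part (1).} By Proposition~\ref{prop:cutoff}, the principal induces investment if and only if $\phi\le\phi_I$. Since $\eta=\phi/[(\beta-\alpha)(\thL-\thH)]$ is a strictly increasing linear rescaling of $\phi$, this is equivalent to $\eta\le\eta_I$.

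To locate $\eta_I$ concretely, I will use Proposition~\ref{prop:wedge}. Claim~\ref{claim:example} puts $\eta_I\in(\CFBL,\CFBH)$, so regime~\eqref{eq:regimeR} holds. Both types then receive first-best information at the cutoff, and the closed form gives
\[
\eta_I=\frac{\beta K_{\theta_H}+(1-\beta)K_{\theta_L}-V_\alpha}{\beta(\thL-\thH)}.
\]
I will evaluate each ingredient from the explicit logit-type experiments. For binary states with mutual-information cost and the symmetric payoffs, the optimum at multiplier $\lambda$ splits $p_0=\tfrac12$ into $p=[1+e^{3/\lambda}]^{-1}$ and $1-p$. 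This gives $K_\lambda=v(p)-\lambda\,(H(\tfrac12)-H(p))$, where $v(p)=1-3p$ for $p\le\tfrac13$.

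The screening value is
\[
V_\alpha=\alpha K_{\theta_H}+(1-\alpha)K_{\muA}-\alpha(\thL-\thH)\CaL,
\]
which follows from Lemma~\ref{lem:PNI} with a slack cap, i.e.\ the unconstrained screening program. I will plug in $\thH=\tfrac12$, $\thL=1$, $\muA=\tfrac76$ and $\alpha=\tfrac14$, $\beta=\tfrac34$. Inserting these values into the closed form should reproduce $\eta_I\approx0.555$.

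\emph{Part (2).} I will use $\phi^{FB}=(\beta-\alpha)(K_{\theta_H}-K_{\theta_L})$, so that
\[
\eta_{FB}=\frac{K_{\theta_H}-K_{\theta_L}}{\thL-\thH}=2\bigl(K_{1/2}-K_1\bigr),
\]
evaluated from the same formulas; this should give $\approx0.598$. Strictness is cleaner to get from Proposition~\ref{prop:wedge}: under~\eqref{eq:regimeR},
\[
\phi^{FB}-\phi_I=\frac{\beta-\alpha}{\beta}\bigl(V_\alpha-K_{\theta_L}\bigr).
\]
So it suffices to show $V_\alpha>K_{\theta_L}$, meaning optimal screening strictly beats pooling at $\thL$.

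This can be argued without numerics. Pooling both types on $\chi^{FB}(\thL)$ with transfer $\thL\CFBL$ is feasible for the screening program and yields exactly $K_{\theta_L}$. The optimal screening allocation, however, assigns the high type the $\thH$-experiment, which differs in cost from $\CFBL$ because $\CFBH\approx0.676>\CFBL\approx0.502$. Since the screening optimizers are unique in the binary case (Lemma~\ref{lem:binary-unique}), the pooling contract is not optimal, and hence $V_\alpha>K_{\theta_L}$ strictly. Therefore $\eta_I<\eta_{FB}$, and the interval $(\eta_I,\eta_{FB})$ is nonempty.

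\emph{Main obstacle.} The hardest step is certifying that regime~\eqref{eq:regimeR} actually holds. That requires the numerical evaluation of $K_\lambda$ at $\lambda\in\{\tfrac12,1,\tfrac76\}$ together with $\CaL$, and checking the inequality with enough margin; the approximations above leave about $0.05$. If that margin were doubtful, I would fall back on the second branch of Proposition~\ref{prop:wedge}. It gives the strict inequality $\phi^{FB}-\phi_I>\frac{\beta-\alpha}{\beta}(V_\alpha-K_{\theta_L})\ge0$ regardless of regime, so the strictness claim in part (2) survives either way.
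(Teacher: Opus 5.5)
Your part~(1) matches the paper: rescale Proposition~\ref{prop:cutoff}. Your route to strictness in part~(2), however, differs from the paper's. The paper simply compares computed constants, $\eta_I\approx0.555$ (from the closed form valid under \eqref{eq:regimeR}) against $\eta_{FB}\approx0.598$. The proof of Proposition~\ref{prop:under} cites exactly this comparison. You argue structurally instead. Both branches of Proposition~\ref{prop:wedge} give $\phi^{FB}-\phi_I\ge\frac{\beta-\alpha}{\beta}(V_\alpha-K_{\theta_L})$. Moreover, $V_\alpha>K_{\theta_L}$: pooling on $\chi^{FB}(\thL)$ gives the high type an experiment of cost $\CFBL\neq\CFBH$, while every screening optimum assigns him a $\thH$-optimizer (Lemma~\ref{lem:PNI} with a slack cap). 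This is valid and buys robustness: you need not certify \eqref{eq:regimeR} or compute $\eta_I$ precisely. It also explains the result: underinvestment is strict whenever the two types' first-best experiments differ in cost. The numeric route, in turn, locates $\eta_I\in(\CFBL,\CFBH)$, which the rest of the Example uses. For the Observation itself only $\eta_{FB}\approx0.598$ must be computed, and your $K_\lambda$ formula gives it correctly.

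There is one concrete error to fix. Your formula $V_\alpha=\alpha K_{\theta_H}+(1-\alpha)K_{\muA}-\alpha(\thL-\thH)\CaL$ double-counts the high type's rent, which is already priced into the inflated multiplier $\muA$. The correct identity is \eqref{eq:Valpha}, $V_\alpha=\alpha K_{\theta_H}+(1-\alpha)K_{\muA}$. Your expression would be right only if $K_{\muA}$ were replaced by the low type's surplus $V(\chi^\alpha(\thL))-\thL\CaL$. As written, you would obtain $V_\alpha\approx0.317<K_{\theta_L}\approx0.355$, contradicting your own strictness step. The closed form would then return $\eta_I\approx0.70>\eta_{FB}$ rather than $0.555$. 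With the correct identity, $V_\alpha\approx0.372$ and $\eta_I\approx0.555$, and \eqref{eq:regimeR} holds with $\eta_I-\CFBL\approx0.05$.

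The correct identity also shortens your strictness step to one line. Since $\alpha\thH+(1-\alpha)\muA=\thL$, we have $K_{\theta_L}=\alpha[V(\chi^{FB}(\thL))-\thH\CFBL]+(1-\alpha)[V(\chi^{FB}(\thL))-\muA\CFBL]<\alpha K_{\theta_H}+(1-\alpha)K_{\muA}=V_\alpha$. The inequality is strict in the first bracket because $\CFBL\neq\CFBH$ and $K$ is differentiable at $\thH$.
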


Figure~\ref{fig:invest} illustrates these observations. It plots the principal's expected payoff under investment, $V_I(\eta)$, and under no investment, $V_{NI}(\eta)$. The two curves intersect at the cutoff $\eta_I$, and the shaded interval $(\eta_I,\eta_{FB})$ is the region in which efficient investment is not induced. Figure~\ref{fig:welfare} decomposes the resulting surplus, net of the investment cost $\phi$. While investment is induced, the principal's payoff falls as the required rent rises. Above $\eta_I$, her payoff is constant in this example. The gap between total welfare and her payoff is the researcher's expected rent net of any investment cost incurred. As $\eta$ rises above $\eta_I$, investment ceases, and total welfare and the researcher's payoff drop.

\begin{figure}[tbp]
\centering
\includegraphics[width=0.6\textwidth]{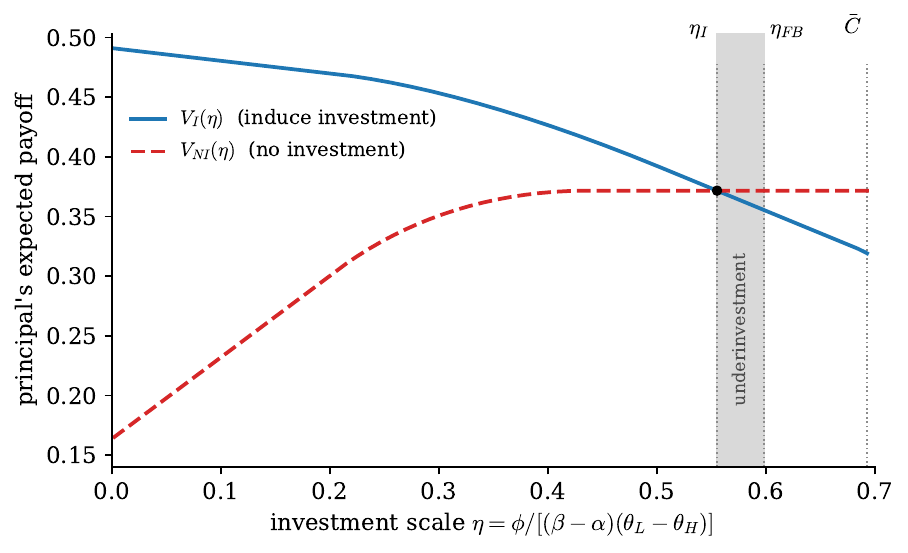}
\caption{The principal's expected payoffs under investment, $V_I(\eta)$ (solid), and under no investment, $V_{NI}(\eta)$ (dashed), as functions of the investment scale $\eta$. Investment is induced where $V_I\ge V_{NI}$, that is $\eta\le\eta_I$. The curves cross at $\eta_I\approx0.555$. The shaded interval $(\eta_I,\eta_{FB})$, with $\eta_{FB}\approx0.598$, is where investment is efficient yet not induced (Proposition~\ref{prop:under}).}
\label{fig:invest}
\end{figure}

\begin{figure}[tbp]
\centering
\includegraphics[width=0.6\textwidth]{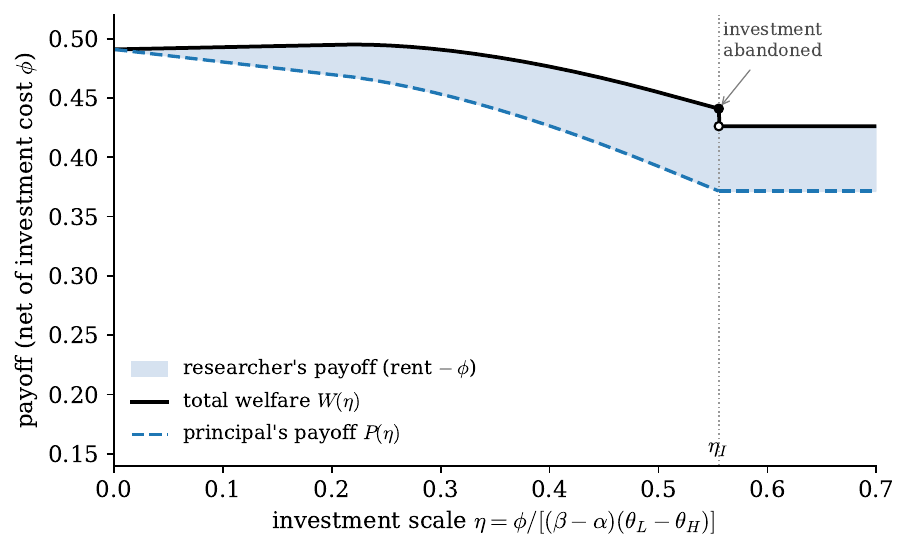}
\caption{Welfare decomposition. The solid line is total welfare $W(\eta)$, the dashed line the principal's payoff
$P(\eta)=\max\{V_I,V_{NI}\}$, and the shaded gap the researcher's expected rent net of any investment cost incurred. As $\eta$ rises above $\eta=\eta_I$, the selected contract switches to deterrence: total welfare and the researcher's payoff drop discontinuously (closed and open circles mark the left and right limits), while the principal's payoff is continuous. 
}
\label{fig:welfare}
\end{figure}

\subsection{Minimum efficient scale and multi-posterior experiments}\label{sec:info}
The three-posterior feature of Observation~\ref{obs:info}(1) is not incidental. It is driven by a gap in the costs of the optimal experiments. We derive this gap and then use it to characterize exactly when the multi-posterior regime arises. First, a \emph{minimum efficient scale} makes information \emph{lumpy}. This lumpiness then creates a cost gap in the reference costs that anchor the optimal contract. The key is which side of this gap contains the low type's post-investment screening benchmark cost $\CbL$.

\begin{figure}[tbp]
\centering
\includegraphics[width=0.8\textwidth]{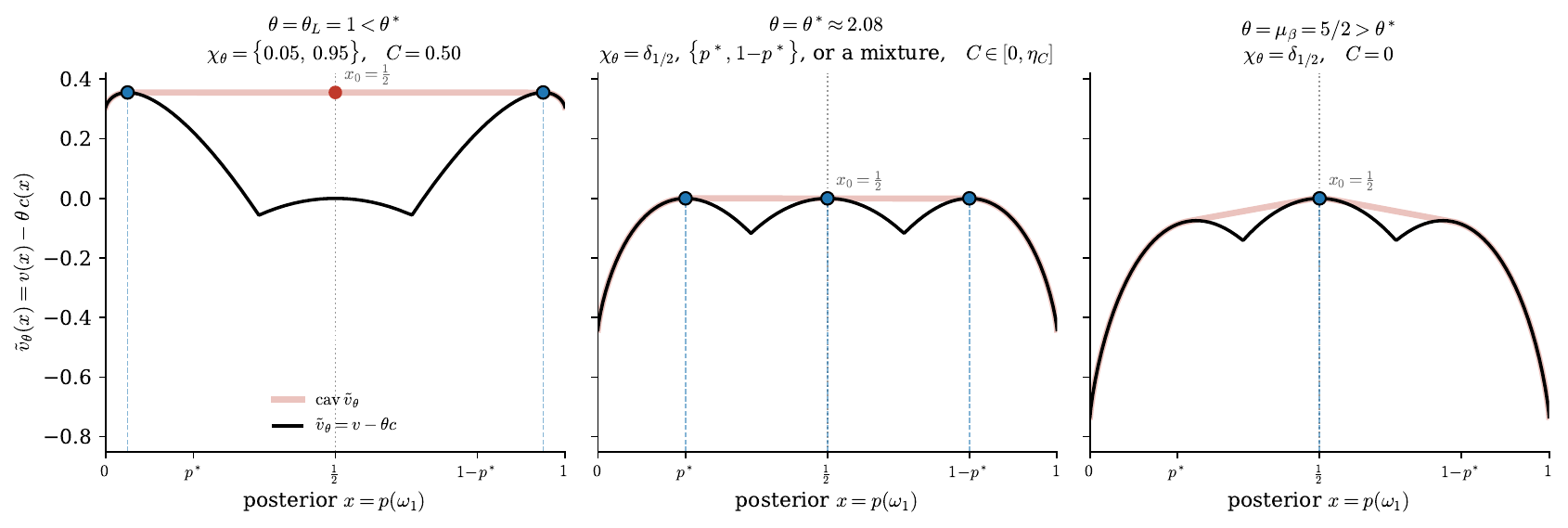}
\caption{
Concavification in the Example. Black curves show $\tilde v_\theta(x)=v(x)-\theta c(x)$, red curves its concave envelope, and blue points the contacts relevant at $p_0=\frac{1}{2}$. Below $\theta^{*}$ the optimum uses two posteriors; above it the optimum is uninformative. At $\theta^*$, the uninformative experiment, the two-posterior experiment with support $\{p^*,1-p^*\}$, and their mixtures are optimal, with costs spanning $[0,\eta_C]$.}
\label{fig:concav}
\end{figure}

For each type $\theta$, the unconstrained problem $\max_\tau\{V(\tau)-\theta C(\tau)\}$ is solved by concavifying the multiplier-adjusted value $\tilde v_\theta=v-\theta c$ over the prior $p_0=\frac{1}{2}$, as illustrated in Figure~\ref{fig:concav}. Because $v$ is flat around the prior $p_0$, where a safe action is optimal, a little information there changes no action yet is still costly, and hence is never optimal. The concave envelope meets $\tilde v_\theta$ either at $p_0$ alone (no information) or at a symmetric pair $\{p_{\theta},1-p_{\theta}\}$ reaching past the flat region. At the knife-edge $\theta^{*}$, the optimal support jumps from $\{p_0\}$ to $\{p^{*},1-p^{*}\}$, and its cost jumps from $0$ to $\eta_{C}=H(\frac{1}{2})-H(p^{*})$. Thus information acquisition is \emph{lumpy}: for every $\theta\neq \theta^{*}$, the optimal cost is $0$ or at least $\eta_{C}$, never an amount strictly in between (Appendix~\ref{app:lumpy}).

This lumpiness carries directly over to the four reference costs that anchor the contract, each $0$ or at least $\eta_{C}$. Recall that $\CFBH\ge\CFBL\ge\CaL\ge\CbL\ge0$ and $\CaL>0$. Hence $\CFBH$, $\CFBL$, $\CaL\geq \eta_{C}$ and $\CbL\in\{0\}\cup[\eta_{C},\infty)$. The low type's post-investment screening benchmark cost $\CbL$ is therefore the only reference cost that can lie below the minimum efficient scale, and if it does, it must collapse all the way to zero. The condition $\CbL=0$ means that the screening benchmark under the post-investment distribution assigns no information to the low type. It does not describe the actual inducing contract, whose investment incentives can require positive low-type information.

This cost gap now delivers what we call the \emph{multi-posterior characterization}: an exact condition on the investment technology (varying $\beta$, hence $\CbL$) for the low type's experiment to use more posteriors than there are states.

\begin{claimx}[Multi-posterior characterization]
\label{claim:multipost}
Hold all primitives of the Example fixed except $\beta\in(\alpha,1)$ and the investment cost $\phi$. Restrict attention to $\beta$ satisfying the standing differentiability assumption.

Three posteriors are necessary for the low type's experiment in an optimal contract on a nonempty open interval of $\phi$ if and only if $\CbL=0$. When $\CbL=0$, they are necessary for every
\[
  0<\phi<\phi_C,
  \qquad
  \phi_C:=(\beta-\alpha)(\thL-\thH)\eta_C.
\]

When $\CbL>0$, an optimal contract can always be selected with at most two low-type posteriors.
\end{claimx}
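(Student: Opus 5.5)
The plan is to combine the reduction of Lemma~\ref{lem:PI} and Lemma~\ref{lem:PNI} with the lumpiness property of the Example (Appendix~\ref{app:lumpy}): for every multiplier $\lambda\neq\theta^{*}$, every maximizer of $V-\lambda C$ has cost $0$ or at least $\eta_C$. At $\lambda=\theta^{*}$, the maximizers are exactly the mixtures of $\dirac_{1/2}$ and the symmetric spread on $\{p^{*},1-p^{*}\}$. I will also use the standard Lagrangian fact for information design with one linear cost constraint. Let $\tau$ be optimal for $\max\{V(\tau)-\theta C(\tau)\}$ subject to $C(\tau)\le\eta$ (or $\ge\eta$). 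Then $\tau$ maximizes $V-\lambda C$ over $\X$ for some multiplier $\lambda$, with complementary slackness. This holds because the value as a function of the budget is concave, since $\X$ is convex and $V,C$ are linear in $\tau$.

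\emph{Sufficiency ($\CbL=0$).} First fix $0<\phi<\phi_C$, so $0<\eta<\eta_C$. Since $\eta_C<\CaL\le\eta_I$ by Lemma~\ref{lem:etaIlb}, we have $\phi<\phi_I$, and investment is induced by Proposition~\ref{prop:cutoff}. Since $\eta>0=\CbL$, we are in Case~1 of Lemma~\ref{lem:PI}: the low type solves $\max V-\thL C$ subject to $C\le\eta$.

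Next, I show that the cap binds for every optimizer. Suppose an optimizer $\tau$ had $C(\tau)<\eta$. Mix $\tau$ slightly toward the unconstrained $\thL$-optimizer, whose cost is $\CFBL>\eta$. The objective is linear along this segment and weakly higher at the far end, so the mixture is feasible and weakly better. If the mixture were only weakly better, the segment would consist of maximizers, and $\tau$ would itself be an unconstrained $\thL$-maximizer with cost different from $\CFBL$. That contradicts differentiability of $K$ at $\thL$. So every optimizer has $C(\tau)=\eta$.

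Now apply the Lagrangian fact. The optimizer maximizes $V-\lambda C$ for some $\lambda\ge\thL$ and has cost $\eta\in(0,\eta_C)$. Lumpiness then forces $\lambda=\theta^{*}$. The optimizer is therefore a mixture of $\dirac_{1/2}$ and the $\{p^{*},1-p^{*}\}$ spread. The only members of that family with at most two posteriors have cost $0$ or $\eta_C$, neither of which equals $\eta$. Hence every optimal low-type experiment has exactly three posteriors on $(0,\phi_C)$, which is a nonempty open interval. The high type's problem is unaffected, so this is the low-type claim.

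\emph{Necessity ($\CbL>0$).} By lumpiness, $\CbL>0$ gives $\CbL\ge\eta_C$. I go through every $\phi$.

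1. If $\phi>\phi_I$, deterrence is optimal. Lemma~\ref{lem:PNI} gives the $\muA$-problem with cap $\eta>\eta_I\ge\CaL$. The unconstrained $\muA$-optimizer is then feasible, and in the binary concavification it can be chosen two-point.

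2. If $\phi\le\phi_I$ and $\eta<\CbL$ (Case~2), the low type solves the $\muB$-problem subject to $C\ge\eta$. The unconstrained $\muB$-optimizer, with cost $\CbL>\eta$, is feasible and two-point.

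3. If $\CbL\le\eta\le\eta_I$ (Case~1), the cap is at least $\eta_C$. If the cap is slack, the unconstrained $\thL$-optimizer works. If it binds, the multiplier $\lambda$ either differs from $\theta^{*}$, in which case concavification yields a two-point (symmetric) support whose cost can match $\eta$ as in Claim~\ref{claim:example}, or equals $\theta^{*}$, in which case $\eta=\eta_C$ is attained by the two-point $\{p^{*},1-p^{*}\}$ spread.

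A two-posterior optimal contract is therefore always selectable, and in particular three posteriors are not necessary on any open interval.

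The main obstacle is Case~1 when the cap binds at $\eta\ge\eta_C$. There I must show that a binding cap equal to $\eta$ is met by some two-point optimizer at the corresponding multiplier, rather than only by mixtures. My first attempt is the direct parametrization from Claim~\ref{claim:example}: the symmetric spread with $H(\tfrac12)-H(p_l)=\eta$ is optimal for $\eta\in[\eta_C,\CFBL)$, and by symmetry of the Example this does not depend on $\beta$. A secondary check is the cap-binding argument in the sufficiency part, which relies on the differentiability assumption at $\thL$.
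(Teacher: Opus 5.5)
Your proposal is correct and follows the paper's proof essentially step for step. When $\CbL=0$, Case~1 of Lemma~\ref{lem:PI} applies with a cap that binds on $(0,\eta_C)$, and this forces three posteriors; when $\CbL>0$ (hence $\CbL>\eta_C$), you exhibit a two-posterior selection in every regime, and your duality-plus-lumpiness argument simply re-derives the uniqueness of the three-point solution in Lemma~\ref{lem:azrieli}(2), which the paper cites directly. The one step the paper adds is that investment is \emph{strictly} preferred on $(0,\phi_C)$, for instance via $V_{NI}(\eta)<V_\alpha\le V_I(\eta_I)\le V_I(\eta)$ for $\eta<\CaL$, so that \emph{every} optimal contract assigns the low type three posteriors, not just the tie-broken selection you obtain from Proposition~\ref{prop:cutoff}, which only gives weak preference.
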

\begin{proof}
    See Appendix~\ref{app:multipost}.
\end{proof}

In the Example, the multi-posterior regime arises precisely when the post-investment screening benchmark assigns no information to the low type. Investment incentives nevertheless require positive low-type information over a range of investment costs. The resulting cost cap makes a third posterior indispensable, even though the unconstrained information-design problem always admits an optimum with at most two posteriors.

\paragraph{Generalization.} The next result extends beyond the Example. Under the stated conditions, every optimal contract requires the low type to use at least three posteriors despite there being only two states, over a nonempty interval of investment costs.

\begin{proposition}[Investment induces extra posteriors]
\label{prop:mes3}
Let $|\Omega|=2$ and $c(p)=H(p_0)-H(p)$. Suppose:
\begin{enumerate}[label=\textup{(\roman*)}]
  \item \emph{(Safe action at the prior.)} There exist actions $a_0,a_1$, each uniquely optimal at its corresponding degenerate belief, but neither optimal at $p_0$.
  \item \emph{(Decisive low type.)} A $\muA$-optimal experiment has two posteriors, one where $a_0$ is optimal and one where $a_1$ is optimal.
  \item \emph{(Silenced low type.)} $\CbL=0$.
\end{enumerate}

On some nonempty open interval of investment costs, investment is strictly preferred and every optimal contract assigns the low type at least three posteriors.
\end{proposition}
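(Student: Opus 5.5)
The plan is to work with small investment costs and use Lemma~\ref{lem:PI}, Case~2. By (iii), $\CbL=0$, so for every $\eta>0$ we have $\eta\ge\CbL$ and Case~1 applies instead. There the low type solves $\max_\tau\{V(\tau)-\thL C(\tau)\}$ subject to $C(\tau)\le\eta$, while the high type solves $\max_\tau\{V(\tau)-\thH C(\tau)\}$ subject to $C(\tau)\ge\eta$. For $\eta<\CFBH$ the high type's constraint is slack by Proposition~\ref{prop:dist}. The argument then has three parts: (a) on an interval $(0,\eta_C)$ the low type's cap binds and forces a three-point support; (b) investment is strictly preferred there; (c) every optimal contract has this property. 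The threshold $\eta_C$ is defined below.

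\textbf{Step 1 (lumpiness).} I would first establish the minimum efficient scale for the general binary setting. Under (i) there is an interval $I$ around $p_0$ on which $v$ is affine, given by the action optimal at $p_0$, which is neither $a_0$ nor $a_1$. Since $\tilde v_\lambda=v-\lambda c$ is strictly concave on $I$, no posterior of an optimal experiment at any multiplier lies in $I\setminus\{p_0\}$. Take $\theta^*:=\sup\{\lambda:\ \text{some }\lambda\text{-optimal experiment is informative}\}$. Hypothesis (ii) gives $\theta^*\ge\muA>\thL$, and hypothesis (iii) gives $\theta^*\le\muB$. At $\theta^*$, by upper hemicontinuity of the argmax, the uninformative experiment and some two-point experiment with support $\{q_0,q_1\}$ outside $I$ are both optimal. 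Define $\eta_C$ as the cost of this two-point experiment, so $\eta_C>0$.

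\textbf{Step 2 (capped low-type problem).} Fix $\eta\in(0,\eta_C)$. The capped problem has a Lagrangian form: the solution maximizes $V-\lambda C$ for some $\lambda\ge\thL$ with the cap binding whenever it binds. I would show the cap binds and the multiplier equals $\theta^*$. Any two-point experiment with cost at most $\eta<\eta_C$ must place a posterior inside $I\setminus\{p_0\}$, or else it is worse than a mixture. The mixture of $\delta_{p_0}$ with the $\theta^*$-optimal pair, scaled to cost exactly $\eta$, is optimal. The formal argument is a concavification of $\tilde v_{\theta^*}$: the envelope at $p_0$ touches at exactly $\{q_0,p_0,q_1\}$, and the constraint selects the unique mixture weight.

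Showing that \emph{every} optimal low-type experiment uses three posteriors is the crux. Any optimum must be supported on the contact set of the $\theta^*$-envelope, which is $\{q_0,p_0,q_1\}$ when the contact points are unique. Uniqueness of the contact pair needs an argument; strict concavity of $\tilde v$ on each affine piece of $v$ should give it. An experiment of cost $\eta\in(0,\eta_C)$ supported on that set cannot be two-point by Bayes plausibility. The case where the cap is slack is excluded because the $\thL$-optimum costs $\CFBL\ge\CaL>0$. Here I would also need $\CaL\ge\eta_C$, which follows from Step 1's lumpiness applied at $\muA$ together with (ii).

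\textbf{Step 3 (strict preference for investment).} As $\phi\to0$, the constraint \eqref{eq:ICI} becomes negligible, so $V_I(\eta)$ approaches a value at least the contract obtained by giving the high type its first-best experiment and the low type nothing. Concretely, I would compare $V_I(\eta)\to\beta K_{\thH}+(1-\beta)K_{\thL}$-type limits with $V_{NI}(\eta)$. By Lemma~\ref{lem:PNI}, $V_{NI}(\eta)\le V_\alpha$, and $V_\alpha$ is evaluated under the pre-investment weight $\alpha<\beta$. Continuity of $V_I$ in $\eta$ at $0$, which comes from the continuity of the capped value, then gives strict preference on $(0,\varepsilon)$ for small $\varepsilon$. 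I would compute $\lim_{\eta\to0}V_I$ as $\beta K_{\thH}+(1-\beta)K(\infty)$, using the high type's rent $(\thL-\thH)\eta\to0$. I would compare this with $V_\alpha\le\alpha K_{\thH}+(1-\alpha)K_{\thL}$-style bounds, or more simply with $\max_{(\chi,T)}$ under $\alpha$. Since $V_\alpha$ is itself a screening value at $\alpha$, the cleanest comparison uses the fact that the $\beta$-screening value exceeds the $\alpha$-screening value. By Section~\ref{sec:ntype}'s FOSD monotonicity, this is strict because $K_{\thH}>K_{\thL}$. I expect making this inequality strict to be the secondary obstacle.

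Finally, intersecting $(0,\varepsilon)$ with $(0,\eta_C)$ and converting to $\phi$ via $\phi=(\beta-\alpha)(\thL-\thH)\eta$ yields the desired nonempty open interval of investment costs.
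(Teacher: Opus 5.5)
\textbf{Gap in Step~1.} Your argument fails at Step~1, and the problem is the claim itself, not only how it is justified. Everything after Step~1 needs the knife-edge $\theta^{*}$, where information acquisition stops, to be \emph{lumpy}: $\delta_{p_0}$ and a two-point optimum of cost $\eta_C>0$ must both be $\theta^{*}$-optimal. Hypotheses (i)--(iii) do not imply this.

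Both of your supporting arguments fail in general:
\begin{itemize}
  \item The claim that strict concavity of $\tilde v_\lambda$ on the safe region $I$ keeps every optimal posterior out of $I\setminus\{p_0\}$ holds in the symmetric Example, where the supporting line at $p_0$ is horizontal. It is false in general.
  \item Upper hemicontinuity only says that the limit of informative optima is $\theta^{*}$-optimal. That limit can be $\delta_{p_0}$ itself.
\end{itemize}

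\textbf{Counterexample.} Take $p_0=\tfrac12$, mutual information, and $v(x)=\max\{1-x/0.3,\;0,\;100(x-0.9)\}$.
\begin{itemize}
  \item At the threshold $\lambda^{\dagger}\approx14.4$, the $a_1$-branch of $\tilde v_\lambda$ just touches the horizontal tangent at $\tfrac12$. The $a_0$-branch stays strictly below it, so $\delta_{p_0}$ is the only optimum.
  \item For $\lambda$ slightly below $\lambda^{\dagger}$, the supporting line touches $\tilde v_\lambda$ at some $q_\lambda\in(0.3,\tfrac12)$ and at a point near $1$, with $q_\lambda\uparrow\tfrac12$. The cost of optimal information therefore falls continuously to $0$, and no $\eta_C>0$ exists.
  \item Set $\thH=\tfrac12$, $\thL=1$, $\alpha=\tfrac14$, so $\muA=\tfrac76$, where the optimum is extreme. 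Set $\beta=0.97$, so $\muB\approx17.2>\lambda^{\dagger}$ and $\CbL=0$. Then (i)--(iii) hold.
  \item Yet for every small $\eta>0$, take the two-point $\lambda$-optimum of cost exactly $\eta$, for some $\lambda$ just below $\lambda^{\dagger}$. It attains the bound $K_\lambda+(\lambda-\thL)\eta$ and so solves the low type's capped problem.
\end{itemize}
So three posteriors are \emph{not} necessary near $\eta=0$. Your target interval $(0,\min\{\varepsilon,\eta_C\})$ is the wrong place to look.

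\textbf{The missing idea.} The paper chooses a different knife-edge. Its $\theta^{*}$ is the last multiplier at which some optimum is \emph{extreme}, meaning its posteriors lie only in the regions $R_0,R_1$ where $a_0,a_1$ are optimal.
\begin{itemize}
  \item By (i), the boundaries of these regions are convex kinks of $v$, which can never be contact points. This interiority is what forces a genuine jump at $\theta^{*}$ (Lemma~\ref{lem:band}).
  \item The three-posterior band $(\underline c,\bar c)$ therefore sits at the \emph{top} of the range of $\theta^{*}$-optimal costs, not at zero.
  \item Hypothesis (ii) puts $\muA$ strictly below this $\theta^{*}$, giving $\CaL\ge\bar c$. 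Lemma~\ref{lem:etaIlb} then places the band below $\eta_I$.
\end{itemize}
Tellingly, your argument uses (ii) only to get $\muA\le\theta^{*}$, which $\CaL>0$ already implies.

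\textbf{Further repairs needed even in a lumpy case.}
\begin{itemize}
  \item Strict concavity on each affine piece gives only a \emph{finite} contact set, with at most one point per piece. It does not give the exact set $\{q_0,p_0,q_1\}$. With more actions, two-point $\theta^{*}$-optima on other contact pairs have costs inside your interval. Those finitely many costs must be discarded, as the paper does with $\mathcal C_2$.
  \item In Step~3, FOSD monotonicity yields only $V_\beta\ge V_\alpha$, not strict inequality. The clean argument needs no limit. For $\eta<\CaL$, the deterrence cap excludes every $\muA$-optimizer, so $V_{NI}(\eta)<V_\alpha\le V_I(\eta_I)\le V_I(\eta)$.
\end{itemize}
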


\begin{proof}
See Appendix~\ref{app:mes3}.
\end{proof}

We show that this can occur when the decision problem has at least three actions, and a sufficiently interior prior. Although these conditions are not stated directly in terms of primitives, they are easy to verify in any given contracting problem: The safe-action condition~(i) can be read directly from the decision problem, while decisiveness~(ii) and silencing~(iii) concern the unconstrained optima at the rent-inflated multipliers $\muA$ and $\muB$: the $\muA$-optimum is sufficiently informative, whereas the $\muB$-optimum is uninformative ($\CbL=0$). Each condition can therefore be checked with a single concavification.

\begin{remark}[Affinely dependent support and implementation]
\label{rem:implement}
In the Example, investment incentives make three posteriors necessary for the low type's optimal experiment when $0<\eta<\eta_C$. Its support $\{p^{*},\tfrac12,1-p^{*}\}$ is affinely dependent, specifically, $\tfrac12=\tfrac12 p^{*}+\tfrac12(1-p^{*})$. Although unconstrained information design admits an optimum with at most as many posteriors as states, the investment-induced cost constraint can make an additional posterior indispensable. This is consistent with the role of constraints in expanding the support required by optimal information structures \citep{azrieli2021constrained,letreust2019persuasion,doval2024constrained}.

The necessity of three posteriors also matters when comparing direct contracting on experiments with posterior-contingent payments. Suppose the researcher freely chooses an experiment under a fixed payment rule $t(p)$. His objective, $\mathbb{E}_{\tau}[t(p)-\theta c(p)]$, is linear in the posterior law. An experiment with affinely dependent support can be expressed as a nontrivial mixture of distinct Bayes-plausible experiments. If it is optimal, each component must therefore also be optimal. Consequently, such a payment rule cannot uniquely implement the prescribed experiment, although weak implementation may be possible.

Equal payments across the prescribed support create a stronger obstacle in our binary-state setting. Strict convexity of the acquisition cost allows the researcher to shift probability from the two outer posteriors toward the middle posterior, preserving Bayes plausibility and expected payment while strictly reducing cost. Such payments therefore cannot even weakly implement the three-posterior experiment. In our methods-based model, by contrast, payment is conditional on delivering the contracted experiment, so this deviation does not preserve the researcher's payment.
\end{remark}

\subsection{Prior sensitivity}\label{sec:compstat}
We next examine how posterior support responds to changes in the state prior. Hold the decision problem, cost types, and investment technology fixed, and let the binary-state prior $p_0$ vary. Write
\[
  C_{p_0}(\tau)
  :=H(p_0)-\mathbb{E}_{\tau}[H(p)],
  \qquad \mathbb{E}_{\tau}[p]=p_0.
\]

The relevant distinction is between a problem with a fixed effective cost multiplier and one with a binding information-cost cap. The following result concerns posterior support, not invariance of the entire contract.

\begin{proposition}[Prior sensitivity]\label{prop:compstat}
With mutual-information cost:
\begin{enumerate}[label=\arabic*.]
  \item \emph{(Fixed effective multiplier.)} Fix $\lambda>0$ and write $\tilde v_\lambda=v+\lambda H$, omitting the prior-dependent constant $-\lambda H(p_0)$ from the adjusted value. Suppose $a<b$ are contact points of $\operatorname{cav}(\tilde v_\lambda)$ with $\tilde v_\lambda$, and $\operatorname{cav}(\tilde v_\lambda)$ is affine on $[a,b]$. For every prior $p_0\in(a,b)$, an optimizer of
  \[
    \max_{\tau:\,\mathbb{E}_{\tau}[p]=p_0}
       \{V(\tau)-\lambda C_{p_0}(\tau)\}
  \]
  can be selected with support $\{a,b\}$ and probabilities
  \[
    \tau(a)=\frac{b-p_0}{b-a},
    \qquad
    \tau(b)=\frac{p_0-a}{b-a}.
  \]
  Thus this optimal selection has fixed support as the prior varies within $(a,b)$.
  \item \emph{(Binding information-cost cap.)} Fix $\eta>0$. Suppose that on a nonempty open interval $I\subset(0,1)$, the low type's optimal experiment is two-point and its information-cost cap binds: $C_{p_0}(\tau_{p_0})=\eta$ for every $p_0\in I$. No selection of these optimal experiments can have the same pair of support posteriors throughout $I$.
\end{enumerate}
\end{proposition}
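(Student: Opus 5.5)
Part 1 is a concavification argument with mutual-information cost. For any Bayes-plausible $\tau$ with mean $p_0$,
\[
V(\tau)-\lambda C_{p_0}(\tau)=\mathbb{E}_\tau[v(p)+\lambda H(p)]-\lambda H(p_0)=\mathbb{E}_\tau[\tilde v_\lambda(p)]-\lambda H(p_0).
\]
The last term depends only on the prior, so maximizing over $\tau$ is the same as maximizing $\mathbb{E}_\tau[\tilde v_\lambda]$ subject to mean $p_0$. The value of that problem is $\operatorname{cav}(\tilde v_\lambda)(p_0)$.

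Now take $p_0\in(a,b)$ and the two-point $\tau$ with the stated weights. It has mean $p_0$. Because $a$ and $b$ are contact points, $\tilde v_\lambda=\operatorname{cav}(\tilde v_\lambda)$ at $a$ and at $b$. Because the envelope is affine on $[a,b]$,
\[
\mathbb{E}_\tau[\tilde v_\lambda]=\tau(a)\operatorname{cav}(a)+\tau(b)\operatorname{cav}(b)=\operatorname{cav}(\tilde v_\lambda)(p_0).
\]
So $\tau$ attains the upper bound and is optimal, and its support $\{a,b\}$ does not depend on $p_0$.

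For part 2, argue by contradiction. Suppose a selection $\tau_{p_0}$ has the same support $\{a,b\}$ for every $p_0\in I$, with $a<b$. Bayes plausibility then pins down the weights: $\tau_{p_0}(a)=(b-p_0)/(b-a)$ and $\tau_{p_0}(b)=(p_0-a)/(b-a)$, so $a<p_0<b$ on $I$. The cost becomes
\[
g(p_0):=C_{p_0}(\tau_{p_0})=H(p_0)-\frac{b-p_0}{b-a}H(a)-\frac{p_0-a}{b-a}H(b).
\]
This is $H$ minus its chord through $a$ and $b$. The chord is affine in $p_0$ and $H$ is strictly concave on $(0,1)$, so $g$ is strictly concave on $(a,b)$. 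A strictly concave function cannot be constant on a nonempty open interval, since there $g''=H''<0$. This contradicts the binding cap $g\equiv\eta$ on $I$.

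Concretely, $g''(p_0)=H''(p_0)=-1/(p_0(1-p_0))<0$. So $g$ is not even locally constant anywhere: it cannot equal $\eta$ at more than two points of $I$. Hence no fixed support pair works throughout $I$.

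Two points need care. First, the support being "the same pair" must rule out degenerate cases. I would note that $a\ne b$: otherwise the experiment is uninformative with cost $0\neq\eta$, using $\eta>0$. Second, $I$ must sit inside $(a,b)$ for the weights to be valid probabilities; this follows from Bayes plausibility with positive weights on both points. The two-point hypothesis also guarantees that both weights are positive, so the support is exactly $\{a,b\}$.

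I expect the main obstacle to be minor: making sure the argument does not need optimality at all. The contradiction uses only Bayes plausibility, the fixed support, and the binding cap. Optimality of the experiments is never invoked, so part 2 in fact holds for any selection of two-point experiments with binding cost.
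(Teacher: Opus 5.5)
Your proposal is correct and follows essentially the same route as the paper. In part 1 you drop the prior-dependent constant and show that the two-point law on the contact points $a,b$ attains the concavification bound; in part 2 you note that a fixed support lets Bayes plausibility pin down the weights, so the cost is $H$ minus an affine chord, which is strictly concave and cannot equal $\eta$ on an open interval (and, as you observe, optimality is never actually used there, exactly as in the paper's argument).
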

\begin{proof}
For part 1, write $V(\tau)-\lambda C_{p_0}(\tau) =\mathbb{E}_{\tau}[\tilde v_\lambda(p)]-\lambda H(p_0)$. The last term does not depend on the experiment. For any Bayes-plausible $\tau$,
\[
  \mathbb{E}_{\tau}[\tilde v_\lambda(p)]
  \le
  \mathbb{E}_{\tau}[\operatorname{cav}(\tilde v_\lambda)(p)]
  \le
  \operatorname{cav}(\tilde v_\lambda)(p_0).
\]

The stated two-point law attains this bound, because $a$ and $b$ are contact points and the concave envelope is affine between them. It is therefore optimal for every $p_0\in(a,b)$.

For part 2, suppose instead that the support remains $\{a,b\}$ throughout $I$, with $a<b$. Bayes plausibility determines the probabilities, so
\[
  C_{p_0}(\tau_{p_0})
  =
  H(p_0)
  -\frac{b-p_0}{b-a}H(a)
  -\frac{p_0-a}{b-a}H(b).
\]

The last two terms are affine in $p_0$, whereas $H$ is strictly concave on $(0,1)$. The displayed expression is therefore strictly concave and cannot equal the fixed constant $\eta$ throughout an open interval. This contradicts the binding-cap assumption.
\end{proof}

In the screening benchmark without investment opportunities, the relevant multipliers are $\lambda=\theta_H$ for the high type and $\lambda=\mu_\alpha$ for the low type. More generally, under the fixed type distribution in \eqref{eq:Pf}, the low type's multiplier is $\theta_L+\frac{f}{1-f}(\theta_L-\theta_H)$. Part 1 applies when the corresponding concave envelope has the stated affine segment. It does not assert support invariance in a no-information region, where the optimal experiment may be $\delta_{p_0}$. Even when support remains fixed, probabilities, information costs, and transfers may change with the prior.

In the Example, at $p_0=1/2$, the low type's optimum is two-point with a binding cap for $\eta_C<\eta<C_L^{FB}$. Part 2 describes its response to prior changes conditional on this regime persisting on an open interval of priors.

\section{Extension: More than Two Types}\label{sec:ntype}

We now allow for any finite number of cost types. Two properties survive: the principal's screening value is non-decreasing under FOSD improvements in the type distribution, and her investment decision retains a cutoff in the sunk cost. The welfare comparison changes, however. While the two-type model rules out overinvestment, a three-type example shows that the principal may induce investment that is inefficient relative to the first best.

\subsection{Framework}\label{sec:ntype-frame}
We keep the arbitrary finite state space $\Omega$, the common prior $p_0\in\Delta(\Omega)$, and the information-design primitives $\X,C,V$ of Section \ref{sec:model}. For finite $\Omega$, the set $\X$ is compact and convex, $C$ and $V$ are affine and bounded on $\X$ with $\bar C:=\max_{\tau\in\X}C(\tau)<\infty$, and each program $\max_{\tau\in\X}\{V(\tau)-\lambda C(\tau)\}$ attains its maximum. Every argument below runs on the scalar costs $C_i:=C(\chi_i)$, the values $V(\chi_i)$, and the multipliers.

Only the type space now expands. Let there be $n\ge2$ types $\theta_1>\dots>\theta_n$, with type $n$ the \emph{most efficient} (lowest marginal cost of information), and let $g=(g_1,\dots,g_n)$, $g_i>0$, $\sum_i g_i=1$, be the pre-investment type distribution, with cdf $G_k:=\sum_{i\ge k}g_i$ (so $G_{n+1}=0$, $G_1=1$).

A methods-based contract $(\chi,T)$ contains $\chi_i:=\chi(\theta_i)\in\X$ and $T_i:=T(\theta_i)\in\R_+$. The feasible set
\[
  \mathcal P_n=\Bigl\{(\chi,T):
    \begin{array}{l}
      T_i-\theta_i C_i\ \ge\ T_j-\theta_i C_j\ \ \forall\, i,j,\\[2pt]
      T_i-\theta_i C_i\ \ge\ 0\ \ \forall\, i
    \end{array}\Bigr\}
\]
is cut out by \eqref{eq:IC} and \eqref{eq:IR} alone, so it \emph{does not depend on $g$}.

The screening value is $V_g=\max_{(\chi,T)\in\mathcal P_n}\sum_{i=1}^n g_i\,[V(\chi_i)-T_i]$, the $n$-type analog of $(P_f)$ of Appendix~\ref{app:obs} (equivalently of \eqref{eq:PNI} at $\eta\to\infty$). For each type record its rent $U_i:=T_i-\theta_i C_i\ge0$, total surplus $W_i:=V(\chi_i)-\theta_i C_i$, and the principal's net payoff $S_i:=V(\chi_i)-T_i=W_i-U_i$.

Throughout, we impose full support as a standing assumption: $g_i,g_i'>0$ for every type $i$, for both the pre-investment distribution $g$ and any post-investment distribution $g'$ introduced from Definition~\ref{def:fosd} onward.

\begin{definition}[Virtual multipliers; regularity]\label{as:regular}
Given $g$, define the \emph{virtual (rent-inflated) multipliers} by
\[
  \lambda_n:=\theta_n,\qquad
  \lambda_i:=\theta_i+\frac{G_{i+1}}{g_i}\,(\theta_i-\theta_{i+1})\quad(1\le i\le n-1).
\]

Each $\lambda_i\ge\theta_i$ inflates type $i$'s marginal cost $\theta_i$ by the information rent conceded to the more efficient types $i+1,\dots,n$; at the most efficient type $\lambda_n=\theta_n$ (no distortion). We call the type distribution $g$ \emph{regular} if its virtual multipliers are non-increasing, i.e., $\lambda_1\ge\lambda_2\ge\dots\ge\lambda_n$.
\end{definition}

\begin{definition}[FOSD investment shift]\label{def:fosd}
We say that $g'$ FOSD-dominates $g$ toward efficient types if
\[
  G'_k\ge G_k \qquad \text{for every }k,
\]
where $G_k=\sum_{i\ge k}g_i$ and $G'_k=\sum_{i\ge k}g'_i$.
\end{definition}

We now reinstate the hidden-investment margin of Section \ref{sec:model}. Investment shifts the type distribution from $g$ to $g'$ at sunk cost $\phi>0$, where $g'$ FOSD-dominates $g$ toward efficient types. The principal posts one contract $(\chi,T)\in\mathcal P_n$; the researcher chooses whether to invest and then, after his type realizes, reports it.

\begin{definition}[$n$-type investment-incentive constraint]\label{def:invest-ic}
Fix a contract $(\chi,T)\in\mathcal P_n$ and let $U_i=T_i-\theta_i C_i\ge0$ (\eqref{eq:IR}) be the interim rent of type $i$ --- which, since the reported type faces its own \eqref{eq:IC}-optimal option, is exactly type $i$'s continuation payoff. The researcher's ex-ante expected utility is $\sum_i g_i U_i$ if he does not invest and $\sum_i g'_i U_i-\phi$ if he does. Hence, offered $(\chi,T)$, \emph{the researcher is willing to invest if and only if}
\begin{equation}
  \langle g'-g,\,U\rangle\ :=\ \sum_{i=1}^n(g'_i-g_i)\,U_i\ \ge\ \phi .
  \tag{IC-I$_n$}\label{eq:invest-ic}
\end{equation}

For every feasible contract, summation by parts gives
\begin{equation}
  \langle g'-g,U\rangle
  =\sum_{k=2}^n(G'_k-G_k)(U_k-U_{k-1}).
  \tag{$\dagger$}\label{eq:abel-wedge}
\end{equation}

Investment incentives therefore depend on a weighted combination of adjacent rent gaps. Under FOSD, the weights are nonnegative, although some may be zero. This identity does not require rents to be minimal.
\end{definition}

To induce investment the principal chooses a contract in $\mathcal P_n$ satisfying \eqref{eq:invest-ic}, whereupon types are drawn from $g'$ and she collects $\sum_i g'_i S_i$; to deter it she chooses a contract with the reverse (weak) inequality and collects $\sum_i g_i S_i$ under $g$. Writing $S_i=V(\chi_i)-T_i$ as before, define the \emph{induce} and \emph{deter} values
\begin{align}
  V_I(\phi)&=\max\Bigl\{\textstyle\sum_{i=1}^n g'_i S_i:\ (\chi,T)\in\mathcal P_n,\
    \langle g'-g,U\rangle\ge\phi\Bigr\}, \label{eq:VI-ntype}\\[2pt]
  V_{NI}(\phi)&=\max\Bigl\{\textstyle\sum_{i=1}^n g_i S_i:\ (\chi,T)\in\mathcal P_n,\
    \langle g'-g,U\rangle\le\phi\Bigr\}, \label{eq:VNI-ntype}
\end{align}
and $D(\phi)=V_I(\phi)-V_{NI}(\phi)$; the principal prefers investment if and only if $D(\phi)\ge0$. Both feasible sets carve $\mathcal P_n$ by the single scalar constraint \eqref{eq:invest-ic}, and both objectives are $\phi$-free.

For a nonregular type distribution, the unconstrained screening benchmark $V_g$ is characterized by ironing the virtual multipliers (Lemma~\ref{lem:iron}, Appendix~\ref{app:ntype-pf}). This characterization does not by itself solve the investment-inducing or investment-deterring programs, whose additional constraints can change both rents and allocations.

\subsection{Main results}\label{sec:ntype-results}

\begin{proposition}[Properties of the screening value]\label{prop:Vf-ntype}
For type distributions $g$ and $g'$:
\begin{enumerate}[label=\arabic*.]
  \item \emph{(Convexity.)} For every $t\in[0,1]$, $V_{(1-t)g+tg'}\le (1-t)V_g+tV_{g'}$.
  \item \emph{(Monotonicity.)} If $g'$ FOSD-dominates
  $g$ toward more efficient types, then $V_{g'}\ge V_g$.
\end{enumerate}
For $n=2$, these conclusions recover
Lemma~\ref{lem:Vf}.
\end{proposition}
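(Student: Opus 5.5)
Both parts rest on the fact, noted above, that the feasible set $\mathcal P_n$ is cut out by \eqref{eq:IC} and \eqref{eq:IR} alone and so does not depend on the type distribution. Write the objective as $\Phi(g;\chi,T):=\sum_i g_i S_i$, which is linear in $g$ for each fixed contract. Then $V_g=\max_{(\chi,T)\in\mathcal P_n}\Phi(g;\chi,T)$ is a pointwise supremum of linear functions of $g$. Before using this, I will confirm that the maximum is attained (or simply work with suprema). The reduction to scalar costs, the compactness of $\X$, and bounded transfers at the optimum make the values finite, which is all that is needed.

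\emph{Convexity.} Let $g_t=(1-t)g+tg'$ and take any $(\chi,T)\in\mathcal P_n$. Then
\[
\Phi(g_t;\chi,T)=(1-t)\Phi(g;\chi,T)+t\Phi(g';\chi,T)\le(1-t)V_g+tV_{g'}.
\]
Taking the supremum over the common set $\mathcal P_n$ gives the claim. This step is routine.

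\emph{Monotonicity.} Fix an optimal contract $(\chi,T)$ for $g$. By summation by parts, as in \eqref{eq:abel-wedge},
\[
\Phi(g';\chi,T)-\Phi(g;\chi,T)=\sum_{k=2}^n(G'_k-G_k)(S_k-S_{k-1}).
\]
FOSD makes every weight $G'_k-G_k$ nonnegative. It would therefore suffice for the optimal contract's net payoffs to satisfy $S_k\ge S_{k-1}$, meaning the principal earns weakly more from more efficient types. I do not expect this ordering to hold automatically, and this is the main obstacle. The fix I would try is to modify the contract rather than prove ordering. Suppose $S_k<S_{k-1}$ for some $k$. Any type $k$ can be offered type $k-1$'s option $(\chi_{k-1},T_{k-1})$.

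More cleanly, I would build a pooled contract from the menu. For each type $i$, let $j(i)\in\argmax_{j}S_j$ over those options $j$ with $j\le i$, that is, among the options of weakly less efficient types. Assign type $i$ the option with the best $S$ in that range. I would then check \eqref{eq:IC} and \eqref{eq:IR} for this new contract. Less efficient options have lower cost $C$ (by \eqref{eq:IC} monotonicity, $C_i$ is nondecreasing in $i$). The menu of options is unchanged, so each type's best choice from the whole menu is what matters. Because of this, the reassignment may break \eqref{eq:IC}.

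The safer route is therefore to use the characterization of $V_g$ through ironed virtual multipliers (Lemma~\ref{lem:iron}). With $U_1=0$ and binding adjacent downward constraints, $V_g=\max\sum_i g_i[V(\chi_i)-\lambda_i C_i]$ subject to $C_i$ nondecreasing. The term $V(\chi_i)-\lambda_iC_i$ is nonincreasing in $\lambda_i$. An FOSD shift raises $G_{i+1}$ relative to $g_i$, which I will compare.

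Since the link between multipliers and FOSD is not monotone termwise, I would in the end return to the $S$-ordering argument. The plan is to show that some optimal contract for $g$ has $S_k$ nondecreasing in $k$. If $S_k<S_{k-1}$, replace type $k$'s option by type $k-1$'s option and pool the two types. Pooling preserves feasibility: the pooled option is kept, and the more efficient types' rents only rise if needed, which can be handled by re-optimizing. Pooling also weakly raises the $g$-objective, a contradiction to strict suboptimality, so ties let us select an ordered optimum. Verifying this pooling step is the crux. For $n=2$ it reduces to $S_H\ge S_L$, which follows by giving $\thH$ the low option at payment $T_L$.
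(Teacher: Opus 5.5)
Your convexity argument is the same as the paper's. For monotonicity you make the same reduction the paper makes: by the summation-by-parts identity and FOSD, it suffices to exhibit a $g$-optimal contract with $S_1\le\cdots\le S_n$. The gap is that you never establish this ordering. You label the pooling step ``the crux'' and leave it unverified, and the justification you sketch points at the wrong constraint. Replacing type $k$'s option by type $k-1$'s only shrinks the menu, so no other type's incentives or rents are affected and no re-optimization of their rents is needed. The constraint that can fail is type $k$'s own: once option $k$ is deleted, he might prefer some more efficient type's option $j>k$ to option $k-1$. Nothing in your sketch rules this out. Your $n=2$ check cannot detect the problem, because there is no more efficient option there.

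The missing observation is that optimal rents are minimal. With full support, any $g$-optimal contract must use the minimal rents of Lemma~\ref{lem:ntype-screen}(ii); otherwise lowering some $U_i$ to its minimal level would strictly raise the objective. Type $k$'s adjacent downward constraint therefore binds: $T_{k-1}-\theta_kC_{k-1}=U_{k-1}+(\theta_{k-1}-\theta_k)C_{k-1}=U_k$. So option $k-1$ gives type $k$ exactly his maximal payoff over the original menu, and it remains optimal within the smaller menu; IR is unchanged. The modified contract therefore lies in $\mathcal P_n$ and raises the $g$-objective by $g_k(S_{k-1}-S_k)>0$ whenever $S_k<S_{k-1}$, a contradiction. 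Thus every optimal contract is $S$-ordered, so, contrary to your expectation, no tie-breaking is needed. Your Abel-summation step then finishes the proof, with attainment of the $g$-optimum coming from compactness or from Lemma~\ref{lem:iron}.

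Completed this way, your route is slightly more elementary than the paper's and shows that all optima are ordered. The paper avoids the exchange argument by taking the ironed contract of Lemma~\ref{lem:iron}(c). There $\chi_i$ maximizes $V-\overline\lambda_iC$ with $\overline\lambda_i\ge\theta_i$, costs are nondecreasing, and minimal rents give $S_i-S_{i-1}=V(\chi_i)-V(\chi_{i-1})-\theta_i(C_i-C_{i-1})\ge(\overline\lambda_i-\theta_i)(C_i-C_{i-1})\ge0$.
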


\begin{proof}
See Appendix~\ref{app:gen-vf-pf}.
\end{proof}

The principal's screening value weakly increases when the type distribution improves. To see why, select the optimal screening contract constructed by Lemma~\ref{lem:iron}, with coordinated experiment choices and minimal rents. Its net payoffs satisfy $S_1\le\cdots\le S_n$. The same contract remains feasible under $g'$, so
\[
  V_{g'}-V_g
  \ge \langle g'-g,S\rangle
  =\sum_{k=2}^n(G'_k-G_k)(S_k-S_{k-1})
  \ge0.
\]

The final inequality follows from FOSD. The construction applies to both regular and nonregular distributions.

\begin{proposition}[Investment cutoff]\label{prop:cutoff-ntype}
Fix distributions $g,g'$ and sunk cost $\phi>0$, with $V_I,V_{NI}$ as in \eqref{eq:VI-ntype}--\eqref{eq:VNI-ntype}.
\begin{enumerate}[label=\arabic*.]
  \item \emph{(Interval.)} $D=V_I-V_{NI}$ is non-increasing in $\phi$ and negative for $\phi>\bar\phi:=(\theta_1-\theta_n)\bar C$. The set $\Phi_D:=\{\phi\geq0:D(\phi)\geq0\}$ is either empty or a closed interval $[0,\phi_I]$, with $\phi_I\leq \bar{\phi}$.
  \item \emph{(Non-emptiness.)} If $g'$ FOSD-dominates $g$ toward efficient types, then $V_I(0)=V_{g'}$ and $V_{NI}(0)\le V_g\le V_{g'}$ (the last inequality by Proposition~\ref{prop:Vf-ntype}), so $D(0)=V_{g'}-V_{NI}(0)\ge0$: the interval is non-empty, and investment is \emph{strictly} preferred on some $[0,\epsilon)$ whenever $V_{g'}>V_{NI}(0)$.
  \item \emph{(Two-type case.)} For $n=2$, the result reduces to Proposition~\ref{prop:cutoff}.
\end{enumerate}
\end{proposition}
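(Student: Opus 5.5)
I will prove the three parts in order, working throughout on the fixed feasible set $\mathcal P_n$ and the scalar functional $\Delta(\chi,T):=\langle g'-g,U\rangle$.

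\emph{Part 1 (Interval).} Monotonicity is a pure feasible-set argument. As $\phi$ rises, the set $\{\Delta\ge\phi\}$ shrinks and the set $\{\Delta\le\phi\}$ grows. Both objectives are $\phi$-free, so $V_I$ is non-increasing and $V_{NI}$ is non-decreasing in $\phi$. Hence $D$ is non-increasing, with the convention $V_I=-\infty$ when $\{\Delta\ge\phi\}$ is empty.

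For the bound $\bar\phi$, I show that $U_i-U_j\le(\theta_1-\theta_n)\bar C$ for all $i,j$ on $\mathcal P_n$. The incentive constraints give $U_i\ge U_j+(\theta_j-\theta_i)C_j$. Chaining these inequalities, or using the standard bounds $U_k-U_{k-1}\in[(\theta_{k-1}-\theta_k)C_k,\,(\theta_{k-1}-\theta_k)C_{k-1}]$, yields $0\le U_k-U_{k-1}\le(\theta_{k-1}-\theta_k)\bar C$. Under FOSD, $G'_k-G_k\in[0,1]$, so ($\dagger$) gives $\Delta\le\sum_k(\theta_{k-1}-\theta_k)\bar C=(\theta_1-\theta_n)\bar C$. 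For a general $g'$ that is not FOSD-ranked, I will instead bound $\Delta$ directly by $\max_iU_i-\min_iU_i$, since $\sum_i(g'_i-g_i)=0$. I then bound this spread by telescoping the adjacent IC bounds, which again gives $(\theta_1-\theta_n)\bar C$.

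It follows that for $\phi>\bar\phi$ the inducing program is infeasible, so $V_I=-\infty$. Meanwhile $V_{NI}(\phi)$ is finite: the contract with $\chi_i=\delta_{p_0}$ and $T=0$ satisfies $\Delta=0\le\phi$. Hence $D<0$ there.

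To show $\Phi_D$ is closed, I need $D$ to be upper semicontinuous from the right. I will argue that $V_I$ is upper semicontinuous in $\phi$. Take maximizers $(\chi^m,T^m)$ at $\phi_m\downarrow\phi$. Transfers can be taken bounded, since they are at most $\theta_1\bar C$ plus rents and rents are bounded via IC and IR at optimum. $\X$ is compact, and $C$, $V$ are continuous (affine) in the weak topology. A limit point is therefore feasible at $\phi$ with objective at least $\limsup V_I(\phi_m)$; combined with monotonicity this gives $V_I(\phi)=\lim_{\phi_m\downarrow\phi}V_I(\phi_m)$, i.e.\ right-continuity of $V_I$. Together with monotonicity, $D(\phi)\ge\limsup D(\phi_m)$ for $\phi_m\downarrow\phi$. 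So if $\phi_m\in\Phi_D$ converge down to a point, it lies in $\Phi_D$. Since $D$ is non-increasing, $\Phi_D$ is a down-set within $[0,\bar\phi]$, and these facts make it $[0,\phi_I]$ with $\phi_I=\sup\Phi_D\le\bar\phi$.

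\emph{Part 2 (Non-emptiness).} At $\phi=0$ under FOSD, ($\dagger$) and the monotone rents $U_k\ge U_{k-1}$ from part 1 give $\Delta\ge0$ for every contract in $\mathcal P_n$. So the constraint in $V_I(0)$ is vacuous, and $V_I(0)=V_{g'}$. The program $V_{NI}(0)$ is a restriction of the screening problem under $g$, so $V_{NI}(0)\le V_g$, and $V_g\le V_{g'}$ by Proposition~\ref{prop:Vf-ntype}. Therefore $D(0)\ge0$.

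For strict preference near $0$, I will use right-continuity of $V_I$ at $0$ and monotonicity of $V_{NI}$, showing $\lim_{\phi\downarrow0}V_{NI}(\phi)=V_{NI}(0)$ by the same compactness argument as above. If $V_{g'}>V_{NI}(0)$, then $D>0$ on some $[0,\epsilon)$.

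\emph{Part 3 (Two-type case).} For $n=2$, I will check that the problem and $D$ coincide with the earlier programs. Take $\theta_1=\thL$, $\theta_2=\thH$, $g_2=\alpha$ and $g'_2=\beta$. Then ($\dagger$) gives $\Delta=(\beta-\alpha)(U_H-U_L)$, which is exactly \eqref{eq:ICI}. Proposition~\ref{prop:cutoff} then follows.

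The main obstacle is the closedness and continuity step: showing that optimal transfers stay bounded so that compactness applies. I plan to handle this by replacing $T$ with minimal rents, which is possible because lowering all rents uniformly preserves IC and $\Delta$.
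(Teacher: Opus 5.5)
Your monotonicity argument, the bound $\bar\phi$, the inequalities $V_I(0)=V_{g'}\ge V_g\ge V_{NI}(0)$ in Part~2, and Part~3 all match the paper's proof. There are two small slips. First, the adjacent IC bounds are $(\theta_{k-1}-\theta_k)C_{k-1}\le U_k-U_{k-1}\le(\theta_{k-1}-\theta_k)C_k$, not the reverse, though your conclusion $0\le U_k-U_{k-1}\le(\theta_{k-1}-\theta_k)\bar C$ stands. Second, only the uniform shift $U_i\mapsto U_i-U_1$ preserves $\Delta$; the minimal rents of Lemma~\ref{lem:ntype-screen} change the rent gaps and can break the investment constraint.

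The genuine gap is the continuity step. Running compactness along $\phi_m\downarrow\phi$ yields only $V_I(\phi)\ge\limsup_m V_I(\phi_m)$, which monotonicity already gives. It cannot yield right-continuity of $V_I$, which in fact fails at $\phi_{\max}:=\max_{\mathcal P_n}\Delta$, where $V_I$ drops to $-\infty$. Likewise, ``if $\phi_m\in\Phi_D$ converge down to a point, it lies in $\Phi_D$'' is automatic for a down-set. Closedness needs $D(\phi_I)\ge0$ at $\phi_I=\sup\Phi_D$, which is control from the \emph{left}. Two facts are required. (a) $V_I(\phi_I)\ge\lim_{\phi\uparrow\phi_I}V_I(\phi)$: your compactness argument does give this if you run it along $\phi_m\uparrow\phi_I$. (b) $V_{NI}(\phi_I)\le\lim_{\phi\uparrow\phi_I}V_{NI}(\phi)$: this is a lower-semicontinuity property that compactness cannot deliver, because the deterrence feasible set grows with $\phi$ and an optimizer at $\phi_I$ may have wedge exactly $\phi_I$.

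The missing ingredient is concavity. Constraints and objectives are affine in $(\chi,T)$ on the convex, compact set of normalized contracts. Mixing contracts therefore makes $V_I$ and $V_{NI}$ concave on their finite domains, and concavity plus upper semicontinuity gives continuity there, endpoints included. For (b) concretely, mix a deterrence optimizer at $\phi_I>0$ with the zero-transfer uninformative contract, which has wedge $0$. The mixtures have wedge strictly below $\phi_I$ and payoff arbitrarily close to $V_{NI}(\phi_I)$.

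Your strictness claim in Part~2 has the same defect. There $\lim_{\phi\downarrow0}V_{NI}(\phi)=V_{NI}(0)$ does follow from compactness, but $\liminf_{\phi\downarrow0}V_I(\phi)\ge V_I(0)$ does not. It needs the mixing argument and, before that, $\phi_{\max}>0$; otherwise $V_I=-\infty$ on all of $(0,\infty)$. The paper secures $\phi_{\max}>0$ by observing that $D(0)>0$ forces $g'\ne g$. Pooling all types on an informative $\tau$ at transfer $\theta_1C(\tau)$ then gives wedge $C(\tau)\sum_{k\ge2}(G'_k-G_k)(\theta_{k-1}-\theta_k)>0$.
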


\begin{proof}
See Appendix~\ref{app:gen-cutoff-pf}.
\end{proof}

The interval shape (Part~1) is pure constraint nesting and needs only the standing primitives. Indeed, tightening the induce floor \eqref{eq:invest-ic} shrinks the induce-feasible set and enlarges the deter-feasible set, so $D$ falls monotonically. What FOSD adds, and all it adds, is \emph{non-emptiness} (Part~2): value monotonicity (Proposition~\ref{prop:Vf-ntype}) gives $D(0)\ge0$.

With two types, the single rent gap and the reporting constraints yield the cost thresholds characterized in Lemma~\ref{lem:PI}. With more types, investment incentives can depend on several adjacent rent gaps. In general, the investment constraint does not reduce to a cost threshold on a single assigned experiment, although special shifts may place positive weight on only one gap.

The cutoff property extends to finitely many types, but the two-type no-overinvestment conclusion need not. Define the first-best cutoff by
\begin{equation}
 \phi^{FB}_n:=\sum_{i=1}^n(g'_i-g_i)K_{\theta_i},
 \qquad K_{\theta_i}:=\max_{\tau\in\X}\{V(\tau)-\theta_i C(\tau)\}.
 \label{eq:phiFBn}
\end{equation}

This is the increase in maximal expected total surplus before subtracting investment cost.

\begin{proposition}[Overinvestment with three types]\label{prop:overinvestment}
There exists a binary-state decision problem with mutual-information costs, three cost types, and full-support distributions $g,g'$ such that $g'$ FOSD-dominates $g$ toward efficient types and $\phi_I>\phi^{FB}_3$.
\end{proposition}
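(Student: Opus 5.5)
The plan is to build an explicit numerical example on the Example's decision problem of Section~\ref{sec:example}: binary state, $p_0=\tfrac12$, actions $l,m,r$, and mutual-information cost. With $\theta_1>\theta_2>\theta_3$ and $g,g'$ to be chosen, the goal is a sunk cost $\phi>\phi^{FB}_3$ with $V_I(\phi)\ge V_{NI}(\phi)$. By Proposition~\ref{prop:cutoff-ntype}, $D$ is non-increasing and $\Phi_D=[0,\phi_I]$. So it suffices to exhibit one $\phi^\dagger>\phi^{FB}_3$ with $D(\phi^\dagger)\ge 0$; then $\phi_I\ge\phi^\dagger>\phi^{FB}_3$.

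\emph{Mechanism.} With two types, the only rent gap is $U_H-U_L=(\thL-\thH)C_L$, which is at most the surplus gap. This is why $\phi_I\le\phi^{FB}$ in Proposition~\ref{prop:under}. With three types, the principal also gains because investment relaxes distortions at intermediate types, not only because total surplus rises. I would choose an FOSD shift that moves mass from type $1$ to type $2$, with $\theta_1$ large enough that $K_{\theta_1}$ is close to $K_{\theta_2}$ (lumpiness makes both small). Then the first-best gain $\phi^{FB}_3=(g'_2-g_2)(K_{\theta_2}-K_{\theta_1})$ is small.

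Under $g$, type $2$ has a tiny $g_2$. Its virtual multiplier $\lambda_2=\theta_2+\tfrac{G_3}{g_2}(\theta_2-\theta_3)$ is then large, and by the lumpiness of Appendix~\ref{app:lumpy} it lies above $\theta^*$, so type $2$ is excluded from information. Under $g'$, $\lambda_2$ falls below $\theta^*$, so type $2$ receives an informative experiment. The principal's screening gain $V_{g'}-V_g$ then contains a discrete jump, which should exceed $\phi^{FB}_3$.

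\emph{Steps, in order.}
\begin{enumerate}[label=(\arabic*)]
  \item Fix numbers for $\theta_i$, $g$, and $g'$. Compute $K_{\theta_i}$ by concavification, as in Claim~\ref{claim:example}, and hence $\phi^{FB}_3$.
  \item Bound $V_{NI}(\phi)\le V_g$ using the ironing characterization of Lemma~\ref{lem:iron}, and compute $V_g$ in closed form.
  \item Exhibit an explicit contract in $\mathcal P_3$ satisfying (IC-I$_3$) at $\phi^\dagger$ slightly above $\phi^{FB}_3$, and show its $g'$-payoff is at least $V_g$. For this I would start from the $g'$-screening contract, whose minimal rents are $U_3-U_2=(\theta_2-\theta_3)C_2$ and $U_2-U_1=(\theta_1-\theta_2)C_1$. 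By \eqref{eq:abel-wedge}, the investment wedge is $(G'_2-G_2)(U_2-U_1)+(G'_3-G_3)(U_3-U_2)$.
\end{enumerate}
If the wedge falls short of $\phi^\dagger$, I would raise rents by adding a lump-sum transfer to types $2,3$ (equivalently, slack in $U_2-U_1$). Each unit of wedge then costs the principal $G'_2$ units of $g'$-payoff per unit of rent gap, divided by the weight $G'_2-G_2$.

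\emph{Main obstacle.} The hard part is step (3): making the investment-inducing contract both satisfy (IC-I$_3$) and beat $V_g$ once the required rents are paid. Rents paid to all types above $1$ are weighted by $G'_2$, while they generate wedge only at rate $G'_2-G_2$. The jump $V_{g'}-V_g$ must therefore be large relative to $\tfrac{G'_2}{G'_2-G_2}\phi^{FB}_3$.

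I would first try making $g_1$ nearly $1$ and $g'_1$ moderate, so that $G'_2-G_2$ is close to $G'_2$ and the rent-cost ratio is near one. I would also take $\theta_1$ so large that $K_{\theta_1}$ nearly equals $K_{\theta_2}$, making $\phi^{FB}_3$ tiny while the exclusion-versus-inclusion jump of type $2$ under the principal's problem stays of order one. Everything would then be verified numerically with the closed forms $p_\theta=[1+\exp(3/\theta)]^{-1}$ and the constants $\theta^*$ and $\eta_C$.
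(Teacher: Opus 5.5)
\textbf{Gap: the proposed shift cannot produce overinvestment.} Your outer reduction is sound and matches the paper's: by Proposition~\ref{prop:cutoff-ntype} it suffices to find one $\phi^\dagger>\phi^{FB}_3$ with $D(\phi^\dagger)\ge0$, bounding $V_{NI}\le V_g$ and exhibiting an explicit inducing contract. The failure is in the mechanism. You move mass only from type~1 to type~2, so $g'_3=g_3$; your formula $\phi^{FB}_3=(g'_2-g_2)(K_{\theta_2}-K_{\theta_1})$ presumes exactly this. For every such shift, overinvestment is impossible, whatever the decision problem, $\theta$, $g$ and $g'$.

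To see this, let $\Delta:=g'_2-g_2$. Then \eqref{eq:abel-wedge} makes the wedge $\Delta(U_2-U_1)$. Type~1's constraint against reporting~2 gives $C_2\ge(U_2-U_1)/(\theta_1-\theta_2)$. Type~3's constraint against reporting~2 gives $U_3\ge U_2+(\theta_2-\theta_3)C_2$. Raising the bottom gap therefore spills rent onto type~3, and every contract satisfying \eqref{eq:invest-ic} earns at most
\[
  \textstyle\sum_i g'_iK_{\theta_i}-\frac{\phi}{\Delta}\Bigl[g'_2+g'_3\,\frac{\theta_1-\theta_3}{\theta_1-\theta_2}\Bigr],
\]
which is strictly decreasing in $\phi$. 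For the deterrence side, take $\tau_1\in\argmax_{\tau\in\X}\{V(\tau)-\theta_1C(\tau)\}$, assign it to types~1 and~2, give type~3 a first-best experiment, and use minimal rents. Since $K_{\theta_2}\ge K_{\theta_1}+(\theta_1-\theta_2)C(\tau_1)$, this contract's wedge $\Delta(\theta_1-\theta_2)C(\tau_1)$ is at most $\phi^{FB}_3=\Delta(K_{\theta_2}-K_{\theta_1})$. It therefore deters at every $\phi>\phi^{FB}_3$ and earns $(1-g_3)K_{\theta_1}+g_3[K_{\theta_3}-(\theta_1-\theta_3)C(\tau_1)]$ under $g$. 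At $\phi=\phi^{FB}_3$, the displayed bound minus this payoff equals $g_3(\theta_1-\theta_3)\bigl[C(\tau_1)-\frac{K_{\theta_2}-K_{\theta_1}}{\theta_1-\theta_2}\bigr]\le0$. Hence $D(\phi)<0$ for all $\phi>\phi^{FB}_3$, and $\phi_I\le\phi^{FB}_3$. This is the two-type argument of Proposition~\ref{prop:under}, with the spillover only hurting the inducing side, so no tuning of $g_1$, $g'_1$ or $\theta_1$ in your step~(3) can rescue it.

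\textbf{The value jump is also illusory.} $K_\lambda$ is a maximum of functions affine in $\lambda$, hence convex and continuous. Lumpiness (Appendix~\ref{app:lumpy}) is a jump in information cost at $\theta^*$, not in value. If $\theta_1$ is large and $K_{\theta_2}$ is close to $K_{\theta_1}=0$, then $\theta_2$ sits just below $\theta^*$. Then $K_{\lambda'_2}\le K_{\theta_2}$ is equally small, and including type~2 is worth nothing of order one.

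\textbf{What is needed.} The incentive must be carried by the top gap $U_3-U_2$, so the shift must raise $G_3$. That gap can be pushed to its ceiling $(\theta_2-\theta_3)C_3$ with no spillover, since no type lies above type~3. The paper takes $v=\max\{0,3x-1.6,1.4-3x\}$, $\theta=(3/2,1,1/2)$, $g=(1/4,7/10,1/20)$ and $g'=(1/20,1/20,9/10)$. The top gap then has incentive weight $G'_3-G_3=17/20$ against payment weight $G'_3=9/10$. The contract has costs $(0.02,0.50,\ln2)$ and rents $(0,0.02,0.02+\tfrac12\ln2)$. It gives wedge $\approx0.299>\phi^{FB}_3\approx0.295$ and $g'$-payoff $\approx0.663>V_g\approx0.634$, so $D(0.297)>0$.
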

\begin{proof}
See Appendix~\ref{app:overinvestment}.
\end{proof}

This proposition shows that the principal can strictly prefer inducing investment even when its cost exceeds the resulting gain in first-best expected surplus. The contrast with the two-type model lies in how rents provide investment incentives. Normalize the least efficient type's rent to zero. With two types, one rent gap determines both the researcher's gain from investment and the principal's expected rent payment. Together with the available deterrence contract, this restriction rules out overinvestment. With more types, investment incentives weight adjacent rent gaps by $G'_k-G_k$, while expected rent payments after investment weight them by $G'_k$. Different gaps can therefore provide different amounts of investment incentive per unit of expected rent payment. The example relies mainly on the gap between the two most efficient types, making investment more profitable than deterrence even above the first-best investment threshold.

\section{Concluding Remarks}\label{sec:conclude}
This paper studies contracting for information when a researcher makes a hidden investment before learning his privately observed cost type. With two cost types, the contracting problem reduces to constrained information design. The principal may underprovide investment but never induces investment beyond the first-best threshold. With finitely many types, the investment cutoff survives, while overinvestment becomes possible already with three types. Thus, the cutoff structure alone does not determine whether private investment incentives are excessive or insufficient.

Investment incentives also reshape the experiments assigned to researchers. In the binary-state example, every optimal low-type experiment requires three posteriors over a range of investment costs. This affinely dependent support matters for implementation: a fixed posterior-contingent payment rule cannot uniquely implement the prescribed experiment. Equal payments across its support create a stronger obstacle, allowing a strictly cheaper deviation and therefore ruling out even weak implementation.

\paragraph{Two assumptions.} We now discuss two important assumptions of the model. First, the experiment is contractible. This is a strong assumption, but a natural one in settings where experimental design and results are observable and verifiable. Clinical trials, for example, are subject to preregistration and disclosure requirements. Because contractibility favors the principal, our analysis can be interpreted as an upper bound on what she can achieve. \citet{yoder2022designing} and \citet{wang2025contracting} also study methods-based contracting. When the experiment cannot be contracted on directly, the principal must instead rely on coarser instruments, for example, conditioning payments on the realized state or the experiment result, or, with several researchers, having them monitor one another.

Second, the principal commits before the researcher invests, and investment precedes the realization of his type. This timing applies to settings where a project-specific, cost-reducing investment changes the \emph{distribution} of the researcher's future cost rather than responding to a cost already realized. The principal can influence investment incentives through the ex ante contract. If the researcher received a private signal about his future cost before investing, the contract would also need to account for this initial private information. Investment could nevertheless improve the distribution of his subsequently realized cost.

\paragraph{Takeaways.} Two broader lessons stand out. First, an improvement in the researcher's cost distribution raises the principal's screening value, but this value comparison alone does not determine whether inducing the improvement is socially desirable. The principal compares payoffs net of transfers, while the first-best benchmark compares total surplus net of investment cost. The three-type example shows that these comparisons can favor different investment decisions even under full commitment.

Second, investment incentives can complicate implementation by making affinely dependent posterior support necessary. They can also affect prior sensitivity: when a cost cap binds throughout a two-posterior regime, the same support cannot remain optimal over an open interval of priors.

\newpage
\appendix

\section{The First-Best Benchmark}\label{app:FB}
The first best eliminates all contracting frictions. Conditional on each cost type, the efficient experiment choice function $\chi^{FB}$ satisfies
\[
  \chi^{FB}(\thH)\in\argmax_{\tau\in\X}\{V(\tau)-\thH C(\tau)\},\qquad
  \chi^{FB}(\thL)\in\argmax_{\tau\in\X}\{V(\tau)-\thL C(\tau)\}.
\]

\section{The Second-Best Benchmark}\label{app:obs}

Fix the probability $f$ of the high type and set aside the investment decision. The researcher privately observes his realized cost type. The principal's screening payoff, subject to reporting incentives and interim participation, is
\begin{equation}
  V_f=\max_{(\chi,T)} f[V(\chi(\thH))-T(\thH)]+(1-f)[V(\chi(\thL))-T(\thL)]
  \quad\text{s.t. \eqref{eq:IC}, \eqref{eq:IR}.} \tag{$P_f$}\label{eq:Pf}
\end{equation}

The solution is characterized by \citet{yoder2022designing}. It is the $\eta\to\infty$ special case of Lemma~\ref{lem:PNI}, where the cost constraint becomes vacuous. Write $V_\alpha:=V_f|_{f=\alpha}$, $W_f(\chi,T)=f[V(\chi(\thH))-T(\thH)]+(1-f)[V(\chi(\thL))-T(\thL)]$, and $\mathcal P=\{(\chi,T):\text{\eqref{eq:IC},\eqref{eq:IR}}\}$, so $V_f=\max_{\mathcal P}W_f$.

\begin{lemma}\label{lem:Vf}
The screening value $V_f$ is convex and non-decreasing in $f$. Moreover, an optimal screening contract $(\chi^f,T^f)$ can be selected such that, for every $f'\ge f$,
\[
  V_{f'}\ge W_{f'}(\chi^f,T^f)\ge V_f.
\]
\end{lemma}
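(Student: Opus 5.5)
Convexity follows from writing $V_f$ as a pointwise maximum of functions that are affine in $f$. For any fixed $(\chi,T)\in\mathcal P$, the map $f\mapsto W_f(\chi,T)$ is affine. Its slope is
\[
  S_H-S_L,\qquad S_\theta:=V(\chi(\theta))-T(\theta).
\]
The feasible set $\mathcal P$ is cut out by \eqref{eq:IC} and \eqref{eq:IR} only, so it does not depend on $f$. Hence $V_f=\max_{\mathcal P}W_f$ is a supremum of affine functions of $f$ and is convex. The maximum is attained, because the reduction of \eqref{eq:Pf} to the $\eta\to\infty$ case of Lemma~\ref{lem:PNI} gives an explicit optimizer.

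For monotonicity and the displayed chain, I will select the optimal contract $(\chi^f,T^f)$ from Lemma~\ref{lem:PNI} with $\eta\to\infty$, with $\alpha$ replaced by $f$:
\begin{itemize}
  \item $\chi^f(\thH)$ is $\thH$-optimal;
  \item $\chi^f(\thL)$ is optimal at $\mu_f:=\thL+\frac{f}{1-f}(\thL-\thH)$;
  \item $T^f(\thL)=\thL C(\chi^f(\thL))$ and $T^f(\thH)=\thH C(\chi^f(\thH))+(\thL-\thH)C(\chi^f(\thL))$.
\end{itemize}
Since $(\chi^f,T^f)\in\mathcal P$ is feasible at every $f'$, we get $V_{f'}\ge W_{f'}(\chi^f,T^f)$ immediately. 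By affinity in $f'$,
\[
  W_{f'}(\chi^f,T^f)=V_f+(f'-f)(S_H-S_L),
\]
so the remaining claim reduces to showing $S_H\ge S_L$ for this selected contract.

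Substituting the transfers gives
\[
  S_H-S_L=\bigl[V(\chi^f(\thH))-\thH C(\chi^f(\thH))\bigr]-\bigl[V(\chi^f(\thL))-\thH C(\chi^f(\thL))\bigr].
\]
This is nonnegative because $\chi^f(\thH)$ maximizes $V-\thH C$ over $\X$ and $\chi^f(\thL)\in\X$ is a feasible alternative. In words, the high type's net payoff to the principal is at least the low type's: the $\thL$-rent paid to the high type is exactly offset by the $\thH$-optimality of his experiment. Then $f'\ge f$ yields $W_{f'}(\chi^f,T^f)\ge V_f$, and taking $f'$ arbitrary gives $V_{f'}\ge V_f$, i.e. $V_f$ is non-decreasing.

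The only delicate step is justifying the selection: that an optimal contract with minimal rents and a $\thH$-optimal high-type experiment exists at every $f\in(0,1)$. I plan to take this directly from Lemma~\ref{lem:PNI} with the cap vacuous, as the paper asserts, which also supplies attainment of the maximum. If needed, I would instead argue directly. Binding \eqref{eq:IR} for $\thL$ and \eqref{eq:IC} for $\thH$ lowers transfers without affecting feasibility. The high type's experiment then enters the objective only through $V-\thH C$, so it can be replaced by a $\thH$-optimizer; this preserves \eqref{eq:IC} for $\thL$ because it weakly raises $C(\chi(\thH))$, by Lemma~\ref{lem:mono} applied at the multipliers $\thH<\mu_f$.
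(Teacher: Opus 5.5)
Your proposal is correct and follows the paper's proof. Convexity comes from $\mathcal P$ being independent of $f$ while $W_f$ is affine in $f$. For the displayed chain you select the minimal-rent contract with a $\thH$-optimal high-type experiment, for which $W_{f'}-W_f=(f'-f)(S_H-S_L)$, and $S_H-S_L\ge0$ by $\thH$-optimality.

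The only difference is that the paper also treats the endpoints $f\in\{0,1\}$ explicitly. Your selection with $\mu_0=\thL$ (both types first-best, minimal rents) is optimal at $f=0$, and you need that case for monotonicity on all of $[0,1]$. At $f=1$ the claim is trivial.
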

\begin{proof}
The feasible set $\mathcal P$ is independent of $f$, and $W_f(\chi,T)$ is affine in $f$ for each fixed contract. Hence $V_f=\max_{(\chi,T)\in\mathcal P}W_f(\chi,T)$ is convex.

Select an optimal screening contract $(\chi^f,T^f)$ with a first-best high-type experiment and minimal rents. Its transfers satisfy
\[
  T^f(\thL)=\thL C(\chi^f(\thL)),\qquad
  T^f(\thH)=\thH C(\chi^f(\thH))
    +(\thL-\thH)C(\chi^f(\thL)).
\]

For $0<f<1$, these properties follow from the unconstrained screening characterization. At $f=0$, assign each type a first-best experiment and use the displayed transfers; cost monotonicity ensures incentive compatibility. At $f=1$, assign the low type the uninformative experiment and the high type a first-best experiment, again using the displayed transfers. These contracts attain the corresponding endpoint screening values.

For any $f'\ge f$, substitution gives
\begin{equation}
\begin{aligned}
  &W_{f'}(\chi^f,T^f)-W_f(\chi^f,T^f)\\
  &\quad=(f'-f)\Bigl\{
    [V(\chi^f(\thH))-\thH C(\chi^f(\thH))]
    -[V(\chi^f(\thL))-\thH C(\chi^f(\thL))]
  \Bigr\}\ge0,
\end{aligned}
\label{eq:netgain}
\end{equation}
because $\chi^f(\thH)$ maximizes $V-\thH C$. The same contract is feasible under $f'$, so
\[
  V_{f'}\ge W_{f'}(\chi^f,T^f)
  \ge W_f(\chi^f,T^f)=V_f.
\]

Thus $V_f$ is non-decreasing.
\end{proof}

Under the rent-minimizing screening contract, the efficient type receives his first-best experiment and earns the rent $(\thL-\thH)C(\chi(\thL))$. This rent equals the cost saving he would enjoy from running the inefficient type's experiment. After paying it, the principal's payoff advantage from the efficient type is therefore
\[
  \bigl[V(\chi(\thH))-\thH C(\chi(\thH))\bigr]
  -\bigl[V(\chi(\thL))-\thH C(\chi(\thL))\bigr]\ge0.
\]

The remaining advantage is the surplus gain from assigning the efficient type his own optimal experiment. Consequently, a higher probability of the efficient type weakly raises the principal's payoff from this contract.

The next result shows the conclusion survives when investment is a hidden action: her equilibrium payoff never falls below the benchmark $V_\alpha$ in which the researcher is exogenously a high type with probability $\alpha$. Here, $V_I(\phi),V_{NI}(\phi)$ are the values of \eqref{eq:PI},\eqref{eq:PNI}.

\begin{proposition}[Investment is never a liability]\label{prop:noliability}
For every $\phi\ge0$, $\max\{V_I(\phi),V_{NI}(\phi)\}\ge V_\alpha$.
\end{proposition}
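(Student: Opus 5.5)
We have to show that for every $\phi\ge0$, either $V_I(\phi)\ge V_\alpha$ or $V_{NI}(\phi)\ge V_\alpha$. Start from the rent-minimizing screening contract $(\chi^\alpha,T^\alpha)$ of Lemma~\ref{lem:Vf} for $f=\alpha$. It gives the high type a first-best experiment and uses the displayed transfers. In particular $U_L=0$ and $U_H=(\thL-\thH)C(\chi^\alpha(\thL))=(\thL-\thH)\CaL$, so its rent gap is $U_H-U_L=(\thL-\thH)\CaL$. Since this contract satisfies \eqref{eq:IC}--\eqref{eq:IR}, it is feasible for exactly one of the two programs, possibly both, depending on how $\phi$ compares with $(\beta-\alpha)(\thL-\thH)\CaL$. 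Equivalently, in scaled units, it depends on how $\eta$ compares with $\CaL$.

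\emph{Case $\eta\ge\CaL$.} Here $(\beta-\alpha)(U_H-U_L)\le\phi$, so (IC-NI) holds and the contract is feasible in \eqref{eq:PNI}. Under \eqref{eq:PNI} types are drawn with probability $\alpha$, so its objective equals $W_\alpha(\chi^\alpha,T^\alpha)=V_\alpha$. Hence $V_{NI}(\phi)\ge V_\alpha$. One can equally read this off Lemma~\ref{lem:PNI}: the cap $C(\tau)\le\eta$ does not exclude the $\muA$-optimizer, whose cost is $\CaL$.

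\emph{Case $\eta\le\CaL$.} Now $(\beta-\alpha)(U_H-U_L)\ge\phi$, so \eqref{eq:ICI} holds and the same contract is feasible in \eqref{eq:PI}. Its objective there is $W_\beta(\chi^\alpha,T^\alpha)$. By inequality~\eqref{eq:netgain} in the proof of Lemma~\ref{lem:Vf} with $f'=\beta\ge\alpha$, we have $W_\beta(\chi^\alpha,T^\alpha)\ge W_\alpha(\chi^\alpha,T^\alpha)=V_\alpha$. The key point is that this holds because $\chi^\alpha(\thH)$ maximizes $V-\thH C$. Thus $V_I(\phi)\ge V_\alpha$.

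The boundary $\eta=\CaL$ belongs to both cases and causes no issue. The case $\phi=0$ falls under the second case, because $\CaL>0$ forces $\eta=0<\CaL$. Combining the cases gives $\max\{V_I,V_{NI}\}\ge V_\alpha$ for all $\phi\ge0$.

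The only delicate step is that the selected contract must genuinely satisfy the program constraints, with the rent gap expressed through the cost. This requires the minimal-rent selection with $U_L=0$ and $U_H=(\thL-\thH)C(\chi^\alpha(\thL))$, which Lemma~\ref{lem:Vf} supplies for $\alpha\in(0,1)$. Using the common cost $\CaL$ of all $\muA$-optimizers, which holds by the differentiability assumption, makes the case split independent of which optimizer is chosen.
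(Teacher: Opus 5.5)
Your proof is correct and follows the paper's argument. You take the rent-minimizing $\alpha$-screening contract with rent gap $(\thL-\thH)\CaL$, use it as a deterrence contract when $\eta\ge\CaL$, and use it as an inducing contract when $\eta\le\CaL$, where \eqref{eq:netgain} gives $W_\beta(\chi^\alpha,T^\alpha)\ge W_\alpha(\chi^\alpha,T^\alpha)=V_\alpha$.
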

\begin{proof}
Select an optimal screening contract $(\chi^\alpha,T^\alpha)$ under $\alpha$, with rents
\[
  U_{\thL}=0,\qquad
  U_{\thH}=(\thL-\thH)\CaL.
\]

If $\eta\ge\CaL$, this contract satisfies the deterrence constraint, so $V_{NI}(\phi)\ge W_\alpha(\chi^\alpha,T^\alpha)=V_\alpha$. If $\eta<\CaL$, it satisfies the investment-inducing constraint, so $V_I(\phi)\ge W_\beta(\chi^\alpha,T^\alpha)\ge W_\alpha(\chi^\alpha,T^\alpha)=V_\alpha$, where the second inequality follows from \eqref{eq:netgain}.

Therefore, $\max\{V_I(\phi),V_{NI}(\phi)\}\ge V_\alpha$ for every $\phi\ge0$.
\end{proof}

\begin{remark}[Investment incentives under commitment]
\label{rem:holdup}
The investment opportunity never reduces the principal's optimal payoff below $V_\alpha$. This benchmark already leaves information rent to the high type. Inducing investment requires the rent difference between types to provide sufficient incentives, potentially changing both transfers and assigned experiments. With two types, the resulting payoff comparison rules out overinvestment, although underinvestment can occur. This distortion arises under full commitment and involves no ex post revision of contractual terms.
\end{remark}

\section{Omitted Proofs}\label{app:omitted}

\subsection{Proof of Lemma~\ref{lem:blackwell} (Blackwell monotonicity)}\label{app:blackwell}
\begin{proof}
Identify a posterior with $x\in[0,1]$ and write $x_0$ for the prior. Both experiments have mean $x_0$. By the concavification characterization \citep{kamenica2011bayesian}, for each $i$, there is an affine supporting majorant $L_i$ of $v-\theta_i c$ satisfying $L_i(x_0)=\operatorname{cav}(v-\theta_i c)(x_0)$. Optimality implies $\operatorname{supp}\tau_i \subseteq \{x:v(x)-\theta_i c(x)=L_i(x)\}$.

Set $\delta=\theta_1-\theta_2>0$ and define
\[
  Q(x):=L_2(x)-L_1(x)-\delta c(x).
\]

The majorant inequalities and contact equalities give
\[
  Q\ge0\quad\text{on }\operatorname{supp}\tau_1,
  \qquad
  Q\le0\quad\text{on }\operatorname{supp}\tau_2.
\]

Since $c$ is strictly convex, $Q$ is strictly concave. Its nonempty superlevel set $\{x\in[0,1]:Q(x)\ge0\}$ is therefore a closed interval $[a,b]$ containing $\operatorname{supp}\tau_1$ and hence $x_0$.

If $a=b$, then $\tau_1=\dirac_{x_0}$, so $\tau_2\succeq_B\tau_1$. Otherwise, strict concavity gives $Q>0$ on $(a,b)$, and consequently
\[
  \operatorname{supp}\tau_1\subseteq[a,b],
  \qquad
  \operatorname{supp}\tau_2
  \subseteq[0,a]\cup[b,1].
\]

For any convex function $\psi$, let $\ell$ be the affine function agreeing with $\psi$ at $a$ and $b$. Convexity implies $\psi\le\ell$ on $[a,b]$ and $\psi\ge\ell$ outside $(a,b)$. Thus
\[
  \mathbb E_{\tau_1}[\psi]
  \le \mathbb E_{\tau_1}[\ell]
  =\ell(x_0)
  =\mathbb E_{\tau_2}[\ell]
  \le \mathbb E_{\tau_2}[\psi].
\]

Thus $\tau_2$ dominates $\tau_1$ in convex order. For distributions of posterior beliefs with the same prior, this is equivalent to Blackwell dominance \citep{blackwell1951comparison,blackwell1953equivalent}, so $\tau_2\succeq_B\tau_1$.

Finally, if $C(\tau_2)>C(\tau_1)$, the two posterior distributions differ. Antisymmetry of convex order then makes the dominance strict.
\end{proof}

\subsection{Auxiliary results for two states}\label{app:capped}
\begin{lemma}[Uniqueness at a differentiability point]
\label{lem:binary-unique}
Under the maintained finite-action and strictly convex cost assumptions, let $|\Omega|=2$ and $\lambda>0$. If $K$ is differentiable at $\lambda$, then
\[
  \argmax_{\tau\in\X}\{V(\tau)-\lambda C(\tau)\}
\]
contains a unique distribution over posterior beliefs.
\end{lemma}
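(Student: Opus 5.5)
The plan is to combine the concavification characterization (as in the proof of Lemma~\ref{lem:blackwell}) with the fact that differentiability of $K$ at $\lambda$ pins down a common information cost. Identify beliefs with $x\in[0,1]$ and write $x_0$ for the prior. Set $\tilde v:=v-\lambda c$.

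First I use differentiability. $K$ is a maximum of the affine functions $\lambda\mapsto V(\tau)-\lambda C(\tau)$, so it is convex. Each optimizer $\tau$ at $\lambda$ supplies a subgradient $-C(\tau)$ of $K$ at $\lambda$. Differentiability therefore forces $C(\tau)=-K'(\lambda)$ for \emph{every} optimizer, so the information cost is constant on the optimal set.

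Next I describe the optimal set. Let $L$ be an affine supporting majorant of $\tilde v$ with $L(x_0)=\operatorname{cav}(\tilde v)(x_0)$, and let $S:=\{x:\tilde v(x)=L(x)\}$ be the contact set. A Bayes-plausible $\tau$ is optimal if and only if $\operatorname{supp}\tau\subseteq S$, because any law on $S$ with mean $x_0$ attains $\mathbb E_\tau[L]=L(x_0)$.

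I then show that $S$ is finite. Since $A$ is finite and states are binary, $[0,1]$ splits into finitely many closed intervals, on each of which $v$ is affine. On each such interval, $\tilde v-L$ is strictly concave (as $c$ is strictly convex) and $\le0$. A strictly concave function attains its maximum at no more than one point, so each interval contributes at most one contact point. Hence $S=\{s_1<\dots<s_k\}$. The optimal set is then the polytope of all distributions on $S$ with mean $x_0$, and by the first step $\mathbb E_\tau[c]$ is constant on it.

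The key step, and the main obstacle, is to show that this polytope is a single point. I will do this by exhibiting two feasible laws that differ in cost whenever it is not a singleton. The polytope is a singleton exactly when $S\cap[0,x_0]$ and $S\cap[x_0,1]$ contain, between them, only the points of one forced configuration: either $S\ni x_0$ alone (giving $\delta_{x_0}$), or exactly one point on each side of $x_0$, with $x_0\notin S$. Note that $S$ cannot lie strictly on one side of $x_0$, since some law on $S$ has mean $x_0$.

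Suppose instead there are two distinct points $s_i<s_j$ with $s_j\le x_0$ (the case $s_i\ge x_0$ is symmetric), and choose any $s_k\in S$ with $s_k>x_0$.
\begin{itemize}
\item If $s_j<x_0$, the two-point laws on $\{s_i,s_k\}$ and on $\{s_j,s_k\}$ with mean $x_0$ are both optimal. The first is a strict mean-preserving spread of the second.
\item If $s_j=x_0$, the same holds for the two-point law on $\{s_i,s_k\}$ relative to $\delta_{x_0}$.
\end{itemize}
In either case, strict convexity of $c$ gives strictly different costs, contradicting the constant cost on the optimal set. So at most one point of $S$ lies on each side of $x_0$, counting $x_0$ itself as lying on both sides.

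Finally, I check that each remaining configuration admits exactly one law with mean $x_0$: either $\delta_{x_0}$, or the unique two-point Bayes-plausible law on $\{s_-,s_+\}$ with $s_-<x_0<s_+$. In both cases Bayes plausibility fixes the weights, which proves uniqueness.
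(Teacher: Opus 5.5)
\paragraph{Gap: a missing case in the key step.} Your key step chooses $s_k\in S$ with $s_k>x_0$ (or $s_k<x_0$ in the symmetric case), but such a point need not exist. You correctly note that $S$ cannot lie \emph{strictly} on one side of $x_0$. It can, however, lie \emph{weakly} on one side, and nothing in your argument rules this out. An example is $S=\{s,x_0\}$ with $s<x_0$, where the supporting line $L$ touches $\tilde v$ at $x_0$ and once to its left. In that configuration $S\cap[0,x_0]$ contains two points, yet neither bullet can be run because there is no $s_k$. Your ``exactly when'' characterization is also wrong as stated: the optimal set is the singleton $\{\delta_{x_0}\}$, although $S$ is neither $\{x_0\}$ nor a pair straddling $x_0$.

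The conclusion survives, and the patch is one line. If $S\subseteq[0,x_0]$ or $S\subseteq[x_0,1]$, every mean-$x_0$ law on $S$ must be $\delta_{x_0}$, so uniqueness is immediate. Otherwise $S$ has points strictly on both sides of $x_0$, so your bullets apply. They give at most one contact point in $[0,x_0]$ and at most one in $[x_0,1]$, which forces $x_0\notin S$ and $S=\{s_-,s_+\}$. With this case added, your proof is complete.

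\paragraph{Comparison with the paper.} Once patched, your argument uses the same ingredients as the paper's:
\begin{itemize}
\item a common optimal cost from differentiability of $K$ (your subgradient argument spells out what the paper simply asserts);
\item finiteness of the contact set, from strict concavity of $\tilde v-L$ on each affine piece of $v$;
\item strict convexity of $c$, used to generate two optimizers with different costs.
\end{itemize}
The paper organizes the last step so that no case split relative to the prior is needed. It first shows that no optimizer can charge three contacts $p_1<p_2<p_3$: shifting mass from $p_1,p_3$ to $p_2$ preserves optimality and strictly lowers cost. Hence every optimizer has at most two support points. It then observes that the equal mixture of two distinct optimizers would be an optimizer with at least three support points, a contradiction. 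Your route gives a slightly sharper description of the contact set $S$ itself. The price is the side-of-the-prior bookkeeping, which is exactly where the gap arose.
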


\begin{proof}
Differentiability implies that every optimizer has cost $-K'(\lambda)$. Let $L$ be an affine supporting majorant of $v-\lambda c$ attaining its concavification at the prior. Every optimizer is supported on the contact set $\mathcal{S}=\{p:v(p)-\lambda c(p)=L(p)\}$.

Write $v_a(p):=\sum_\omega p(\omega)u(a,\omega)$. The contact set is finite. For each action $a$, the function $v_a-\lambda c-L$ is strictly concave and nonpositive, so it can vanish at at most one point. Every contact is a zero of this function for some action, and there are finitely many actions.

Suppose an optimizer assigns positive probability to three distinct contacts $p_1<p_2<p_3$. Write $p_2=q p_1+(1-q)p_3$, with $q\in(0,1)$. Moving sufficiently small masses $\varepsilon q$ and $\varepsilon(1-q)$ from $p_1$ and $p_3$ to $p_2$ preserves Bayes plausibility. Since all three points are contacts with the same affine majorant, the modified experiment remains optimal. But strict convexity of $c$ strictly lowers its cost, contradicting the common optimal cost. Thus every optimizer has at most two support points.

If two distinct optimizers existed, their equal mixture would also be optimal. Its support would contain at least three points, since probabilities on a fixed set of at most two points are uniquely determined by the prior. This is a contradiction.
\end{proof}
\begin{lemma}[Cost-capped optima are Blackwell-dominated]
\label{lem:capped}
Let $|\Omega|=2$, $\theta'\ge\theta>0$, and $\eta\ge0$. Suppose $K$ is differentiable at $\theta$, and let $\tau_\theta^*$ be the unique unconstrained $\theta$-optimizer. Every solution $\tau$ to
\[
  \max_{\tau\in\X:\,C(\tau)\le\eta}
  \{V(\tau)-\theta'C(\tau)\}
\]
satisfies $\tau\preceq_B\tau_\theta^*$, with strict Blackwell dominance whenever $C(\tau)<C(\tau_\theta^*)$.
\end{lemma}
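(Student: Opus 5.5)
The plan is to reduce the capped problem to an unconstrained problem at some effective multiplier $\lambda\ge\theta'\ge\theta$, and then apply Lemma~\ref{lem:blackwell}. Two cases arise, according to whether the cap binds.

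\emph{Case 1: the cap is slack, or the capped solution is also unconstrained-optimal at $\theta'$.} Then $\tau$ solves $\max_{\X}\{V-\theta'C\}$ with $\theta'\ge\theta$. If $\theta'>\theta$, Lemma~\ref{lem:blackwell} gives $\tau\preceq_B\tau_\theta^*$ directly, with strict dominance when $C(\tau)<C(\tau_\theta^*)$. If $\theta'=\theta$, uniqueness from Lemma~\ref{lem:binary-unique} forces $\tau=\tau_\theta^*$.

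\emph{Case 2: the cap binds.} Here we use Lagrangian duality. The problem maximizes the affine functional $V-\theta'C$ over the compact convex set $\{\tau\in\X: C(\tau)\le\eta\}$. Since $C$ is affine and Slater holds when $\eta>0$ (because $\dirac_{p_0}$ has cost $0<\eta$), there is a multiplier $\kappa\ge0$ such that $\tau$ maximizes $V-(\theta'+\kappa)C$ over all of $\X$. If $\eta=0$, then $\tau=\dirac_{p_0}$, which is Blackwell-dominated by everything, so the claim is trivial. Setting $\lambda=\theta'+\kappa\ge\theta$, we apply Lemma~\ref{lem:blackwell} to the pair $(\lambda,\theta)$ whenever $\lambda>\theta$. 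If $\lambda=\theta$, then $\tau$ is $\theta$-optimal and hence equals $\tau_\theta^*$ by Lemma~\ref{lem:binary-unique}. Note that Lemma~\ref{lem:blackwell} holds for \emph{any} optimizers at the two multipliers, so differentiability at $\lambda$ is not needed. Only the $\theta$ side requires it, and there only for the equality case.

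For strictness, Lemma~\ref{lem:blackwell} already yields strict dominance when $C(\tau_\theta^*)>C(\tau)$. In the degenerate case $\lambda=\theta$ we have $\tau=\tau_\theta^*$, so the costs are equal and nothing needs to be shown.

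The main obstacle I expect is justifying the dual step cleanly: we need the existence of $\kappa$ with no duality gap, together with the fact that the specific solution $\tau$ (not merely some solution) maximizes the Lagrangian. I would argue this directly. Define $g(\kappa)=\max_{\X}\{V-(\theta'+\kappa)C\}+\kappa\eta$. Since $K$ is convex in its multiplier, strong duality for this one-constraint linear program over a compact convex set follows from a standard separating-hyperplane argument applied to the convex set $\{(C(\sigma),V(\sigma)-\theta'C(\sigma)):\sigma\in\X\}\subset\R^2$. Complementary slackness, $\kappa(C(\tau)-\eta)=0$, then shows that every primal optimum attains the Lagrangian maximum.
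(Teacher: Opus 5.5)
Your proposal is correct and follows essentially the same route as the paper: treat $\eta=0$ directly; for $\eta>0$ use Slater (via $\dirac_{p_0}$) and strong duality to obtain an effective multiplier $\lambda=\theta'+\kappa\ge\theta$; then apply Lemma~\ref{lem:blackwell} when $\lambda>\theta$ and Lemma~\ref{lem:binary-unique} when $\lambda=\theta$. Your separate slack-cap case is simply the $\kappa=0$ instance of the duality step, and your complementary-slackness sketch justifies the paper's unproved assertion that \emph{every} capped optimizer maximizes the Lagrangian.
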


\begin{proof}
If $\eta=0$, strict convexity of the posterior cost implies that $\tau$ is uninformative, so the result follows. Suppose $\eta>0$.

The uninformative experiment is strictly feasible. Strong duality therefore gives a multiplier $m\ge0$ such that every capped optimizer also maximizes $V-(\theta'+m)C$ over $\X$. Write $\lambda=\theta'+m\ge\theta$.

If $\lambda>\theta$, Lemma~\ref{lem:blackwell} gives $\tau\preceq_B\tau_\theta^*$, strictly whenever $C(\tau)<C(\tau_\theta^*)$. If $\lambda=\theta$, then $\tau$ is an unconstrained $\theta$-optimizer and hence equals $\tau_\theta^*$ by Lemma~\ref{lem:binary-unique}.
\end{proof}

\subsection{Proof of Lemma~\ref{lem:PNI}}\label{app:PNI}
\begin{proof}
Write $C_H=C(\chi(\thH))$ and $C_L=C(\chi(\thL))$. Reporting and deterrence incentives require $(\thL-\thH)C_L\le U_H-U_L\le (\thL-\thH)C_H$, and $U_H-U_L\le(\thL-\thH)\eta$. These inequalities imply $C_L\le\eta$ and $C_L\le C_H$.

For any allocation satisfying these restrictions, the principal minimizes transfers by setting $U_L=0$ and $U_H=(\thL-\thH)C_L$. These rents satisfy participation and preserve both reporting and deterrence incentives. Since both types have positive probability, every optimum uses these rents. Substituting gives the reduced program
\[
  \max_{\chi}\;
  \alpha[V(\chi(\thH))-\thH C_H]
  +(1-\alpha)[V(\chi(\thL))-\muA C_L],
  \qquad
  C_L\le\eta,\quad C_L\le C_H.
\]

Temporarily drop $C_L\le C_H$. The two maximizations then separate. Every high-type optimizer has cost $\CFBH$, and every unconstrained $\muA$-optimizer has cost $\CaL$. If $\eta\ge\CaL$, the latter optimizers are feasible, so every capped optimizer is also unconstrained-optimal and has cost $\CaL$. If $\eta<\CaL$, every capped optimizer instead has cost at most $\eta<\CaL$. Thus every pair of relaxed optimizers satisfies
\[
  C_L\le\CaL\le\CFBH=C_H,
\]
where the middle inequality follows from Lemma~\ref{lem:mono}. The relaxation is therefore exact, and every optimum solves the two separate maximizations. The transfer formulas follow from the rents above.
\end{proof}

\subsection{Proof of Lemma~\ref{lem:PI}}\label{app:PI}
\begin{proof}
Write $C_H=C(\chi(\thH))$ and $C_L=C(\chi(\thL))$. Reporting incentives require $(\thL-\thH)C_L\le U_H-U_L\le (\thL-\thH)C_H$, while investment incentives require $U_H-U_L\ge(\thL-\thH)\eta$. Together, these inequalities imply $C_H\ge\max\{C_L,\eta\}$.

For any allocation satisfying this restriction, the principal therefore minimizes transfers by setting $U_L=0$ and $U_H=(\thL-\thH)\max\{C_L,\eta\}$. The remaining feasibility condition is $C_H\ge\max\{C_L,\eta\}$.

If $C_L\le\eta$, the principal's objective becomes
\[
  \beta[V(\chi(\thH))-\thH C_H]
  +(1-\beta)[V(\chi(\thL))-\thL C_L]
  -\beta(\thL-\thH)\eta.
\]

The constraints are $C_H\ge\eta$ and $C_L\le\eta$, so the two choices separate as in Case~1.

If $C_L\ge\eta$, the objective becomes
\[
  \beta[V(\chi(\thH))-\thH C_H]
  +(1-\beta)[V(\chi(\thL))-\muB C_L],
\]
subject to $C_H\ge C_L\ge\eta$. Relax the constraints to $C_H,C_L\ge\eta$. Since $\thH<\muB$, the cost-monotonicity argument of Lemma~\ref{lem:mono}, applied on the common feasible set $\{\tau\in\X:C(\tau)\ge\eta\}$, ensures that any pair of relaxed optimizers satisfies $C_H\ge C_L$. The relaxation is therefore exact, giving the two optimization problems in Case~2.

It remains to determine which region is optimal. The high-type problem is identical in both regions.

Suppose first that $\eta\ge\CbL$. Every optimizer of the floored $\muB$-problem has cost exactly $\eta$. At $\eta=\CbL$, this follows because every unconstrained $\muB$-optimizer has cost $\CbL$. If $\eta>\CbL$, an optimizer with cost strictly above $\eta$ could be mixed with an unconstrained $\muB$-optimizer to obtain cost $\eta$ and a strictly higher objective: the former cannot be unconstrained-optimal because its cost differs from $\CbL$.

At cost $\eta$, the low-type contribution in the second region is
\[
  (1-\beta)[V(\tau)-\muB\eta]
  =(1-\beta)[V(\tau)-\thL\eta]
    -\beta(\thL-\thH)\eta.
\]

This is also its contribution in the first region, where the same experiment is feasible. Thus the first region attains at least as much value. Moreover, any global optimum in the second region has $C_L=\eta$ and also solves the first region. Hence every optimum has the characterization in Case~1.

Now suppose that $\eta<\CbL$. The second region attains low-type contribution $(1-\beta)K_{\muB}$, since every unconstrained $\muB$-optimizer is feasible. For any experiment with $C(\tau)\le\eta$, the contribution in the first region satisfies
\begin{align*}
  &(1-\beta)[V(\tau)-\thL C(\tau)]
    -\beta(\thL-\thH)\eta\\
  &\quad=
    (1-\beta)[V(\tau)-\muB C(\tau)]
    -\beta(\thL-\thH)[\eta-C(\tau)]\\
  &\quad<(1-\beta)K_{\muB}.
\end{align*}

The strict inequality follows because $C(\tau)\le\eta<\CbL$, whereas every unconstrained $\muB$-optimizer has cost $\CbL$. Since the capped problem attains its maximum, the second region is strictly better. This establishes Case~2.

The stated transfers follow from the rent formulas above.
\end{proof}

\subsection{Proof of Proposition~\ref{prop:dist}}\label{app:props23}
\begin{proof}
\emph{Without investment.} By Lemma~\ref{lem:PNI}, the high type receives a first-best experiment, while the low type solves
\[
  \max_{\tau\in\X:\,C(\tau)\le\eta}
  \{V(\tau)-\muA C(\tau)\}.
\]

Every unconstrained $\muA$-optimizer has cost $\CaL$. If $\eta\ge\CaL$, these optimizers are feasible, so every capped optimizer is also unconstrained-optimal. Otherwise, feasibility gives $C(\chi^*(\thL))\le\eta<\CaL$. Thus
\[
  C(\chi^*(\thL))\le\CaL\le\CFBL.
\]

\emph{With investment.} Consider a feasible inducing program. By Lemma~\ref{lem:PI}, the high type solves
\[
  \max_{\tau\in\X:\,C(\tau)\ge\eta}
  \{V(\tau)-\thH C(\tau)\}.
\]

If $\eta\le\CFBH$, every first-best optimizer is feasible, so every constrained optimizer is first-best. If $\eta>\CFBH$, the floor binds. Indeed, an optimizer with cost strictly above $\eta$ could be mixed with a first-best experiment to reach cost $\eta$, strictly improving its objective. Hence $C(\chi^*(\thH))=\eta>\CFBH$.

For the low type, distinguish the two cases of Lemma~\ref{lem:PI}. If $\eta<\CbL$, every unconstrained $\muB$-optimizer satisfies the floor. Thus every constrained optimizer is $\muB$-optimal and has cost $\CbL\le\CFBL$. If $\eta\ge\CbL$, the low type instead solves
\[
  \max_{\tau\in\X:\,C(\tau)\le\eta}
  \{V(\tau)-\thL C(\tau)\}.
\]

For $\eta<\CFBL$, the cap excludes every first-best optimizer and must bind: any optimizer with slack could be improved by mixing it with a first-best experiment. Therefore $C(\chi^*(\thL))=\eta<\CFBL$. For $\eta\ge\CFBL$, every first-best optimizer is feasible, and every constrained optimizer is first-best.

At $\eta=\CFBL$, the low-type cap binds without distortion; at $\eta=\CFBH$, the high-type floor binds without distortion.

\emph{Blackwell comparisons for two states.} Under \eqref{eq:PNI}, for $\eta>0$ apply Lemma~\ref{lem:capped} with $\theta'=\muA>\theta=\thL$. At $\eta=0$, the cap forces the uninformative experiment, which is Blackwell dominated by every experiment.

Under \eqref{eq:PI}, Case~2 follows from Lemma~\ref{lem:blackwell}, since $\muB>\thL$. In Case~1, for $0<\eta<\CFBL$ apply Lemma~\ref{lem:capped} with $\theta'=\theta=\thL$. At $\eta=0$, the cap again forces the uninformative experiment. For $\eta\ge\CFBL$, the assigned experiment equals the first-best reference experiment by Lemma~\ref{lem:binary-unique}. Thus the low type's experiment is Blackwell dominated by its first-best benchmark in either program.

Finally, at each reference multiplier, all optimizers have the same cost and value. Replacing an assigned first-best optimizer by a fixed first-best representative therefore preserves transfers, payoffs, and all contract constraints. This gives the stated optimal selection for general finite states. For binary states, uniqueness follows from Lemma~\ref{lem:binary-unique}.
\end{proof}

\subsection{Proof of Proposition~\ref{prop:cutoff}}\label{app:cutoff}
\begin{proof}
Write
\[
  D(\phi):=V_I(\phi)-V_{NI}(\phi).
\]

We show that $D$ is non-increasing and upper semicontinuous, with $D(0)\ge0$ and $D(\phi)<0$ for sufficiently large $\phi$.

\emph{Step 1: normalization and feasibility.} Reporting IC gives
\[
  (\thL-\thH)C(\chi(\thL))
  \le U_{\thH}-U_{\thL}
  \le(\thL-\thH)C(\chi(\thH)).
\]

In particular, $U_{\thH}\ge U_{\thL}\ge0$. Subtracting $U_{\thL}$ from both transfers preserves reporting incentives, participation, and the investment wedge, while increasing the principal's payoff by $U_{\thL}$. Thus both programs may be restricted to
\[
  \mathcal Z
  :=\{(\chi,T)\in\mathcal P:U_{\thL}=0\}.
\]

On this domain,
\[
  0\le U_{\thH}\le(\thL-\thH)\bar C,
  \qquad
  0\le T(\theta)\le\thL\bar C.
\]

Under the maintained assumptions, $\X$ is compact and convex, and $V$ and $C$ are continuous and affine. Hence $\mathcal Z$ is compact and convex, and every nonempty induce or deter program attains its maximum.

The investment wedge is at most
\[
  \bar\phi:=(\beta-\alpha)(\thL-\thH)\bar C.
\]
This bound is attained by assigning both types an experiment of cost $\bar C$ at the common payment $\thL\bar C$. Thus the inducing program is feasible exactly for $0\le\phi\le\bar\phi$. Set $V_I(\phi)=-\infty$ for $\phi>\bar\phi$. The uninformative zero-transfer contract is feasible for deterrence at every $\phi\ge0$, so $V_{NI}$ is finite on $[0,\infty)$.

\emph{Step 2: monotonicity and continuity.} As $\phi$ increases, the inducing feasible set shrinks and the deterring feasible set expands. Since neither objective depends on $\phi$, $V_I$ is non-increasing and $V_{NI}$ is non-decreasing. Therefore $D$ is non-increasing.

Both value functions are upper semicontinuous. Indeed, along a convergent sequence of feasible thresholds, compactness yields a convergent subsequence of optimizers attaining the limsup of the values. Continuity of the constraints makes the limiting contract feasible at the limiting threshold. Continuity of the objective then bounds the limsup by the value at that threshold.

Mixing contracts also shows that both value functions are concave on their finite domains: the reporting, participation, and investment constraints are affine, as are the objectives. They are therefore continuous in the interiors of those domains. At a finite endpoint $a$, concavity gives
\[
  F((1-t)a+tb)\ge(1-t)F(a)+tF(b),
  \qquad 0<t<1,
\]
for any other point $b$ in the domain, where $F$ denotes either value function. Letting $t\downarrow0$ and using upper semicontinuity proves continuity at the endpoint.

Consequently, $V_I$ is continuous on $[0,\bar\phi]$ and $V_{NI}$ on $[0,\infty)$. The extension of $V_I$ by $-\infty$ is upper semicontinuous, so $D$ is upper semicontinuous on $[0,\infty)$.

\emph{Step 3: the cutoff.} Reporting IC implies $U_{\thH}-U_{\thL}\ge0$, so the inducing constraint at $\phi=0$ excludes no screening contract. Hence $V_I(0)=V_\beta$. The deterrence program adds a constraint to the screening benchmark under $\alpha$, giving $V_{NI}(0)\le V_\alpha$. By Lemma~\ref{lem:Vf},
\[
  D(0)\ge V_\beta-V_\alpha\ge0.
\]

For $\phi>\bar\phi$, investment is infeasible and $D(\phi)=-\infty$.

It follows that
\[
  \Phi_D:=\{\phi\ge0:D(\phi)\ge0\}
\]
is nonempty, closed, and contained in $[0,\bar\phi]$. Monotonicity implies that $\phi\in\Phi_D$ entails $[0,\phi]\subseteq\Phi_D$. Thus $\Phi_D=[0,\phi_I]$ for some $\phi_I\in[0,\bar\phi]$, proving the result.
\end{proof}

\subsection{Proof of Proposition~\ref{prop:under}}\label{app:under}
\begin{proof}
We show that $V_I(\phi)<V_{NI}(\phi)$ whenever $\phi>\phi^{FB}$.

For any contract inducing investment, participation and the investment constraint imply
\[
  \beta U_{\thH}+(1-\beta)U_{\thL}
  =U_{\thL}+\beta(U_{\thH}-U_{\thL})
  \ge\frac{\beta}{\beta-\alpha}\phi.
\]

Since the surplus generated by each type is at most its first-best value, the principal's payoff satisfies
\[
\begin{aligned}
  W_\beta(\chi,T)
  &\le \beta K_{\theta_H}+(1-\beta)K_{\theta_L}
       -\bigl[\beta U_{\thH}+(1-\beta)U_{\thL}\bigr]\\
  &\le \beta K_{\theta_H}+(1-\beta)K_{\theta_L}
       -\frac{\beta}{\beta-\alpha}\phi
  =:L(\phi).
\end{aligned}
\]

Thus $V_I(\phi)\le L(\phi)$, with the inequality also holding when the inducing program is infeasible and $V_I(\phi)=-\infty$. The function $L$ is strictly decreasing and satisfies
\[
  L(\phi^{FB})=K_{\theta_L},
  \qquad
  \phi^{FB}=(\beta-\alpha)
             (K_{\theta_H}-K_{\theta_L}).
\]

Now offer both types $\chi^{FB}(\thL)$ at the common payment $\thL\CFBL$. This pooling contract satisfies reporting IC and participation, gives rents $U_{\thL}=0$ and $U_{\thH}=(\thL-\thH)\CFBL$, and yields the principal $K_{\theta_L}$. Optimality in the high type's first-best problem gives
\[
  K_{\theta_H}
  \ge V(\chi^{FB}(\thL))-\thH\CFBL
  =K_{\theta_L}+(\thL-\thH)\CFBL.
\]

Its investment incentive therefore satisfies
\[
  (\beta-\alpha)(U_{\thH}-U_{\thL})
  =(\beta-\alpha)(\thL-\thH)\CFBL
  \le\phi^{FB}.
\]

Consequently, this contract strictly deters investment whenever $\phi>\phi^{FB}$, giving $V_{NI}(\phi)\ge K_{\theta_L}$. Combining the two bounds, for every $\phi>\phi^{FB}$,
\[
  V_I(\phi)\le L(\phi)
  <K_{\theta_L}\le V_{NI}(\phi).
\]

The cutoff property in Proposition~\ref{prop:cutoff} therefore implies $\phi_I\le\phi^{FB}$.

The inequality is strict in the Example: Observation~\ref{obs:invest} gives $\eta_I\approx0.555<0.598\approx\eta_{FB}$. Hence efficient investment need not be induced.
\end{proof}

\subsection{Proof of Lemma~\ref{lem:etaIlb}}
\label{app:etaIlb}

\begin{proof}
Select an optimal screening contract $(\chi^\alpha,T^\alpha)$ under the distribution $\alpha$. Its high-type experiment is first-best, its low-type experiment maximizes $V-\muA C$ and has cost $\CaL$, and its rents are $U_{\thL}=0$ and $U_{\thH}=(\thL-\thH)\CaL$.

The deterrence program adds an investment constraint to the screening problem, so $V_{NI}(\phi)\le V_\alpha$. For $\eta\ge\CaL$, the selected screening contract satisfies that constraint and attains this bound. Hence
\begin{equation}
  V_{NI}(\phi)=V_\alpha
  \qquad\text{whenever }\eta\ge\CaL.
  \label{eq:VNIeq}
\end{equation}

Now let $\phi_{\alpha L}:=(\beta-\alpha)(\thL-\thH)\CaL$. At this investment cost, the same contract satisfies the investment-inducing constraint with equality. Moreover, \eqref{eq:netgain}, evaluated at $f=\alpha$ and $f'=\beta$, gives
\[
  W_\beta(\chi^\alpha,T^\alpha)
  \ge W_\alpha(\chi^\alpha,T^\alpha)
  =V_\alpha.
\]

It follows that
\[
  V_I(\phi_{\alpha L})
  \ge W_\beta(\chi^\alpha,T^\alpha)
  \ge V_\alpha
  =V_{NI}(\phi_{\alpha L}).
\]

By Proposition~\ref{prop:cutoff}, $\phi_{\alpha L}\le\phi_I$ (equivalently, $\CaL\le\eta_I$). Finally, \eqref{eq:VNIeq} implies $V_{NI}(\phi_I)=V_\alpha$.
\end{proof}

\subsection{Proof of Proposition~\ref{prop:wedge}}\label{app:wedge}
\begin{proof}
Recall that
\begin{equation}
  V_\alpha
  =\alpha K_{\theta_H}+(1-\alpha)K_{\mu_\alpha}.
  \label{eq:Valpha}
\end{equation}

Pooling both types at $\chi^{FB}(\thL)$ with the common payment $\thL\CFBL$ is feasible for the screening benchmark and yields $K_{\theta_L}$. Hence $V_\alpha\ge K_{\theta_L}$.

Define the low type's capped value
\[
  \bar K(\eta):=
  \max_{\tau\in\X:\,C(\tau)\le\eta}
  \{V(\tau)-\thL C(\tau)\}.
\]

Evaluating each type's first-best experiment in the other type's objective gives
\[
  \CFBL\le
  \eta_{FB}:=\frac{K_{\theta_H}-K_{\theta_L}}
                   {\thL-\thH}
  \le\CFBH.
\]

Together with Lemma~\ref{lem:etaIlb} and Proposition~\ref{prop:under}, this implies $\CaL\le\eta_I\le\eta_{FB}\le\CFBH$.

\emph{Step 1: the relevant payoff comparison.} For $\CbL\le\eta\le\CFBH$, Case~1 of Lemma~\ref{lem:PI} gives
\begin{equation}
  V_I(\phi)
  =\beta K_{\theta_H}+(1-\beta)\bar K(\eta)
   -\frac{\beta}{\beta-\alpha}\phi.
  \label{eq:VIcase1}
\end{equation}
Define
\[
  L(\phi):=
  \beta K_{\theta_H}+(1-\beta)K_{\theta_L}
  -\frac{\beta}{\beta-\alpha}\phi.
\]

If $\eta\ge\CFBL$, a low-type first-best experiment is feasible, so $\bar K(\eta)=K_{\theta_L}$. Consequently,
\begin{equation}
  V_I(\phi)=L(\phi)
  \qquad\text{for }\CFBL\le\eta\le\CFBH.
  \label{eq:VIeq}
\end{equation}

If $\eta<\CFBL$, every low-type first-best experiment is excluded, since all have cost $\CFBL$. Attainment of the capped maximum therefore gives $\bar K(\eta)<K_{\theta_L}$. Thus $V_I(\phi)<L(\phi)$ whenever $\CbL\le\eta<\CFBL$.

Moreover, \eqref{eq:VNIeq} gives $V_{NI}(\phi)=V_\alpha$ whenever $\eta\ge\CaL$, including at the cutoff.

\emph{Step 2: locate the cutoff relative to $\CFBL$.} Let $\phi_{FBL}:=(\beta-\alpha)(\thL-\thH)\CFBL$. Since $\CFBL\ge\CaL\ge\CbL$,
\[
  D(\phi_{FBL})
  =\beta K_{\theta_H}+(1-\beta)K_{\theta_L}
   -\beta(\thL-\thH)\CFBL-V_\alpha.
\]

The cutoff property therefore gives
\[
  \eqref{eq:regimeR}
  \quad\Longleftrightarrow\quad
  D(\phi_{FBL})\ge0
  \quad\Longleftrightarrow\quad
  \eta_I\ge\CFBL.
\]

\emph{Step 3: compute the wedge.} Let
\[
  \phi^*
  :=\frac{\beta-\alpha}{\beta}
    \bigl[\beta K_{\theta_H}
          +(1-\beta)K_{\theta_L}-V_\alpha\bigr],
\]
the unique root of $L(\phi)-V_\alpha$.

Suppose first that \eqref{eq:regimeR} holds. Then $\phi^*\ge\phi_{FBL}$, while $V_\alpha\ge K_{\theta_L}$ implies $\phi^*\le\phi^{FB}$. On $[\phi_{FBL},\phi^{FB}]$, \eqref{eq:VIeq} and \eqref{eq:VNIeq} give $D(\phi)=L(\phi)-V_\alpha$. Thus $D(\phi^*)=0$ and $D(\phi)<0$ for $\phi^*<\phi\le\phi^{FB}$. Together with $\phi_I\le\phi^{FB}$, the cutoff property implies $\phi_I=\phi^*$. Using $\phi^{FB}=(\beta-\alpha)(K_{\theta_H}-K_{\theta_L})$, we obtain
\[
  \phi^{FB}-\phi_I
  =\frac{\beta-\alpha}{\beta}
     (V_\alpha-K_{\theta_L}).
\]

Suppose instead that \eqref{eq:regimeR} fails. Then $\CaL\le\eta_I<\CFBL$. In particular, $\phi_I>0$ and $\eta_I<\CFBL\le\bar C$, so the cutoff lies in the interior of the inducing program's feasible domain. Continuity of $D$ there, together with $D(\phi_I)\ge0$ and $D(\phi)<0$ for $\phi>\phi_I$, implies $D(\phi_I)=0$. Hence
\[
  V_\alpha=V_I(\phi_I)<L(\phi_I).
\]

Since $L$ is strictly decreasing and $L(\phi^*)=V_\alpha$, we have $\phi_I<\phi^*$. Therefore,
\[
  \phi^{FB}-\phi_I
  >\phi^{FB}-\phi^*
  =\frac{\beta-\alpha}{\beta}
     (V_\alpha-K_{\theta_L}).
\]

Finally, $\CFBL\le\eta_{FB}\le\CFBH$ gives $V_I(\phi^{FB})=L(\phi^{FB})=K_{\theta_L}$, while $V_{NI}(\phi^{FB})=V_\alpha$. Thus $D(\phi^{FB})=K_{\theta_L}-V_\alpha$. Since $V_\alpha\ge K_{\theta_L}$ and $\phi_I\le\phi^{FB}$, the cutoff property implies $\phi_I=\phi^{FB}$ if and only if $V_\alpha=K_{\theta_L}$.
\end{proof}

\subsection{Proof of Proposition~\ref{prop:cutoffcs}}\label{app:cutoffcs}
\begin{proof}
Recall the low type's capped value
\[
  \bar K(\eta):=
  \max_{\tau\in\X:\,C(\tau)\le\eta}
  \{V(\tau)-\thL C(\tau)\}.
\]

\emph{Step 1: characterize the scaled cutoff.} Lemma~\ref{lem:etaIlb} and Proposition~\ref{prop:under} give
\[
  \CaL\le\eta_I\le\eta_{FB}\le\CFBH.
\]

Since $V_{NI}(\eta_I)=V_\alpha$ and investment is weakly preferred at the cutoff,
\begin{equation}
  V_I(\eta_I)\ge V_\alpha.
  \label{eq:M1}
\end{equation}

Monotonicity of $V_I$ therefore gives $V_I(\eta)\ge V_\alpha$ for $\eta\le\eta_I$. For $\eta>\eta_I\ge\CaL$, we have $V_{NI}(\eta)=V_\alpha$, and the cutoff property gives $V_I(\eta)<V_\alpha$. Hence
\begin{equation}
  \{\eta\ge0:V_I(\eta)\ge V_\alpha\}
  =[0,\eta_I].
  \label{eq:M3}
\end{equation}

\emph{Step 2: increasing $\beta$ weakly raises $\eta_I$.} Fix $\alpha$ and consider $\beta'>\beta$. Both cutoffs lie in $[\CaL,\eta_{FB}]$, and $C_L^\beta,C_L^{\beta'}\le\CaL$. Thus Lemma~\ref{lem:PI} gives the Case-1 expression with an undistorted high type throughout this common interval:
\begin{equation}
  V_I(\eta;\beta)
  =\beta K_{\theta_H}+(1-\beta)\bar K(\eta)
   -\beta(\thL-\thH)\eta.
  \label{eq:M2}
\end{equation}

Since $\bar K(\eta)\le K_{\theta_L}$ and $\eta\le\eta_{FB} =(K_{\theta_H}-K_{\theta_L})/(\thL-\thH)$,
\begin{equation}
\begin{aligned}
  K_{\theta_H}-\bar K(\eta)-(\thL-\thH)\eta
  &\ge K_{\theta_H}-K_{\theta_L}
       -(\thL-\thH)\eta\\
  &=(\thL-\thH)(\eta_{FB}-\eta)
  \ge0.
\end{aligned}
\label{eq:M4}
\end{equation}

Comparing the affine expressions in \eqref{eq:M2} therefore gives $V_I(\eta;\beta')\ge V_I(\eta;\beta)$ on the common interval. In particular, at $\eta=\eta_I(\beta)$,
\[
  V_I(\eta_I(\beta);\beta')
  \ge V_I(\eta_I(\beta);\beta)
  \ge V_\alpha.
\]

By \eqref{eq:M3}, $\eta_I(\beta')\ge\eta_I(\beta)$.

\emph{Step 3: increasing $\alpha$ weakly lowers $\eta_I$.} Fix $\beta$ and consider $\alpha'>\alpha$. At a fixed scaled cost $\eta$, the inducing program is independent of $\alpha$: its objective uses $\beta$, and its investment constraint is $U_{\thH}-U_{\thL}\ge(\thL-\thH)\eta$. Meanwhile, Lemma~\ref{lem:Vf} gives $V_{\alpha'}\ge V_\alpha$. Thus
\[
  \{\eta:V_I(\eta)\ge V_{\alpha'}\}
  \subseteq
  \{\eta:V_I(\eta)\ge V_\alpha\}.
\]

Applying \eqref{eq:M3} to both parameter values yields $\eta_I(\alpha')\le\eta_I(\alpha)$.

Finally,
\[
  \phi_I=(\beta-\alpha)(\thL-\thH)\eta_I,
  \qquad \eta_I\ge\CaL>0.
\]

Increasing $\beta$ weakly raises $\eta_I$ and strictly raises the positive factor $\beta-\alpha$. Increasing $\alpha$ weakly lowers $\eta_I$ and strictly lowers that factor. Hence $\phi_I$ is strictly increasing in $\beta$ and strictly decreasing in $\alpha$.
\end{proof}

\subsection{Proof of Claim~\ref{claim:example}}\label{app:example}
Using Lemmas~\ref{lem:PNI}--\ref{lem:PI}, the simplified programs are solved by the following characterization of \citet{azrieli2021constrained} for this environment.
\begin{lemma}[Azrieli, 2021a]\label{lem:azrieli}
\mbox{}
\begin{enumerate}[label=\arabic*.]
  \item Unconstrained: $\max_\tau\{V(\tau)-\theta C(\tau)\}$ has, for $\theta>\theta^{*}$, $\tau(m)=1$, $p_m=\tfrac12$; for $\theta\in(0,\theta^{*})$, $\tau(l)=\tau(r)=\tfrac12$, $p_l=1-p_r=[1+\exp(3/\theta)]^{-1}$; for $\theta=\theta^{*}$, any mixture.
  \item Constrained: $\max_\tau V(\tau)$ subject to $C(\tau)=\eta$ has, for $\eta\in(0,H(\tfrac12)-H(p^{*}))$, the three-posterior solution $\tau(l)=\tau(r)=\frac{\eta}{2[H(1/2)-H(p^{*})]}$, $\tau(m)=1-\frac{\eta}{H(1/2)-H(p^{*})}$, $p_l=1-p_r=p^{*}$, $p_m=\tfrac12$; for $\eta\in[H(\tfrac12)-H(p^{*}),H(\tfrac12)]$, $\tau(l)=\tau(r)=\tfrac12$ with $p_l=1-p_r$ set by $H(\tfrac12)-H(p_l)=\eta$.
\end{enumerate}
\end{lemma}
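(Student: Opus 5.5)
The plan is to prove both parts by concavification of the multiplier-adjusted value $\tilde v_\theta=v-\theta c$ at $p_0=\tfrac12$, as in Figure~\ref{fig:concav}. Part~2 then follows from Part~1 by a Lagrangian (supporting-line) argument. With $p=p(\omega_1)$, the payoffs give $v(p)=\max\{1-3p,\,0,\,3p-2\}$. So $l$ is optimal on $[0,\tfrac13]$, $m$ on $[\tfrac13,\tfrac23]$, and $r$ on $[\tfrac23,1]$. We have $c(p)=\ln 2-H(p)$, and everything is symmetric about $\tfrac12$. On the $m$-region, $\tilde v_\theta=\theta(H(p)-\ln2)\le 0$, with its unique maximum $0$ at $p=\tfrac12$. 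On the $l$-region, $f_\theta(p):=1-3p+\theta H(p)-\theta\ln 2$ is strictly concave. By symmetry, the relevant supporting line at $\tfrac12$ is horizontal. Its height is $\max\{0,\max_{[0,1/3]}f_\theta\}$.

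\emph{Step 1 (unconstrained problem).} Solve $f_\theta'(q)=-3+\theta\ln\frac{1-q}{q}=0$, which gives $q_\theta=[1+\exp(3/\theta)]^{-1}$. Using $\ln q_\theta=-\ln(1+e^{3/\theta})$ and $\ln(1-q_\theta)=3/\theta-\ln(1+e^{3/\theta})$, the maximal value simplifies to
\[
f_\theta(q_\theta)=\theta\ln\tfrac{1+e^{3/\theta}}{2}-2 .
\]
Setting this to zero and writing $x=e^{1/\theta}$ gives $x^3-2x^2+1=(x-1)(x^2-x-1)=0$. The admissible root is $x=(1+\sqrt5)/2$, so $\theta^*=1/\ln((1+\sqrt5)/2)$.

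Next, check that $\theta\mapsto f_\theta(q_\theta)$ is strictly decreasing; by the envelope theorem its derivative is $H(q_\theta)-\ln2<0$. Then:
\begin{itemize}
  \item For $\theta<\theta^*$, the horizontal line at height $f_\theta(q_\theta)>0$ is a global majorant. Its only contacts are $q_\theta$ and $1-q_\theta$. One must verify that $q_\theta<\tfrac13$, i.e.\ $e^{3/\theta}>2$; this holds since $e^{3/\theta^*}=\varphi^3\approx4.24$ and $q_\theta$ is increasing in $\theta$.
  \item For $\theta>\theta^*$, the line at height $0$ touches only at $\tfrac12$, since the concave maximum of $f_\theta$ over $[0,\tfrac13]$ is negative.
  \item At $\theta^*$, the contact set is $\{p^*,\tfrac12,1-p^*\}$, and every Bayes-plausible law on it is optimal.
\end{itemize}
Bayes plausibility with symmetric support forces weights $\tfrac12,\tfrac12$ in the two-point case.

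\emph{Step 2 (constrained problem).} Use the fact that any $\tau$ maximizing $V-\theta C$ over $\X$ with $C(\tau)=\eta$ also maximizes $V$ on $\{C=\eta\}$, because $V(\tau')\le V(\tau)+\theta(C(\tau')-C(\tau))$.
\begin{itemize}
  \item For $\eta\in[\eta_C,\ln 2)$: as $\theta$ ranges over $(0,\theta^*]$, the cost $H(\tfrac12)-H(q_\theta)$ decreases continuously from $\ln2$ (as $\theta\to0$) to $\eta_C$. Pick $\theta$ with this cost equal to $\eta$; the symmetric two-point law with $H(\tfrac12)-H(p_l)=\eta$ is then optimal. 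The endpoint $\eta=\ln 2$ is full information, where $\{C=\eta\}$ is a singleton.
  \item For $\eta\in(0,\eta_C)$: the $\theta^*$-optimal mixture placing mass $\frac{\eta}{\eta_C}$ on $\{p^*,1-p^*\}$ and the rest on $\tfrac12$ has cost exactly $\eta$ by linearity of $C$, so it is optimal.
\end{itemize}

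For uniqueness of the three-posterior solution, I would argue as follows. The constrained value $W(\eta)=\max\{V:C=\eta\}$ is concave, since the constraints are affine. On $[0,\eta_C]$ it is affine with slope $\theta^*$. Hence any optimizer attains $V-\theta^*C=K(\theta^*)$ and is supported on the $\theta^*$-contact set. Bayes plausibility and the cost equation then pin down the weights.

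The main obstacle is Step~1's global majorization: confirming that the symmetric horizontal line dominates $\tilde v_\theta$ everywhere, including the kinks at $\tfrac13,\tfrac23$, and that no asymmetric support does better. Symmetry plus concavity of each action-piece $v_a-\theta c$ should settle this. Beyond that, everything reduces to the one-variable computation leading to $x^3-2x^2+1=0$ above.
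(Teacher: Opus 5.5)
Your proof is correct, and it covers more than the paper does here. The paper does not derive this characterization: it imports Parts~1 and~2 from \citet{azrieli2021constrained} and only adds a uniqueness argument for the three-posterior solution. That argument coincides with yours. Since $K_{\theta^*}=0$, every feasible $\tau$ satisfies $V(\tau)\le K_{\theta^*}+\theta^*C(\tau)=\theta^*\eta$. The displayed mixture attains this bound, so every optimizer is $\theta^*$-optimal and hence supported on $\{p^*,\tfrac12,1-p^*\}$, with weights fixed by Bayes plausibility and the cost equation. Your appeal to concavity of $W$ is harmless but unnecessary.

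Your direct concavification replaces the citation, and the computations check out:
\begin{itemize}
\item $f_\theta(q_\theta)=\theta\ln\frac{1+e^{3/\theta}}{2}-2$;
\item the factorization $x^3-2x^2+1=(x-1)(x^2-x-1)$;
\item $e^{3/\theta^*}=\varphi^3=2+\sqrt5$, so $p^*=1/(3+\sqrt5)\approx0.191$ and $\eta_C\approx0.206$;
\item the Lagrangian passage from Part~1 to Part~2 by intermediate values over $\theta\in(0,\theta^*]$.
\end{itemize}
Your route gives an independent check of the constants $\theta^*,p^*,\eta_C$ on which Section~\ref{sec:example} relies; the citation gives brevity.

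The step you flag as the main obstacle is in fact immediate. Because $\tilde v_\theta$ is symmetric about $\tfrac12$, the law $\tfrac12\delta_q+\tfrac12\delta_{1-q}$ is Bayes plausible for every $q$. Hence $K_\theta=\operatorname{cav}\tilde v_\theta(\tfrac12)=\max_{[0,1]}\tilde v_\theta$. The horizontal line at that height majorizes $\tilde v_\theta$ by definition, including at the kinks $\tfrac13,\tfrac23$. The optimizers are exactly the Bayes-plausible laws supported on $\argmax_{[0,1]}\tilde v_\theta$, so no separate treatment of asymmetric supports is needed.

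One small point: for $\theta>3/\ln2$ the critical point $q_\theta$ lies outside $[0,\tfrac13]$. Concavity of $f_\theta$ on $[0,1]$ still gives $\max_{[0,1/3]}f_\theta\le f_\theta(q_\theta)<0$, so your $\theta>\theta^*$ case is unaffected.
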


The three-posterior solution in part~2 is unique. At $\theta^*$, $v-\theta^*c$ attains its maximum of zero exactly at $p^*,1/2,1-p^*$. For $0<\eta<\eta_C$, the displayed experiment attains the bound $V(\tau)\le\theta^*\eta$. Every optimizer must therefore be supported on these three points. Bayes plausibility and the cost constraint give equal outer masses $\eta/(2\eta_C)$ and middle mass $1-\eta/\eta_C$, all strictly positive.

Applying Lemma~\ref{lem:azrieli} to the multipliers $\thH,\thL,\muA,\muB$ and to the cost constraint from Lemmas~\ref{lem:PNI}--\ref{lem:PI} yields the four regimes of Claim~\ref{claim:example}, as follows. By Proposition~\ref{prop:cutoff} investment is induced if and only if $\eta\le\eta_I$, and $\eta_I\in(\CFBL,\CFBH)$ by the constants of Section \ref{sec:example}: the low type solves \eqref{eq:PI} for $\eta\le\eta_I$ and \eqref{eq:PNI} for $\eta>\eta_I$. The high type always attains its first best: under \eqref{eq:PNI} at every $\eta$, and under \eqref{eq:PI} because it is used only for $\eta\leq \eta_I<\CFBH$ (Proposition~\ref{prop:dist}). Within \eqref{eq:PI}, $\CbL=0$, so the low type solves $\max_{C(\tau)\le\eta}\{V-\thL C\}$; by Lemma~\ref{lem:azrieli}(2) this is the three-posterior experiment for $0<\eta<\eta_C$ (the multi-posterior characterization, Appendix~\ref{app:multipost}) and the two-posterior experiment of cost $\eta$ for $\eta\in[\eta_C,\CFBL)$, reaching the $\thL$-first best $p_l=1-p_r=[1+e^{3/\thL}]^{-1}$ once $\eta\ge\CFBL$. For $\eta>\eta_I$, \eqref{eq:PNI} gives the two-posterior $\muA$-solution $p_l=1-p_r=[1+e^{3/\muA}]^{-1}$ (Lemma~\ref{lem:PNI}). All constants and regime boundaries are reported to three decimal places: $\CbL=0$, $\CaL\approx0.437$, $\CFBL\approx0.502$, $\CFBH\approx0.676$, $p^{*}\approx0.191$, $\eta_C\approx0.206$, $\eta_I\approx0.555$, $\eta_{FB}\approx0.598$.

\subsection{Derivation: lumpy information}\label{app:lumpy}
We record the \emph{lumpy-information} property of the Example. With $\eta_C=H(\tfrac12)-H(p^{*})$:
\begin{enumerate}[label=(\alph*)]
  \item for every multiplier $\theta>0$ with $\theta\neq\theta^{*}$, the unconstrained optimum $\chi_\theta\in\argmax_{\tau\in\X}\{V(\tau)-\theta C(\tau)\}$ has cost $C(\chi_\theta)\in\{0\}\cup[\eta_C,H(\tfrac12))$. In particular, the cost never lies in the open interval $(0,\eta_C)$.
  \item The cost-constrained problem $\max_{\tau:\,C(\tau)=\eta}V(\tau)$ requires three posteriors exactly when $0<\eta<\eta_C$.
\end{enumerate}
\begin{proof}
(a) By Lemma~\ref{lem:azrieli}(1): for $\theta>\theta^{*}$, $\chi_\theta$ is uninformative, so $C(\chi_\theta)=0$. For $\theta\in(0,\theta^{*})$, $\chi_\theta$ is the symmetric two-point experiment with $p_\theta=[1+\exp(3/\theta)]^{-1}$; since $\theta<\theta^{*}$ gives $3/\theta>3/\theta^{*}$, we have $p_\theta<p^{*}<\tfrac12$, and as $H$ is strictly increasing on $[0,\tfrac12]$,
\[
  C(\chi_\theta)=H(\tfrac12)-H(p_\theta)>H(\tfrac12)-H(p^{*})=\eta_C .
\]

Since $p_\theta>0$ for $\theta>0$, also $C(\chi_\theta)<H(\tfrac12)$. Hence $C(\chi_\theta)\in\{0\}\cup(\eta_C,H(\tfrac12))$, never in $(0,\eta_C)$. Part~(b) is Lemma~\ref{lem:azrieli}(2).
\end{proof}

\subsection{Proof of Claim~\ref{claim:multipost}}\label{app:multipost}
\begin{proof}
Fix an admissible $\beta$. Since $\eta=\frac{\phi} {(\beta-\alpha)(\thL-\thH)}$, open intervals of $\phi$ correspond to open intervals of $\eta$.

Only $\beta$ varies across the environments considered here. In particular, $\alpha$, the cost types, and the information-design primitives remain fixed. Thus $\CaL$, $\CFBL$, and $\eta_C$ are independent of $\beta$. The Example satisfies $\CFBL\ge\CaL>\eta_C$. By Lemma~\ref{lem:etaIlb}, $\eta_I\ge\CaL>\eta_C$. Therefore every $0<\eta<\eta_C$ lies below the investment cutoff.

The standing differentiability assumption excludes $\muB=\theta^{*}$. By the lumpy-information characterization in Appendix~\ref{app:lumpy},
\[
  \CbL=0 \quad\text{if }\muB>\theta^{*},
  \qquad
  \CbL>\eta_C \quad\text{if }\muB<\theta^{*}.
\]

\emph{Sufficiency.} Suppose $\CbL=0$. For $0<\eta<\eta_C$, Case~1 of Lemma~\ref{lem:PI} assigns the low type a solution to $\max_{\tau\in\X:\,C(\tau)\le\eta} \{V(\tau)-\thL C(\tau)\}$. The cap binds because $\eta<\eta_C<\CFBL$: otherwise mixing an optimizer with a low-type first-best experiment would strictly improve the objective while respecting the cap. At the binding cost, maximizing this objective is equivalent to maximizing $V(\tau)$ subject to $C(\tau)=\eta$. Lemma~\ref{lem:azrieli}(2) implies that every optimizer requires three posteriors.

Investment is in fact strictly preferred in this interval. The deterrence value is at most $V_\alpha$. Moreover, $V_I(\eta_C)\ge V_\alpha$, since $\eta_C<\CaL\le\eta_I$, $V_I$ is non-increasing, and $V_I(\eta_I)\ge V_\alpha$. For $0\le\eta\le\eta_C$, Lemma~\ref{lem:azrieli}(2) gives $\max_{C(\tau)=\eta}V(\tau)=\theta^{*}\eta$. Consequently,
\[
  V_I(\eta)
  =\beta K_{\theta_H}
   +(1-\beta)(\theta^{*}-\muB)\eta.
\]

Because $\muB>\theta^{*}$, this expression is strictly decreasing. Hence, for $0<\eta<\eta_C$,
\[
  V_I(\eta)>V_I(\eta_C)
  \ge V_\alpha\ge V_{NI}(\eta).
\]

Thus every optimal contract induces investment and assigns three posteriors to the low type throughout this interval. In primitive units, the interval is $(0,\phi_C)$.

\emph{Necessity.} Suppose $\CbL>0$. Then $\CbL>\eta_C$. Whenever inducing investment is optimal, consider the two cases of Lemma~\ref{lem:PI}.

If $\eta<\CbL$, Case~2 assigns the low type an unconstrained $\muB$-optimizer, because its cost $\CbL$ satisfies the floor. Since $\muB<\theta^{*}$, this is a two-posterior experiment.

If $\CbL\le\eta<\CFBL$, Case~1 assigns a binding capped optimum of cost $\eta$. Here $\eta\ge\CbL>\eta_C$, so Lemma~\ref{lem:azrieli}(2) supplies an optimal two-posterior experiment.

If $\eta\ge\CFBL$, the low type can be assigned its two-posterior first-best experiment.

Finally, if $\eta<\CaL$, the deterrence cap excludes every unconstrained $\muA$-optimizer, so $V_{NI}(\eta)<V_\alpha$. Meanwhile, $\eta<\CaL\le\eta_I$ and monotonicity give $V_I(\eta)\ge V_I(\eta_I)\ge V_\alpha$. Hence deterrence can be optimal only when $\eta\ge\CaL$. The unconstrained $\muA$-optimizer is therefore feasible for the low type's cap in Lemma~\ref{lem:PNI}. It is a two-posterior experiment. At a tie between inducing and deterring investment, either branch can consequently be selected with at most two low-type posteriors.

Thus, when $\CbL>0$, no investment-cost interval requires three low-type posteriors.
\end{proof}

\subsection{Proof of Proposition~\ref{prop:mes3}}\label{app:mes3}

\begin{proof}[Proof of Proposition~\ref{prop:mes3}]

\emph{Set-up.} Identify $\Delta(\Omega)=[0,1]$ as in Lemma~\ref{lem:blackwell} via $x=p(\omega_1)$, and write $x_0:=p_0(\omega_1)\in(0,1)$. An experiment is then a mean-$x_0$ distribution $\tau$, with $C(\tau)=\mathbb{E}_\tau[c]$, where $c(x)=H(x_0)-H(x)$. The function $c$ is strictly convex, satisfies $c(x_0)=0$, $c'(0^+)=-\infty$ and $c'(1^-)=+\infty$. Since $A$ is finite, $v(x)=\max_{a\in A}\ell_a(x)$ is the upper envelope of finitely many affine maps $\ell_a$, hence convex and piecewise affine. In particular, its slope is weakly increasing, so every kink of $v$ is \emph{convex}.

By (i), $a_0$ is the \emph{unique} optimizer at $x=0$ and $a_1$ at $x=1$. Continuity therefore implies that each remains optimal on a nondegenerate closed interval: $R_0:=\{x:\ell_{a_0}(x)=v(x)\}=[0,\underline b]$ with $\underline b>0$, and $R_1:=\{x:\ell_{a_1}(x)=v(x)\}=[\bar b,1]$ with $\bar b<1$; and neither $a_0$ nor $a_1$ optimal at $x_0$ gives $\underline b<x_0<\bar b$. We call $R_0,R_1$ the \emph{extreme regions} and say $\tau$ is \emph{extreme} if $\operatorname{supp}\tau\subseteq R_0\cup R_1$. The endpoints $\underline b,\bar b$ are convex kinks of $v$.

Recall $\tilde v_\theta:=v-\theta c$ (Figure \ref{fig:concav}). By concavification, the optimal value over mean-$x_0$ experiments is $K_\theta=\operatorname{cav}\tilde v_\theta(x_0)$. For each $\theta\ge0$, let $L_\theta$ be a supporting line of $\operatorname{cav}\tilde v_\theta$ at $x_0$, so $L_{\theta}\geq \tilde v_\theta$ on $[0,1]$ and $L_{\theta}(x_0)=\operatorname{cav}\tilde v_\theta(x_0)=K_\theta$. Let $\mathcal{S}_\theta :=\{x\in[0,1]: \tilde v_\theta(x)=L_{\theta}(x)\}$ be the contact set associated with $L_\theta$. A Bayes-plausible experiment $\tau$ is optimal at multiplier $\theta$ if and only if $\operatorname{supp}\tau\subseteq\mathcal S_\theta$ \citep{kamenica2011bayesian}. Indeed, $K_\theta-\bigl[V(\tau)-\theta C(\tau)\bigr] =\mathbb E_\tau[L_\theta(x)-\tilde v_\theta(x)]$. The integrand is continuous and nonnegative, so the gap vanishes exactly when $\tau$ is supported on $\mathcal S_\theta$.

Finally, let $\Lambda:=\{\theta>0:\ \text{some }\theta\text{-optimum is extreme}\}$ and $\theta^{*}:=\sup\Lambda$. For the Example in Section~\ref{sec:example}, this is the knife-edge multiplier $\theta^{*}$.

\begin{lemma}[Decision-theoretic band]\label{lem:band}
Under premise~(i), $\theta^{*}\in(0,\infty)$ and:
\begin{enumerate}[label=\textup{(\arabic*)}]
  \item $\theta^{*}\in\Lambda$; there is a $\theta^{*}$-optimum $\tilde\tau$ supported on two points $x_L\in(0,\underline b)$, $x_R\in(\bar b,1)$ (one in each extreme region, both interior to it), and $\bar c:=C(\tilde\tau)=\max\{C(\tau):\tau\text{ a }\theta^{*}\text{-optimum}\}$, with $0<\bar c<H(x_0)$;
  \item there is a $\theta^{*}$-optimum $\hat\tau$ with $C(\hat\tau)<\bar c$, and every value in $[C(\hat\tau),\bar c]$ is the cost of some $\theta^{*}$-optimum;
  \item there is $\underline c$ with $C(\hat\tau)\le\underline c<\bar c$ such that for every $c\in(\underline c,\bar c)$, every solution of $\max_{\tau\in\X:\,C(\tau)\le c}V(\tau)$ has at least three posteriors.

    Here, $0\le\underline c<\bar c$, with $\underline c>0$ whenever $\dirac_{x_0}$ is not a $\theta^{*}$-optimum.
\end{enumerate}
\end{lemma}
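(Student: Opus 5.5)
The plan has four parts: show $\theta^{*}$ is finite and attained, pin down the contact set $\mathcal S_{\theta^{*}}$, get a cheaper $\theta^{*}$-optimum by an upper-hemicontinuity argument, and obtain (3) from Lagrangian duality plus finiteness of $\mathcal S_{\theta^{*}}$. Two facts about contact sets are used throughout. First, for each action $a$ the map $\ell_a-\theta c$ is strictly concave and lies below $L_\theta$, so it touches $L_\theta$ at most once; hence $\mathcal S_\theta$ is finite. Second, on $R_0$ and on $R_1$ the function $\tilde v_\theta$ coincides with a single strictly concave piece.

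\emph{Step 0: $\theta^{*}\in(0,\infty)$.}
\begin{itemize}
\item \emph{$\Lambda$ is nonempty.} For small $\theta$, $\tilde v_\theta$ is uniformly close to $v$. Because $a_0,a_1$ are uniquely optimal at the endpoints, $v$ lies strictly below its chord between $0$ and $1$ away from $R_0\cup R_1$. So any supporting line at $x_0$ can touch $\tilde v_\theta$ only inside $R_0\cup R_1$, and every optimum is extreme.
\item \emph{$\Lambda$ is bounded.} Since $\underline b<x_0<\bar b$, every extreme experiment has cost at least some $\kappa>0$: it must put mass at distance at least $\min\{x_0-\underline b,\bar b-x_0\}$ from $x_0$. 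Its value gain over $\dirac_{x_0}$ is at most $2\max|u|$. So for $\theta>2\max|u|/\kappa$ the uninformative experiment strictly beats every extreme one.
\end{itemize}

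\emph{Step 1: part (1).}
\begin{itemize}
\item \emph{$\theta^{*}\in\Lambda$.} Take $\theta_n\uparrow\theta^{*}$ in $\Lambda$ with extreme optima $\tau_n$. By Berge's theorem, a weak limit is a $\theta^{*}$-optimum. It is extreme because $R_0\cup R_1$ is closed.
\item \emph{Two-point support.} On $R_0$ the function $\tilde v_{\theta^{*}}$ is strictly concave, so it has at most one contact there, and likewise on $R_1$. Hence the extreme optimum is two-point, $\{x_L,x_R\}$.
\item \emph{Interior of the extreme regions.} The points $0,1$ are never contacts, because $c'(0^+)=-\infty$ and $c'(1^-)=+\infty$ give $\tilde v$ infinite slopes there. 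The kinks $\underline b,\bar b$ are convex kinks of $\tilde v$ and so cannot be supported by a line. Thus $x_L\in(0,\underline b)$ and $x_R\in(\bar b,1)$.
\item \emph{$\tilde\tau$ has maximal cost.} The same one-contact argument shows $\mathcal S_{\theta^{*}}\subseteq[x_L,x_R]$. Every $\theta^{*}$-optimum is supported on $\mathcal S_{\theta^{*}}$, so the argument of Lemma~\ref{lem:blackwell} (an affine $\ell$ through $(x_L,c(x_L))$ and $(x_R,c(x_R))$) gives $C(\tau)\le C(\tilde\tau)=\bar c$. By strict convexity, equality holds only for $\tau=\tilde\tau$.
\item \emph{Bounds on $\bar c$.} Interior posteriors give $0<\bar c<H(x_0)$.
\end{itemize}

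\emph{Step 2: part (2); I expect this to be the main obstacle.} I will split on whether $\mathcal S_{\theta^{*}}$ contains a point $y\notin\{x_L,x_R\}$.
\begin{itemize}
\item \emph{If it does,} write $y=qx_L+(1-q)x_R$. Shifting masses $\varepsilon q$ from $x_L$ and $\varepsilon(1-q)$ from $x_R$ onto $y$ keeps Bayes plausibility and support in $\mathcal S_{\theta^{*}}$. So the result $\hat\tau$ is still a $\theta^{*}$-optimum, and its cost is strictly lower.
\item \emph{If it does not,} so $\mathcal S_{\theta^{*}}=\{x_L,x_R\}$, I aim for a contradiction with $\theta^{*}=\sup\Lambda$. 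Upper hemicontinuity of $\theta\mapsto\mathcal S_\theta$ follows from continuity of $\tilde v_\theta$ in $\theta$ and convergence of the supporting lines along a subsequence. It implies that for $\theta\downarrow\theta^{*}$ all contacts lie near $x_L$ or $x_R$, hence in the interiors of $R_0,R_1$. Then every $\theta$-optimum is extreme, so $\theta\in\Lambda$ for some $\theta>\theta^{*}$, which is impossible.
\end{itemize}
Finally, the set of $\theta^{*}$-optima is convex and $C$ is affine, so mixing $\hat\tau$ with $\tilde\tau$ realizes every cost in $[C(\hat\tau),\bar c]$.

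\emph{Step 3: part (3).} Fix $c\in[C(\hat\tau),\bar c]$.
\begin{itemize}
\item \emph{Upper bound.} Every feasible $\tau$ satisfies $V(\tau)\le K_{\theta^{*}}+\theta^{*}C(\tau)\le K_{\theta^{*}}+\theta^{*}c$.
\item \emph{Attainment.} By Step 2, some $\theta^{*}$-optimum has cost exactly $c$ and attains this bound.
\item \emph{Consequence.} Every solution of the capped problem attains the bound with equality. So it is a $\theta^{*}$-optimum of cost exactly $c$, supported on the finite set $\mathcal S_{\theta^{*}}$.
\item \emph{Definition of $\underline c$.} A mean-$x_0$ law on at most two points of $\mathcal S_{\theta^{*}}$ is pinned down by its support. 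So the costs of such optima form a finite set $F$, which contains $\bar c$ (from $\tilde\tau$) and possibly $0$ (from $\dirac_{x_0}$). Set
\[
  \underline c:=\max\bigl(\{C(\hat\tau)\}\cup(F\cap[0,\bar c))\bigr).
\]
\item \emph{Conclusion.} For $c\in(\underline c,\bar c)$ no solution can have at most two posteriors. If $\dirac_{x_0}$ is not a $\theta^{*}$-optimum, then $\hat\tau$ is informative, so $C(\hat\tau)>0$ and hence $\underline c>0$.
\end{itemize}
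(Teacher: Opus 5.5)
Your proposal is correct. Parts (1) and (3) run essentially as in the paper: compactness for $\theta^{*}\in\Lambda$, the infinite-slope and convex-kink arguments for interiority, the convex-order comparison with $\tilde\tau$, and the bound $V(\tau)\le K_{\theta^{*}}+\theta^{*}C(\tau)$ plus finiteness of $\mathcal S_{\theta^{*}}$ to define $\underline c$. Your ``one contact per extreme region'' observation is a quicker substitute for the paper's widest-contact-pair construction.

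The real difference is part (2). The paper approximates from above. It takes the cost-maximizing widest-pair optima $\tilde\tau(\theta^{(m)})$ for $\theta^{(m)}\downarrow\theta^{*}$ and extracts a weak limit $\hat\tau$. It then uses Lemma~\ref{lem:mono} to show $\hat\tau$ is the cheapest $\theta^{*}$-optimum, and rules out $C(\hat\tau)=\bar c$ because the approximants' atoms would then converge into the open extreme regions. You instead prove the sharper geometric fact that $\mathcal S_{\theta^{*}}$ must contain a contact strictly between $x_L$ and $x_R$ (the middle contact $\tfrac12$ in Figure~\ref{fig:concav}). You then apply the mass shift of Lemma~\ref{lem:binary-unique}.

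The paper's route yields the exact set of $\theta^{*}$-optimal costs and needs only weak convergence of explicit two-point laws. Yours avoids the widest-pair machinery at every $\theta$, but it relies on upper hemicontinuity of contact sets, which you should spell out:
\begin{itemize}
\item the lines $L_\theta$ have bounded slopes, because $L_\theta(x_0)=K_\theta$ is bounded and $L_\theta$ majorizes the uniformly bounded $\tilde v_\theta$ at $0$ and $1$;
\item $K$ is continuous, so limit lines support $\operatorname{cav}\tilde v_{\theta^{*}}$ at $x_0$;
\item this supporting line is unique, since $\operatorname{cav}\tilde v_{\theta^{*}}$ is affine on $[x_L,x_R]$, whose interior contains $x_0$. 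Hence limits of contacts lie in $\{x_L,x_R\}$, as your contradiction requires.
\end{itemize}

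The routes also differ for $\Lambda\neq\emptyset$. The paper uses the chord comparisons $M_L,M_R<M$ together with $K_\theta\ge M-\theta H(x_0)$ to get an explicit $\underline\theta$. Your perturbation argument works, but the passage from ``$\tilde v_\theta$ is uniformly close to $v$'' to ``contacts lie in $R_0\cup R_1$'' needs an intermediate step. Any supporting line at $x_0$ lies within $O(\theta\max|c|)$ of the chord of $v$ over $[0,1]$, and that chord exceeds $v$ by a uniform margin on $[\underline b,\bar b]$.
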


\begin{proof}
Let $M:=K_0=\max_{\tau\in\X}V(\tau)$ be the maximal value, attained by full information $\{0,1\}$ (of cost $C(\{0,1\})=H(x_0)$ since $H(0)=H(1)=0$); as $v$ is convex, $M$ equals the value at $x_0$ of the chord of $v$ over $[0,1]$ (the $\theta=0$ optimum, at the endpoints). For $\theta>0$, by contrast, no optimal posterior sits at an endpoint: there $-\theta c'$ dominates the finite $v'$, so $c'(0^+)=-\infty$, $c'(1^-)=+\infty$ give $\tilde v_\theta=v-\theta c$ slope $+\infty$ at $0^+$ and $-\infty$ at $1^-$. Therefore, $\tilde v_\theta(0)<L_{\theta}(0)$ and $\tilde v_\theta(1)<L_{\theta}(1)$. We have $0,1\notin \mathcal{S}_\theta$, and every $\theta$-optimal posterior lies in $(0,1)$.

\emph{Step 1: the widest contact pair maximizes cost.} Fix $\theta>0$. Let $l_\theta:=\min\mathcal S_\theta$ and $r_\theta:=\max\mathcal S_\theta$. Every $\theta$-optimizer is supported on $\mathcal S_\theta$ and has mean $x_0$, so $l_\theta\le x_0\le r_\theta$.

If $l_\theta=r_\theta$, both equal $x_0$, and the only optimizer is $\delta_{x_0}$. Otherwise, let $\tilde\tau(\theta)$ be the mean-$x_0$ experiment on $\{l_\theta,r_\theta\}$. Its support consists of contact points, so it is $\theta$-optimal. For every $x\in[l_\theta,r_\theta]$, convexity gives
\[
  c(x)\le
  \frac{r_\theta-x}{r_\theta-l_\theta}c(l_\theta)
  +\frac{x-l_\theta}{r_\theta-l_\theta}c(r_\theta).
\]

Taking expectations under any $\theta$-optimizer $\tau$ and using $\mathbb E_\tau[x]=x_0$ yields
\[
  C(\tau)\le
  \frac{r_\theta-x_0}{r_\theta-l_\theta}c(l_\theta)
  +\frac{x_0-l_\theta}{r_\theta-l_\theta}c(r_\theta)
  =C(\tilde\tau(\theta)).
\]

Strict convexity makes equality possible only when $\tau$ is supported on the endpoints. Bayes plausibility then uniquely determines their probabilities. Thus $\tilde\tau(\theta)$ is the unique cost-maximizing $\theta$-optimizer. It has two posteriors when $l_\theta<x_0<r_\theta$ and reduces to $\delta_{x_0}$ when the prior equals an endpoint.

\emph{Step 2: $\Lambda$ bounded.} Let $\pi$ be the mean-$x_0$ two-point experiment on $\{\underline b,\bar b\}$ (the narrowest extreme spread) and $\kappa:=C(\pi)>0$; here $\kappa>0$ because $\pi$ is informative and $c$ is strictly convex with $c(x_0)=0$, so by Jensen $C(\pi)=\mathbb E_\pi[c]>c(x_0)=0$. If $\theta\in\Lambda$, an extreme $\theta$-optimum $\tau$ is supported on $R_0\cup R_1=[0,\underline b]\cup[\bar b,1]$. Let $\ell$ be the affine function agreeing with $c$ at $\underline b$ and $\bar b$. Convexity gives $c\ge\ell$ on $R_0\cup R_1$. Since $\tau$ and $\pi$ both have mean $x_0$,
\[
  C(\tau)\ge\mathbb E_\tau[\ell]
  =\ell(x_0)=C(\pi)=\kappa.
\] 

Optimality against $\dirac_{x_0}$ gives $V(\tau)-\theta C(\tau)\ge v(x_0)$, whence $\theta\kappa\le\theta C(\tau)\le V(\tau)-v(x_0)\le M-v(x_0)$. Here $M>v(x_0)$: a convex $v$ has $v(x_0)\le M$, with equality only if $v$ is affine on $[0,1]$ (a convex function meeting its chord at an interior point coincides with it), which is excluded by the kink of $v$ at $\bar b$. Hence $\theta\le(M-v(x_0))/\kappa<\infty$, so $\Lambda$ is bounded above.

\emph{Step 3: $\Lambda\supseteq(0,\underline\theta)$, nonempty.} Suppose $r_\theta<\bar b$. Every $\theta$-optimum is then supported in $[l_\theta,r_\theta]\subseteq[0,\bar b)$, so its value is at most the chord of $v$ over $[0,\bar b]$ at $x_0$, call it $M_L$; and $M_L<M$, because extending the chord $[0,\bar b]$ linearly to $x=1$ underestimates $v(1)$ (the action $a_1$, optimal on $[\bar b,1]$, is strictly steeper than that chord), so the longer chord $[0,1]$ lies strictly higher at the interior point $x_0$. Thus $K_{\theta}\le M_L$; but comparing with $\{0,1\}$, $K_{\theta}\ge M-\theta H(x_0)$, so $r_\theta<\bar b$ forces $\theta\ge(M-M_L)/H(x_0)>0$. Symmetrically, with $M_R<M$ the chord of $v$ over $[\underline b,1]$ at $x_0$, $l_\theta>\underline b$ forces $\theta\ge(M-M_R)/H(x_0)>0$. Hence for $\theta<\underline\theta:=\min\{M-M_L,\,M-M_R\}/H(x_0)$ we have $l_\theta\le\underline b$ and $r_\theta\ge\bar b$; with $0<l_\theta\le\underline b<x_0<\bar b\le r_\theta<1$, the cost-maximal optimum $\tilde\tau(\theta)$ on $\{l_\theta,r_\theta\}$ has one posterior in $R_0$ and one in $R_1$, i.e.\ it is extreme. So $(0,\underline\theta)\subseteq\Lambda$, and $\underline\theta\le\theta^{*}<\infty$, i.e.\ $\theta^{*}\in(0,\infty)$.

\emph{Step 4: $\theta^{*}\in\Lambda$ and the extreme optimum.} By Steps~2--3, $\Lambda$ is nonempty and $\theta^{*}=\sup\Lambda\in(0,\infty)$. Choose $\theta_m\in\Lambda$ with $\theta_m\to\theta^{*}$, and select an extreme $\theta_m$-optimizer $\tau_m$ for each $m$. Compactness of $\X$ gives a subsequence converging weakly to some $\tau\in\X$.

For every $\sigma\in\X$, optimality gives
\[
  V(\tau_m)-\theta_m C(\tau_m)
  \ge V(\sigma)-\theta_m C(\sigma).
\]

Passing to the limit, using continuity of $V$ and $C$, shows that $\tau$ is a $\theta^{*}$-optimizer. Moreover, $R_0\cup R_1$ is closed and $\tau_m(R_0\cup R_1)=1$, so $\tau(R_0\cup R_1)=1$. Thus $\tau$ is extreme. Its mean is $x_0\in(\underline b,\bar b)$, so it places positive probability in both extreme regions. Hence $\theta^{*}\in\Lambda$. Let $\tilde\tau:=\tilde\tau(\theta^{*})$ on $\{x_L,x_R\}$, $x_L=l_{\theta^{*}}$, $x_R=r_{\theta^{*}}$. Because $\theta^{*}\in \Lambda$, some $\theta^{*}$-optimum is extreme; being supported in $[x_L,x_R]=[\min \mathcal{S}_{\theta^{*}},\max \mathcal{S}_{\theta^{*}}]$ with a posterior in each of $R_0,R_1$, it forces $x_L\le\underline b$ and $x_R\ge\bar b$. Moreover $x_L\ne\underline b$: the convex kink $\underline b$ (left slope $<$ right slope) cannot lie in the contact set $\mathcal{S}_{\theta^{*}}$, since at a contact point the nonnegative $\operatorname{cav}\tilde v_{\theta^{*}}-\tilde v_{\theta^{*}}$ attains its minimum $0$, forcing $(\operatorname{cav}\tilde v_{\theta^{*}})'_-\le(\tilde v_{\theta^{*}})'_-<(\tilde v_{\theta^{*}})'_+\le(\operatorname{cav}\tilde v_{\theta^{*}})'_+$ there, which contradicts concavity of $\operatorname{cav}\tilde v_{\theta^{*}}$. With $x_L>0$ likewise, $x_L\in (0,\underline b)$; symmetrically $x_R\in(\bar b,1)$. Since $\tilde\tau$ is a two-point optimum spanning $x_0$, $\operatorname{cav}\tilde v_{\theta^{*}}$ is affine on $[x_L,x_R]$, equal there to to the supporting line $L^{*}:=L_{\theta^{*}}$. As $x_{L}=\min \mathcal{S}_{\theta^{*}}$ and $x_{R}=\max \mathcal{S}_{\theta^{*}}$, $\mathcal{S}_{\theta^{*}}=\{x\in [x_{L},x_{R}]: \tilde v_{\theta^{*}}(x)=L^{*}(x)\}$, which is \emph{finite}: on each affine piece of $v$, $\tilde v_{\theta^{*}}-L^{*}$ is strictly concave and $\le0$, hence vanishes at most once, so $|\mathcal{S}_{\theta^{*}}|\le|A|$. Every $\theta^{*}$-optimum is supported in $\mathcal{S}_{\theta^{*}}\subseteq[x_L,x_R]$ with mean $x_0$, hence $\preceq_{cx}\tilde\tau$ in the convex order and $C(\cdot)\le C(\tilde\tau)$, with equality only at $\tilde\tau$ (a strict spread otherwise). Thus $\bar c:=C(\tilde\tau)$ is the unique maximal $\theta^{*}$-optimal cost, and $0<\bar c<H(x_0)$.

\emph{Step 5: coexistence $\hat\tau$.} Take $\theta^{(m)}\downarrow\theta^{*}$. As $\theta^{(m)}>\theta^{*}=\sup\Lambda$, no $\theta^{(m)}$-optimum is extreme; in particular the cost-maximal two-point optimum $\tilde\tau(\theta^{(m)})$ on $\{l_{\theta^{(m)}},r_{\theta^{(m)}}\}$ is not extreme. The map $\theta\mapsto C(\tilde\tau(\theta))$ is non-increasing (Lemma~\ref{lem:mono}), so $\lim_m C(\tilde\tau(\theta^{(m)}))$ exists; pass to a subsequence along which $\tilde\tau(\theta^{(m)})$ converges weakly to a limit $\hat\tau$, a $\theta^{*}$-optimum, whose cost is $C(\hat\tau)=\lim_m C(\tilde\tau(\theta^{(m)}))\le \bar c$ by continuity of $C$. Moreover, $\hat\tau$ minimizes cost among all $\theta^{*}$-optimizers. To see this, let $\sigma$ be any $\theta^{*}$-optimizer. Since $\theta^{(m)}>\theta^{*}$, Lemma~\ref{lem:mono} gives
\[
  C(\tilde\tau(\theta^{(m)}))\le C(\sigma).
\]

Taking limits yields $C(\hat\tau)\le C(\sigma)$. If $C(\hat\tau)=\bar c$ then $\hat\tau=\tilde\tau$ by the uniqueness in Step~4; as $\tilde\tau(\theta^{(m)})$ and $\tilde\tau$ are two-point measures with distinct atoms, their atoms converge, $l_{\theta^{(m)}}\to x_L\in(0,\underline b)$ and $r_{\theta^{(m)}}\to x_R\in(\bar b,1)$, so for large $m$ both atoms of $\tilde\tau(\theta^{(m)})$ lie in the \emph{open} extreme regions, making $\tilde\tau(\theta^{(m)})$ extreme, contradicting $\theta^{(m)}\notin\Lambda$. Hence $C(\hat\tau)<\bar c$. For $\mu\in[0,1]$ the mixture $\mu \tilde\tau+(1-\mu) \hat\tau$ is a $\theta^{*}$-optimum (the objective is linear, both are optimal) of cost $\mu\bar c+(1-\mu)C(\hat\tau)$, so every value in $[C(\hat\tau),\bar c]$ is a $\theta^{*}$-optimal cost.

\emph{Step 6: the band.} For $c\in[C(\hat\tau),\bar c]$ pick a $\theta^{*}$-optimal mixture $\tau_c$ of cost exactly $c$; then $V(\tau_c)-\theta^{*} c=K(\theta^{*})$, so $\max_{C(\tau)\le c}V\ge V(\tau_c)=K_{\theta^{*}}+\theta^{*} c$. Conversely any feasible $\tau$ ($C(\tau)\le c$) has $V(\tau)\le K_{\theta^{*}}+\theta^{*} C(\tau)\le K_{\theta^{*}}+\theta^{*} c$. Hence
\[
  \max_{\tau\in\X:\,C(\tau)\le c}V(\tau)=K_{\theta^{*}}+\theta^{*} c
  \qquad(c\in[C(\hat\tau),\bar c]),
\]
and any maximizer $\tau^\circ$ meets both inequalities with equality: since $\theta^{*}>0$ this forces $C(\tau^\circ)=c$ (the cap binds) and $V(\tau^\circ) -\theta^{*} C(\tau^\circ)=K_{\theta^{*}}$, i.e.\ $\tau^\circ$ is a $\theta^{*}$-optimum, supported on the finite set $\mathcal{S}_{\theta^{*}}$. Let $\mathcal C_2$ be the (finite) set of costs of $\theta^{*}$-optima supported on \emph{at most two} points of $\mathcal{S}_{\theta^{*}}$; by Step~4 its maximum is $\bar c$, attained only by $\tilde\tau$, so every other element is $<\bar c$. Put
\[
  \underline c:=\max\bigl(\{C(\hat\tau)\}\cup(\mathcal C_2\setminus\{\bar c\})\bigr)
  \ \in\ [\,C(\hat\tau),\ \bar c\,).
\]

For $c\in(\underline c,\bar c)$ the maximizer $\tau^\circ$ is a $\theta^{*}$-optimum of cost $c$, and $c\notin\mathcal C_2$ (it exceeds every element of $\mathcal C_2$ other than $\bar c$, and $c<\bar c$), so $\tau^\circ$ cannot be supported on $\le2$ points; it uses at least three posteriors. Finally $\underline c\ge C(\hat\tau)\ge0$; and if $\dirac_{x_0}$ is \emph{not} a $\theta^{*}$-optimum then $C(\hat\tau)>0$ --- the minimal-cost optimum $\hat\tau$ cannot be the (unique) zero-cost experiment $\dirac_{x_0}$ --- so $\underline c>0$.
\end{proof}

We now apply Lemma~\ref{lem:band} to the contracting problem.

\emph{Step 1: locate the band relative to the reference costs.} By decisiveness~(ii), a $\muA$-optimizer has a posterior in each extreme region. The widest pair of contact points therefore supports an extreme $\muA$-optimizer, so $\muA\in\Lambda$ and $\muA\le\theta^{*}$. Lemma~\ref{lem:band} gives $\theta^{*}$-optimizers with different costs. Since $K$ is differentiable at $\muA$, we cannot have $\muA=\theta^{*}$. Thus, $\thL<\muA<\theta^{*}$. Cost monotonicity gives $\CaL\ge\bar c$, and Lemma~\ref{lem:etaIlb} then implies $\eta_I\ge\CaL\ge\bar c$.

\emph{Step 2: investment is strictly preferred.} Fix $\eta\in(\underline c,\bar c)$. Every unconstrained $\muA$-optimizer has cost $\CaL$, which exceeds $\eta$. Hence the deterrence cap excludes every such optimizer. Since the capped maximum is attained, Lemma~\ref{lem:PNI} implies $V_{NI}(\eta)<V_\alpha$.

On the other hand, monotonicity of $V_I$ and $\eta<\bar c\le\eta_I$ give $V_I(\eta)\ge V_I(\eta_I)\ge V_\alpha$, where the last inequality follows from $V_I(\eta_I)\ge V_{NI}(\eta_I)=V_\alpha$. Therefore $V_I(\eta)>V_{NI}(\eta)$ throughout the band.

\emph{Step 3: every optimal low-type experiment requires at least three posteriors.} By silencing~(iii), $\CbL=0$. Lemma~\ref{lem:PI} therefore places the low type in Case~1, solving $\max_{\tau\in\X:\,C(\tau)\le\eta} \{V(\tau)-\thL C(\tau)\}$. For any feasible experiment,
\[
\begin{aligned}
  V(\tau)-\thL C(\tau)
  &=[V(\tau)-\theta^{*}C(\tau)]
    +(\theta^{*}-\thL)C(\tau)\\
  &\le K_{\theta^{*}}
    +(\theta^{*}-\thL)\eta.
\end{aligned}
\]

Lemma~\ref{lem:band} supplies a $\theta^{*}$-optimizer of cost exactly $\eta$, which attains this bound. Since $\theta^{*}>\thL$, every maximizer must have cost $\eta$ and be $\theta^{*}$-optimal. It therefore also attains $\max_{C(\tau)\le\eta}V(\tau)=K_{\theta^{*}}+\theta^{*}\eta$. By Lemma~\ref{lem:band}(3), every such experiment uses at least three posteriors.

Finally, define $\phi_{\mathrm{low}}:=(\beta-\alpha)(\thL-\thH)\underline c$ and $\phi_{\mathrm{high}}:=(\beta-\alpha)(\thL-\thH)\bar c$. Since $\bar c>\underline c\ge0$, these endpoints satisfy $\phi_{\mathrm{high}}>\phi_{\mathrm{low}}\ge0$. The preceding argument applies to every $\phi\in(\phi_{\mathrm{low}},\phi_{\mathrm{high}})$, proving the proposition.
\end{proof}

\begin{remark}
The knife-edge multiplier deserves some discussion. It generalizes the Example's $\theta^{*}=1/\ln\tfrac{1+\sqrt5}{2}$. In general, however, information acquisition need not cease above $\theta^{*}$: a non-extreme optimum may remain informative. This is why $\theta^{*}$ is defined by the disappearance of \emph{extreme} support, rather than by a transition from informative to uninformative experiments.
\end{remark}

\subsection{The \texorpdfstring{$n$}{n}-type framework: screening and ironing}\label{app:ntype-pf}

We first characterize implementable allocations and minimal rents, then solve the screening benchmark $V_g$ using virtual multipliers and ironing.

\begin{lemma}[Screening characterization]\label{lem:ntype-screen}
Fix $g$ of full support.
\begin{enumerate}[label=\textup{(\roman*)}]
  \item Every $(\chi,T)\in\mathcal P_n$ has a monotone allocation $C_1\le C_2\le\dots\le C_n$. \emph{(Forced by \eqref{eq:IC} alone; no regularity.)}
  \item Conversely, every allocation with $C_1\le\dots\le C_n$ is implementable, and among implementing transfers the principal's payoff $\sum_i g_i S_i$ is maximized by the minimal rents
    \[
      U_1=0,\qquad U_i-U_{i-1}=(\theta_{i-1}-\theta_i)\,C_{i-1}\quad(2\le i\le n),
      \quad\text{i.e.}\quad U_i=\sum_{k=1}^{i-1}(\theta_k-\theta_{k+1})\,C_k .
    \]
  \item Consequently,
\[
  V_g=\max_{\substack{\chi_i\in\X\\ C_1\le\cdots\le C_n}}
  \sum_{i=1}^n g_i
  [V(\chi_i)-\lambda_i C(\chi_i)].
\]

If $g$ is regular, the monotonicity constraint can be satisfied without reducing the value of the pointwise relaxation. Specifically, choose a common optimizer for types with the same virtual multiplier and choose
\[
  \chi_i\in\argmax_{\tau\in\X}
  \{V(\tau)-\lambda_i C(\tau)\}
\]
at each distinct multiplier. These choices have $C_1\le\cdots\le C_n$ and, together with the minimal rents in part~(ii), form an optimal screening contract. Moreover, $\lambda_i\ge\theta_i$ and $\lambda_n=\theta_n$.
\end{enumerate}
\end{lemma}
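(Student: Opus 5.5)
Part (i) is the standard revealed-preference argument already used in Lemma~\ref{lem:mono}. For adjacent or arbitrary $i<j$ (so $\theta_i>\theta_j$), I will add the two incentive constraints $T_i-\theta_iC_i\ge T_j-\theta_iC_j$ and $T_j-\theta_jC_j\ge T_i-\theta_jC_i$. Cancelling transfers leaves $(\theta_i-\theta_j)(C_j-C_i)\ge0$, hence $C_i\le C_j$. This uses only \eqref{eq:IC}, as the statement notes.

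For part (ii), I will first show that adjacent downward constraints plus monotonicity imply all constraints, by the usual chain argument. Define the rents by the displayed formula. Then $U_i-U_j$ for $i>j$ telescopes to $\sum_{k=j}^{i-1}(\theta_k-\theta_{k+1})C_k$. Because $C_k$ is nondecreasing, this lies between $(\theta_j-\theta_i)C_j$ and $(\theta_j-\theta_i)C_{i-1}\le(\theta_j-\theta_i)C_i$. These are exactly the two \eqref{eq:IC} inequalities $U_i\ge U_j+(\theta_j-\theta_i)C_j$ and $U_j\ge U_i-(\theta_j-\theta_i)C_i$. \eqref{eq:IR} holds because $U_i\ge U_1=0$.

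Minimality comes next. For any implementing transfers, \eqref{eq:IR} gives $U_1\ge0$, and the downward adjacent \eqref{eq:IC} gives $U_i-U_{i-1}\ge(\theta_{i-1}-\theta_i)C_{i-1}$. Inductively, every feasible rent vector dominates the displayed one componentwise. Since $S_i=W_i-U_i$ with $W_i$ fixed by the allocation and $g_i>0$, the displayed rents maximize $\sum_ig_iS_i$.

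For part (iii), I will substitute the minimal rents and exchange the order of summation: $\sum_i g_iU_i=\sum_{k=1}^{n-1}(\theta_k-\theta_{k+1})C_kG_{k+1}$. This gives $\sum_ig_i[V(\chi_i)-\theta_iC_i]-\sum_k G_{k+1}(\theta_k-\theta_{k+1})C_k=\sum_i g_i[V(\chi_i)-\lambda_iC_i]$, using the definition of $\lambda_i$ and $G_{n+1}=0$. Combined with (i)--(ii), maximizing over $\mathcal P_n$ equals maximizing this virtual surplus over monotone allocations, which is the displayed formula for $V_g$.

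The regular case is the only step needing care. Dropping the monotonicity constraint yields a separable upper bound attained by pointwise maximizers of $V-\lambda_iC$. Under regularity $\lambda_1\ge\dots\ge\lambda_n$, so Lemma~\ref{lem:mono}, applied to distinct multipliers, gives $C_i\le C_j$ whenever $\lambda_i>\lambda_j$. Types with equal multipliers are assigned the same optimizer, so their costs coincide. The pointwise choice is therefore monotone, feasible, and attains the relaxed bound, which makes it optimal together with the minimal rents. The main obstacle is precisely this tie handling: with non-unique optimizers at a shared multiplier, independent selections could break monotonicity, which is why a common optimizer is prescribed.

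Finally, $\lambda_i\ge\theta_i$ holds because $G_{i+1}\ge0$ and $\theta_i>\theta_{i+1}$, and $\lambda_n=\theta_n$ holds by definition.
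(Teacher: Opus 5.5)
Your proposal is correct and follows the paper's proof essentially step for step. It uses the revealed-preference addition of two reporting constraints for (i), the IR and adjacent-IC lower bounds plus a telescoping check of all pairwise constraints for (ii), and the summation interchange plus Lemma~\ref{lem:mono} with common selections at tied multipliers for (iii). The only detail the paper states that you leave implicit is transfer nonnegativity, $T_i=\theta_iC_i+U_i\ge0$, which is immediate from $C_i\ge0$ and $U_i\ge0$.
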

\begin{proof}
\emph{(i)} Adding the reporting constraints between types $i$ and $i+1$ gives
\[
  (\theta_i-\theta_{i+1})(C_{i+1}-C_i)\ge0.
\]

Since $\theta_i>\theta_{i+1}$, we obtain $C_i\le C_{i+1}$.

\emph{(ii)} Fix an allocation with $C_1\le\cdots\le C_n$. Participation and the adjacent reporting constraints imply $U_1\ge0$ and $U_{i+1}\ge U_i+(\theta_i-\theta_{i+1})C_i$.

Hence every implementing rent profile satisfies
\[
  U_i\ge U_i^*
  :=\sum_{k=1}^{i-1}
       (\theta_k-\theta_{k+1})C_k,
  \qquad U_1^*=0.
\]

These lower bounds are jointly attainable. Indeed, for every $i<j$, cost monotonicity gives
\[
  (\theta_i-\theta_j)C_i
  \le U_j^*-U_i^*
  =\sum_{k=i}^{j-1}(\theta_k-\theta_{k+1})C_k
  \le(\theta_i-\theta_j)C_j.
\]

These are the two reporting constraints between types $i$ and $j$. The rents are nonnegative, so $T_i=\theta_iC_i+U_i^*$ also satisfies participation and transfer nonnegativity.

For a fixed allocation, the principal's payoff is $\sum_i g_i[V(\chi_i)-\theta_iC_i-U_i]$. Thus the feasible profile $U^*$, which minimizes every rent, maximizes her payoff.

\emph{(iii)} Substituting the minimal rents and interchanging the order of summation gives
\[
  \sum_i g_iU_i^*
  =\sum_{k=1}^{n-1}
     G_{k+1}(\theta_k-\theta_{k+1})C_k.
\]

By the definition of the virtual multipliers,
\[
  \sum_i g_i[V(\chi_i)-\theta_iC_i-U_i^*]
  =\sum_i g_i[V(\chi_i)-\lambda_iC_i].
\]

Parts~(i)--(ii) therefore give the stated expression for $V_g$.

Under regularity, $\lambda_1\ge\cdots\ge\lambda_n$. Choose a common optimizer for each distinct multiplier. Lemma~\ref{lem:mono} orders costs across strictly different multipliers, while common selections give equal costs at tied multipliers. The resulting allocation satisfies cost monotonicity and attains the pointwise relaxed value. Together with minimal rents, it is an optimal screening contract.

Finally, $\lambda_i\ge\theta_i$ and $\lambda_n=\theta_n$ follow directly from Definition~\ref{as:regular}.
\end{proof}

For a nonregular distribution, the pointwise relaxation need not admit a cost-monotone allocation. We resolve this problem by ironing the virtual multipliers. This construction characterizes the unconstrained screening benchmark $V_g$. The additional investment constraints in $V_I$ and $V_{NI}$ require separate analysis.

\begin{lemma}[Ironing the screening problem.]\label{lem:iron}
Fix $g$ of full support, not assumed regular, and recall from Lemma~\ref{lem:ntype-screen}(iii) the identity $V_g=\max_{C_1\le\dots\le C_n}\sum_{i=1}^n g_i[V(\chi_i)-\lambda_i C(\chi_i)]$, which holds for every $g$ (regularity is used only \emph{after} it, to drop the constraint). Recall that $G_i=\sum_{k\ge i}g_k$.

Define
\[
  q_0=A_0=0,\qquad
  q_j:=\sum_{i=1}^j g_i,\qquad
  A_j:=\sum_{i=1}^j g_i\lambda_i
  \quad(1\le j\le n).
\]

Let $A(q)$ be the piecewise-linear interpolation of the points $(q_j,A_j)$, and let $\overline A$ be its least concave majorant on $[0,1]$. Define the ironed multipliers by
\[
  \overline\lambda_i
  :=\frac{\overline A(q_i)-\overline A(q_{i-1})}{g_i}.
\]

Partition the types at every contact index $j$ satisfying $\overline A(q_j)=A_j$. An ironing block $B=\{i,\ldots,j\}$ lies between consecutive contact indices $i-1$ and $j$. The majorant is affine on $[q_{i-1},q_j]$. Adjacent blocks may have the same slope.

\begin{enumerate}[label=\textup{(\alph*)}]
  \item \emph{(Tail rent identity.)} For every $i$, $\displaystyle\sum_{k=i}^{n} g_k\lambda_k=\theta_i\,G_i$.
  \item \emph{(Block-average lower bound.)} For any consecutive block $B=\{i,\ldots,j\}$, define
\[
  \bar\lambda_B
  :=\frac{\sum_{k=i}^{j}g_k\lambda_k}
           {\sum_{k=i}^{j}g_k}.
\]

If $j<n$, then
\[
  \bar\lambda_B
  =\theta_i+
    \frac{G_{j+1}(\theta_i-\theta_{j+1})}
         {G_i-G_{j+1}}
  >\theta_i\ge\theta_m
  \qquad\text{for every }m\in B.
\]
  \item \emph{(Ironed characterization.)} The ironed multipliers are non-increasing. On each ironing block $B$, the ironed multipliers $\overline\lambda_k$, $k\in B$, equal the block average $\bar\lambda_B$ defined in part~(b).

For each distinct ironed multiplier, select one optimizer of
\[
  \max_{\tau\in\X}
  \{V(\tau)-\overline\lambda_i C(\tau)\}.
\]

Assign this common optimizer to every type with that ironed multiplier. The resulting allocation is constant within each ironing block and has non-decreasing costs. Together with the minimal rents of Lemma~\ref{lem:ntype-screen}(ii), it attains $V_g$.

The most efficient type forms a singleton ironing block and has $\overline\lambda_n=\theta_n$. This does not require its assigned experiment to differ from those assigned to other blocks.

If $g$ is regular, $\overline A=A$ and $\overline\lambda_i=\lambda_i$ for every $i$. The construction then reduces to the coordinated pointwise selection in Lemma~\ref{lem:ntype-screen}(iii).
\end{enumerate}
\end{lemma}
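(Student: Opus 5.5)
Part~(a) is a telescoping computation. By Definition~\ref{as:regular}, $g_k\lambda_k=g_k\theta_k+G_{k+1}(\theta_k-\theta_{k+1})$ for $k<n$, and $g_n\lambda_n=g_n\theta_n$. Write $g_k=G_k-G_{k+1}$ and sum from $k=i$ to $n$. Then
\[
  \sum_{k=i}^n g_k\lambda_k=\sum_{k=i}^{n}\bigl(G_k\theta_k-G_{k+1}\theta_{k+1}\bigr)=G_i\theta_i ,
\]
using $G_{n+1}=0$. For part~(b), apply (a) at $i$ and at $j+1$ and subtract:
\[
  \sum_{k=i}^j g_k\lambda_k=\theta_iG_i-\theta_{j+1}G_{j+1}.
\]
Dividing by $G_i-G_{j+1}=\sum_{k\in B}g_k>0$ and rearranging gives the displayed formula. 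Since $j<n$, full support gives $G_{j+1}>0$, and $\theta_i>\theta_{j+1}$. The fraction is therefore strictly positive, so $\bar\lambda_B>\theta_i\ge\theta_m$ for $m\in B$.

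For part~(c), I follow the standard Myerson ironing argument with the scalar information costs $C_i$ in place of allocations. Monotonicity of $\overline\lambda_i$ holds because these are successive chord slopes of the concave majorant $\overline A$. On a block $B$ where $\overline A$ is affine, the slopes are constant. Since $\overline A$ meets $A$ at the two block endpoints, the common slope equals $(A_j-A_{i-1})/(q_j-q_{i-1})=\bar\lambda_B$. Cost monotonicity of the proposed allocation follows from Lemma~\ref{lem:mono}, which gives strictly ordered costs across distinct multipliers, together with a common selection at tied multipliers. Feasibility for the screening program then follows from Lemma~\ref{lem:ntype-screen}(ii) with minimal rents.

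For optimality, I write the objective using summation by parts in the cost variable. For any monotone allocation $C_1\le\cdots\le C_n$,
\[
  \sum_i g_i\lambda_iC_i=\sum_i g_i\overline\lambda_iC_i+\sum_{i=1}^{n-1}\bigl(\overline A(q_i)-A_i\bigr)(C_{i+1}-C_i),
\]
where I use $A_n=\overline A(q_n)$ and $A_0=\overline A(q_0)=0$. The correction term is nonnegative because $\overline A\ge A$ and costs are monotone. Hence the objective satisfies $\sum g_i[V(\chi_i)-\lambda_iC_i]\le\sum g_i[V(\chi_i)-\overline\lambda_iC_i]$, and the right side is at most the pointwise maximum $\sum g_iK(\overline\lambda_i)$. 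The proposed allocation attains this upper bound. It maximizes each term pointwise, and the correction term vanishes: $\overline A(q_i)=A_i$ at contact indices, and $C_{i+1}=C_i$ inside blocks, where contact may fail. This proves that the construction attains $V_g$.

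The remaining claims are routine. The endpoint $(q_n,A_n)$ is a contact point. For the most efficient type, the last block is a singleton, with $\overline\lambda_n=\theta_n$, provided $q_{n-1}$ is a contact. I expect this to be the main obstacle, because contact at $q_{n-1}$ is not automatic from concavity alone. I would first try to show $\overline A(q_{n-1})=A_{n-1}$ by using (b): any block ending at $n$ and containing some $m<n$ would have average multiplier $\theta_i G_i/G_i=\theta_i$ by (a). The block before it has average strictly above its own first type's $\theta$, which yields the slope inequality needed for the majorant to touch at $q_{n-1}$. If that comparison fails, I would instead prove directly that the final chord slope of the majorant is $\lambda_n$: by (a) and (b), every chord of $A$ ending at $q_n$ from $q_{i-1}$ has slope $\theta_i\ge\theta_n=\lambda_n$. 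Finally, if $g$ is regular, $A$ is already concave, so $\overline A=A$, every index is a contact, and the construction reduces to Lemma~\ref{lem:ntype-screen}(iii).
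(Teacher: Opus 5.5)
Your proofs of (a) and (b), and most of (c), follow the paper's argument almost line for line. That covers the monotonicity and block-average form of $\overline\lambda_i$ and the optimality of the coordinated allocation. The common steps are the telescoping $g_k\lambda_k=\theta_kG_k-\theta_{k+1}G_{k+1}$, the subtraction of tail identities, and the summation by parts. In that summation the weights $\overline A(q_j)-A_j\ge0$ vanish at contact indices and multiply zero cost increments inside blocks. One small slip: Lemma~\ref{lem:mono} gives only \emph{weakly} ordered costs across distinct multipliers, not strictly ordered ones. Weak ordering is all you need.

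Your proof differs from the paper's only on the singleton top block, and there your first idea does not work. Suppose the final block were $\{i,\ldots,n\}$ with $i<n$. Comparing its slope $\theta_i$ with the preceding block's average gives no contradiction: that average exceeds $\theta_{i'}\ge\theta_{i-1}>\theta_i$, which is exactly what concavity of $\overline A$ already requires. (If the final block were $\{1,\ldots,n\}$, there would be no preceding block to compare with at all.)

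Your fallback is correct and takes a genuinely different route. By (a), the chord from $(q_{i-1},A_{i-1})$ to $(q_n,A_n)$ has slope $\theta_i$. The last edge of the least concave majorant of a piecewise-linear function joins the right endpoint to the node minimizing this chord slope. You wrote $\theta_i\ge\theta_n$, but you need the strict inequality $\theta_i>\theta_n$ for $i<n$; it holds, and it makes $q_{n-1}$ the unique minimizer. Hence $n-1$ is a contact index and $\{n\}$ is a block with $\overline\lambda_n=\lambda_n=\theta_n$.

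The paper instead argues locally by contradiction. If the final block were $\{i,\ldots,n\}$ with $i<n$, $\overline A$ would rise from the contact $(q_{i-1},A_{i-1})$ at slope $\theta_i$. Meanwhile $A$ rises at slope $\lambda_i>\theta_i$ on $[q_{i-1},q_i]$, so $\overline A(q_i)<A_i$, contradicting the majorant property. The paper's version uses only $\overline A\ge A$. Yours relies on the hull-edge characterization, but in exchange it identifies the final edge directly rather than ruling out alternatives.
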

\begin{proof}
\emph{(a)} Since $g_k=G_k-G_{k+1}$, the virtual-multiplier formula gives
\[
  g_k\lambda_k
  =\theta_kG_k-\theta_{k+1}G_{k+1}
  \qquad(k<n),
\]
while $g_n\lambda_n=\theta_nG_n$. Summing from $k=i$ to $n$ telescopes to
\[
  \sum_{k=i}^n g_k\lambda_k=\theta_iG_i.
\]

\emph{(b)} For $B=\{i,\ldots,j\}$ with $j<n$, subtracting the tail identities at $i$ and $j+1$ gives
\[
\begin{aligned}
  \bar\lambda_B
  &=\frac{\theta_iG_i-\theta_{j+1}G_{j+1}}
          {G_i-G_{j+1}}\\
  &=\theta_i+
    \frac{G_{j+1}(\theta_i-\theta_{j+1})}
         {G_i-G_{j+1}}
  >\theta_i.
\end{aligned}
\]

The strict inequality follows from full support and strictly ordered types. Since $\theta_i\ge\theta_m$ for every $m\in B$, the claimed bound follows.

\emph{(c)} Concavity of $\overline A$ makes the ironed multipliers non-increasing. On an ironing block, $\overline A$ is affine and agrees with $A$ at both endpoints. Its slope therefore equals the block average $\bar\lambda_B$.

Set $R_j:=\overline A(q_j)-A_j$. Then $R_j\ge0$ and $R_0=R_n=0$. For any allocation with non-decreasing costs, summation by parts gives
\[
\begin{aligned}
  \sum_{i=1}^n g_i
    (\lambda_i-\overline\lambda_i)C_i
  &=-\sum_{i=1}^n(R_i-R_{i-1})C_i\\
  &=\sum_{j=1}^{n-1}R_j(C_{j+1}-C_j)
  \ge0.
\end{aligned}
\]

Consequently,
\[
\begin{aligned}
  \sum_i g_i[V(\chi_i)-\lambda_iC_i]
  &\le
  \sum_i g_i[V(\chi_i)-\overline\lambda_iC_i]\\
  &\le
  \sum_i g_i
  \max_{\tau\in\X}
  \{V(\tau)-\overline\lambda_iC(\tau)\}.
\end{aligned}
\]

To attain this bound, choose a common optimizer for each distinct ironed multiplier. Lemma~\ref{lem:mono} and the coordinated choices give non-decreasing costs, and pointwise optimality makes the second inequality an equality. The first inequality is also an equality: if $R_j>0$, then $j$ lies inside an ironing block, so the construction gives $C_{j+1}=C_j$. Thus every term $R_j(C_{j+1}-C_j)$ vanishes. The resulting allocation, implemented with the minimal rents from Lemma~\ref{lem:ntype-screen}(ii), therefore attains $V_g$.

To establish the top-type claim, suppose the final ironing block were $\{i,\ldots,n\}$ with $i<n$. By part~(a), its average multiplier would be $\theta_i$. Since $i-1$ is a contact index and $\overline A$ is affine on this block,
\[
  \overline A(q_i)
  =A_{i-1}+g_i\theta_i
  <A_{i-1}+g_i\lambda_i
  =A_i,
\]
where $\lambda_i>\theta_i$ follows from full support. This contradicts $\overline A\ge A$. Hence the final block is $\{n\}$ and $\overline\lambda_n=\lambda_n=\theta_n$.

Finally, if $g$ is regular, $A$ is already concave. Thus $\overline A=A$, every index is a contact index, and ironing leaves all multipliers unchanged.
\end{proof}

\subsection{Proof of Proposition~\ref{prop:Vf-ntype} (value monotonicity)}\label{app:gen-vf-pf}

\begin{proof}

\emph{Part 1: convexity.} For each fixed contract, the principal's expected payoff is affine in $g$. Since $\mathcal P_n$ is independent of $g$, $V_g=\sup_{(\chi,T)\in\mathcal P_n} \sum_i g_i[V(\chi_i)-T_i]$ is a pointwise supremum of affine functions, hence convex along every affine path in its domain.
       
\emph{Part 2: monotonicity.} Select the optimal screening contract $(\chi,T)$ constructed by Lemma~\ref{lem:iron}(c), with coordinated experiment choices and minimal rents. Write $C_i=C(\chi_i)$ and $S_i=V(\chi_i)-T_i$.

We first show that $S_1\le\cdots\le S_n$. The construction gives $C_{i-1}\le C_i$ and
\[
  \chi_i\in\argmax_{\tau\in\X}
  \{V(\tau)-\overline\lambda_i C(\tau)\}.
\]

Moreover, $\overline\lambda_i\ge\theta_i$: Lemma~\ref{lem:iron}(b) establishes this for every block not containing $n$, while $\overline\lambda_n=\theta_n$ for the top singleton.

For $i\ge2$, minimal rents satisfy $U_i-U_{i-1} =(\theta_{i-1}-\theta_i)C_{i-1}$. Using $T_i=\theta_iC_i+U_i$ and optimality of $\chi_i$ at $\overline\lambda_i$, we obtain
\[
\begin{aligned}
  S_i-S_{i-1}
  &=V(\chi_i)-V(\chi_{i-1})
    -\theta_i(C_i-C_{i-1})
  \ge(\overline\lambda_i-\theta_i)
       (C_i-C_{i-1})
  \ge0.
\end{aligned}
\]

Since $\mathcal P_n$ does not depend on the type distribution, this contract is also feasible under $g'$. Consequently,
\[
\begin{aligned}
  V_{g'}-V_g
  &\ge\sum_{i=1}^n(g'_i-g_i)S_i
  =\sum_{k=2}^n(G'_k-G_k)(S_k-S_{k-1})
  \ge0,
\end{aligned}
\]
where the equality follows by summation by parts and the final inequality follows from FOSD and the payoff ordering established above.

For $n=2$, setting $\theta_1=\thL$, $\theta_2=\thH$, and $g=(1-f,f)$ recovers the screening value $V_f$ and the conclusions of Lemma~\ref{lem:Vf}.
\end{proof}

\subsection{Proof of Proposition~\ref{prop:cutoff-ntype} (the investment cutoff)}\label{app:gen-cutoff-pf}

\begin{proof}
For a contract $z=(\chi,T)\in\mathcal P_n$, write
\[
  B(z):=\langle g'-g,U\rangle,\qquad
  J_I(z):=\sum_i g'_iS_i,\qquad
  J_{NI}(z):=\sum_i g_iS_i.
\]

Reporting incentive compatibility implies
\begin{equation}
  (\theta_{k-1}-\theta_k)C_{k-1}
  \le U_k-U_{k-1}
  \le(\theta_{k-1}-\theta_k)C_k,
  \qquad k=2,\ldots,n.
  \tag{$\ddagger$}\label{eq:gap-bound}
\end{equation}

\emph{Part 1: the investment region is a closed interval whenever it is nonempty.} The rent-gap bounds imply $U_i\ge U_1\ge0$. Subtracting $U_1$ from every transfer preserves reporting incentives, participation, and the investment wedge, while increasing the principal's payoff by $U_1$. Thus both programs may be restricted to
\[
  \mathcal Z
  :=\{(\chi,T)\in\mathcal P_n:U_1=0\}.
\]

On this domain, we have $0\le U_i\le(\theta_1-\theta_i)\bar C$, and $0\le T_i\le\theta_1\bar C$. Since $\X$ is compact and convex and $V,C$ are continuous and affine under the maintained assumptions, $\mathcal Z$ is compact and convex. The functions $B,J_I,J_{NI}$ are also continuous and affine. Consequently, every nonempty program attains its maximum.

As $\phi$ increases, the inducing feasible set shrinks and the deterring feasible set expands. Hence $V_I$ is non-increasing, $V_{NI}$ is non-decreasing, and $D=V_I-V_{NI}$ is non-increasing. Moreover, \eqref{eq:abel-wedge}, \eqref{eq:gap-bound}, and $|G'_k-G_k|\le1$ give
\[
  |B(z)|
  \le\sum_{k=2}^n(\theta_{k-1}-\theta_k)C_k
  \le(\theta_1-\theta_n)\bar C
  =:\bar\phi.
\]

Let $\phi_{\max}:=\max_{z\in\mathcal Z}B(z)$. The uninformative zero-transfer contract has wedge zero, so $0\le\phi_{\max}\le\bar\phi$. The inducing program is feasible exactly on $[0,\phi_{\max}]$. Set $V_I(\phi)=-\infty$ outside this interval. The same zero-transfer contract is feasible for deterrence at every $\phi\ge0$, so $V_{NI}$ is finite on $[0,\infty)$.

Both value functions are upper semicontinuous. Indeed, along any convergent sequence of feasible thresholds, compactness yields a convergent subsequence of optimizers attaining the limsup of the values. Continuity of the constraints makes the limit contract feasible at the limiting threshold, and continuity of the objective bounds that limsup by the value there. The extension of $V_I$ by $-\infty$ is therefore also upper semicontinuous.

Mixing contracts shows that $V_I$ and $V_{NI}$ are concave on their finite domains. They are consequently continuous on the interiors of those domains. Continuity also holds at each finite endpoint: for an endpoint $a$ and another point $b$ in the domain, concavity gives
\[
  F((1-t)a+tb)\ge(1-t)F(a)+tF(b),
  \qquad 0<t<1,
\]
where $F$ is either value function. Letting $t\downarrow0$ and using upper semicontinuity proves continuity at $a$. If $\phi_{\max}=0$, the finite domain of $V_I$ is a singleton.

Thus $V_{NI}$ is continuous on $[0,\infty)$ and $V_I$ is continuous on $[0,\phi_{\max}]$, with an upper semicontinuous extension beyond that domain. It follows that $D$ is upper semicontinuous. Therefore
\[
  \Phi_D:=\{\phi\ge0:D(\phi)\ge0\}
\]
is closed. Monotonicity implies that $\phi\in\Phi_D$ entails $[0,\phi]\subseteq\Phi_D$. Since $\Phi_D\subseteq[0,\phi_{\max}]$, it is either empty or a closed interval $[0,\phi_I]$, with $\phi_I\le\phi_{\max}\le\bar\phi$.

\emph{Part 2: FOSD makes the investment region nonempty.} By \eqref{eq:gap-bound}, every feasible contract has $U_k-U_{k-1}\ge0$. Under FOSD, \eqref{eq:abel-wedge} therefore gives
\[
  B(z)=\sum_{k=2}^n(G'_k-G_k)(U_k-U_{k-1})\ge0
  \qquad\text{for every }z\in\mathcal P_n.
\]

Thus the inducing constraint at zero excludes no screening contract, so $V_I(0)=V_{g'}$. The deterrence program adds a constraint to the screening benchmark under $g$, giving $V_{NI}(0)\le V_g$. By Proposition~\ref{prop:Vf-ntype}, $D(0)=V_{g'}-V_{NI}(0) \ge V_{g'}-V_g\ge0$. Hence $\Phi_D$ is nonempty.

Now suppose $D(0)>0$. This requires $g'\ne g$: otherwise both programs at zero coincide. It also requires $\bar C>0$: if $\bar C=0$, reporting IC forces common transfers, and both values at zero equal $\max_{\tau\in\X}V(\tau)$. Choose an experiment $\tau$ with $C(\tau)>0$ and assign it to every type at the common transfer $\theta_1C(\tau)$. This contract satisfies IC and IR, and its investment wedge is
\[
  C(\tau)\sum_{k=2}^n
  (G'_k-G_k)(\theta_{k-1}-\theta_k)>0.
\]

Strictness follows because FOSD and $g'\ne g$ imply that at least one tail difference is positive. Thus $\phi_{\max}>0$. Right-continuity of $D$ at zero now gives $D(\phi)>0$ on some $[0,\epsilon)$ with $\epsilon>0$.

\emph{Part 3: the two-type case.} For $n=2$, set $\theta_1=\thL$, $\theta_2=\thH$, $g=(1-\alpha,\alpha)$, and $g'=(1-\beta,\beta)$. Then $B(z)=(\beta-\alpha)(U_{\thH}-U_{\thL})$, while the objectives are $\beta S_{\thH}+(1-\beta)S_{\thL}$ under investment and $\alpha S_{\thH}+(1-\alpha)S_{\thL}$ under deterrence. Thus the two programs coincide with \eqref{eq:PI} and \eqref{eq:PNI}, respectively, recovering Proposition~\ref{prop:cutoff}.
\end{proof}

\subsection{Proof of Proposition~\ref{prop:overinvestment}}
\label{app:overinvestment}
\begin{proof}
Let the state be binary, with prior $1/2$, posterior $x\in[0,1]$, and $c(x)=\ln2-H(x)$, where $H(x)=-x\ln x-(1-x)\ln(1-x)$, with the usual continuous endpoint convention. Set
\[
\begin{aligned}
    v(x)=\max\{0,3x-1.6,1.4-3x\},\qquad
 \theta=(3/2,1,1/2),\\
 g=(1/4,7/10,1/20),\qquad g'=(1/20,1/20,9/10).
\end{aligned}
\]

The efficient-tail probabilities are $(1,3/4,1/20)$ and $(1,19/20,9/10)$, respectively, so the shift satisfies FOSD. Both distributions have full support.

For any $\lambda>0$, the function $v(x)-\lambda c(x)$ is symmetric around $1/2$. Its pointwise maximum can therefore be attained by a Bayes-plausible experiment placing equal probability on a symmetric pair of maximizers, or by the uninformative experiment. Maximizing over the three action branches gives
\[
  K_\lambda
  =\max\left\{
    0,\,
    \lambda\ln\left(1+e^{3/\lambda}\right)
    -1.6-\lambda\ln2
  \right\}.
\]

Consequently, the first-best investment threshold is
\[
  \phi^{FB}_3
  =-\tfrac15 K_{3/2}
   -\tfrac{13}{20}K_1
   +\tfrac{17}{20}K_{1/2}
  \approx0.295.
\]

Under $g$, the screening virtual multipliers are $(3,29/28,1/2)$, in decreasing order. By Lemma~\ref{lem:ntype-screen}, the corresponding pointwise maximizers, together with minimal rents, attain the screening value $V_g =\tfrac14 K_3+\tfrac7{10}K_{29/28} +\tfrac1{20}K_{1/2} \approx0.634$. Every deterrence contract is feasible for this unconstrained screening problem, so $V_{NI}(\phi)\le V_g$.

We now construct an investment-inducing contract. For $i=1,2$, let $x_i\in(1/2,1)$ solve
\[
  \ln2-H(x_1)=0.02,
  \qquad
  \ln2-H(x_2)=0.50,
\]
and assign $\chi_i=\tfrac12\delta_{x_i}+\tfrac12\delta_{1-x_i}$. Assign full information to type~3. These experiments are Bayes plausible and have costs $C=(0.02,0.50,\ln2)$.

Choose rents and transfers
\[
  U=(0,0.02,0.02+\tfrac12\ln2),
  \qquad T_i=\theta_iC_i+U_i.
\]

The adjacent rent gaps satisfy
\[
\begin{aligned}
  0.01&\le U_2-U_1=0.02\le0.25,\\
  0.25&\le U_3-U_2=\tfrac12\ln2
       \le\tfrac12\ln2.
\end{aligned}
\]

Together with monotone costs, these bounds imply all pairwise reporting constraints. Indeed, summing the adjacent bounds gives
\[
  (\theta_i-\theta_j)C_i
  \le U_j-U_i
  \le(\theta_i-\theta_j)C_j
  \qquad(i<j).
\]

All rents and transfers are nonnegative. The binding adjacent reporting constraint is type~2's constraint against reporting type~3.

The investment incentive is
\[
  \langle g'-g,U\rangle
  =\tfrac15(0.02)
   +\tfrac{17}{20}\left(\tfrac12\ln2\right)
  \approx0.299.
\]

The posterior roots satisfy $x_1\approx0.600$ and $x_2\approx0.952$. The assigned experiments yield $V(\chi_i)=3x_i-1.6$ for $i=1,2$, and $V(\chi_3)=1.4$. Hence the principal's payoff under investment is $\sum_i g'_i[V(\chi_i)-T_i] \approx0.663$.

Conservative bounds on these quantities establish the strict comparisons:
\[
  \phi^{FB}_3<0.296<0.297<0.298
  <\langle g'-g,U\rangle,
\]
and
\[
  V_{NI}(0.297)\le V_g<0.635<0.663
  <\sum_i g'_i[V(\chi_i)-T_i]\le V_I(0.297).
\]

To sum up, at $\phi=0.297$, the change in first-best total surplus net of investment cost is $\phi^{FB}_3-0.297<0$. Nevertheless, the constructed contract strictly induces investment and outperforms every deterrence contract.
\end{proof}

\newpage

\bibliographystyle{plainnat}
\bibliography{references}

\end{document}